\newcommand{\Ai}{\mathrm{Ai}}
\newcommand{\Bes}{\mathrm{Bes}}
\newcommand{\CHF}{\mathrm{CHF}}
\newcommand{\HM}{\mathrm{HM}}
\newcommand{\AS}{\mathrm{AS}}
\newtheorem{thm}{Theorem}[section]
\newtheorem{lem}[thm]{Lemma}
\newtheorem{pro}[thm]{Proposition}
\theoremstyle{remark}
\newtheorem{rem}[thm]{Remark}
\numberwithin{equation}{section}
\def\arg {\mathop{\rm arg}\nolimits}
\numberwithin{equation}{section}
\begin{document}

\title{On integrals of the tronqu\'{e}e solutions and the associated Hamiltonians for the Painlev\'{e} II equation}

\author{Dan Dai\footnotemark[1], ~Shuai-Xia Xu\footnotemark[2] ~and Lun Zhang\footnotemark[3]}

\renewcommand{\thefootnote}{\fnsymbol{footnote}}
\footnotetext[1]{Department of Mathematics, City University of Hong Kong, Tat Chee
Avenue, Kowloon, Hong Kong. E-mail: \texttt{dandai@cityu.edu.hk}}
\footnotetext[2]{Institut Franco-Chinois de l'Energie Nucl\'{e}aire, Sun Yat-sen University,
Guangzhou 510275, China. E-mail: \texttt{xushx3@mail.sysu.edu.cn}}
\footnotetext[3] {School of Mathematical Sciences and Shanghai Key Laboratory for Contemporary Applied Mathematics, Fudan University, Shanghai 200433, China. E-mail: \texttt{lunzhang@fudan.edu.cn }}
\date{}
\maketitle

\begin{abstract}
We consider a family of tronqu\'{e}e solutions of the Painelv\'{e} II equation
$$
q''(s)=2q(s)^3+sq(s)-(2\alpha+\frac12), \qquad \alpha > -\frac12,
$$
which is characterized by the Stokes multipliers
$$s_1=-e^{-2\alpha \pi i },\quad
s_2=\omega, \quad s_3=-e^{2 \alpha \pi i}
$$
with $\omega$ being a free parameter. These solutions include the well-known generalized Hastings-McLeod solution as a special case if $\omega=0$. We derive asymptotics of integrals of the tronqu\'{e}e solutions and the associated Hamiltonians over the real axis for $\alpha > -1/2$ and $\omega \geq 0$, with the constant terms evaluated explicitly. Our results agree with those already known in the literature if the parameters $\alpha$ and $\omega$ are chosen to be special values. Some applications of our results in random matrix theory are also discussed.
\end{abstract}

\tableofcontents

\noindent
\section{Introduction and statement of results}
In this paper, we are concerned with the Painlev\'{e} II equation defined by
\begin{equation}\label{eq:inhPII}
q''(s)=2q(s)^3+sq(s)-\nu, \qquad q(s)=q(s;\nu),
\end{equation}
where the parameter $\nu$ is a constant. This equation, together with other five nonlinear ordinary differential equations, was first introduced
by Painlev\'{e} and his colleagues around 1900 during the classification of second-order ordinary differential equations of the form
$$q''(s)=F(s,q,q')$$
with $F$ rational in $q$, $q'$, and analytic in $s$ except for isolated singularities in $\mathbb{C}$, whose solution possess the so-called Painlev\'{e} property, i.e., the only movable singularities of their solutions are poles; cf. \cite{InceBook} and \cite[\S 32.2]{DLMF}. Here `movable' means the locations of the singularities (which in general can be poles, essential singularities or branch points) of the solutions depend on the constants of integration associated with the initial or boundary conditions of the differential equations. For the Painlev\'{e} II equation \eqref{eq:inhPII}, it is known that all of its solutions are meromorphic in $s$ whose poles are simple with
residues $\pm 1$; cf. \cite[Section 2]{GLS}.

Although found more than one century ago, the Painlev\'{e} II equation \eqref{eq:inhPII} remains a research topic of great current interest due to its extensive connections with many branches of mathematics. For instance, just to name a few, similarity reductions of the modified Korteweg-de Vries equation satisfy \eqref{eq:inhPII}; see \cite{FN80}. The cumulative distribution function of the celebrated Tracy-Widom distribution \cite{TW}, which describes the limiting distribution of the largest eigenvalue for certain random matrix ensembles, can be represented as an integral involving the so-called Hastings-McLeod solution \cite{HM} of \eqref{eq:inhPII} with $\nu=0$; see also the survey article \cite{FW15} for the Painlev\'{e} II equation in random matrix theory and related fields. Hence, the solutions of Painlev\'{e} equations, also known as the Painlev\'{e} transcendents, are  widely recognized as the nonlinear counterparts of the classical special functions.

Like classical special functions, the Painlev\'e transcendents have many nice properties. One of the important features is that all of them can be written as a Hamiltonian system. In many cases, rather than the Painlev\'e transcendents, it is their Hamiltonians arise naturally in applications.  For example, it might be simpler to express the formula of the Tracy-Widom distribution function in terms of the Hamiltonian of the Painlev\'e II equation; see formula \eqref{eq:TW-PII-Ham} below. It is also worthwhile to mention that the Hamiltonians are closely related to the isomonodromic tau functions; see a recent paper \cite{Its:Prok} by Its and Prokhorov on the Hamiltonian properties of the tau functions and the references therein. Therefore, the study on properties of the associated Hamiltonians has attracted great attention as well.

\subsection{Smooth solutions of the Painlev\'{e} II equation on the real line and their related integrals}

A striking feature of the Painlev\'{e} II equation is that the total integral of many of its solutions or the related functions can be evaluated explicitly, possibly after a suitable regularization. This feature is not only of theoretical interest but also applicable to many practical problems arising from mathematical physics. To review the relevant results, recall that every Painlev\'{e} transcendent admits a Riemann-Hilbert (RH) characterization through the Stokes multipliers. In the case of the Painlev\'{e} II equation, the Stokes multiplier is a triple of parameters $(s_1,s_2,s_3)\in \mathbb{C}^3$ satisfying
\begin{equation}\label{eq:StokesMult}
s_1-s_2+s_3+s_1s_2s_3=-2 \sin (\nu \pi).
\end{equation}
Moreover, the map
\begin{equation}
\{\textrm{$(s_1,s_2,s_3)\in \mathbb{C}^3$ satisfying \eqref{eq:StokesMult}} \} \longrightarrow \{\textrm{solutions of \eqref{eq:inhPII}}\}
\end{equation}
is a bijection; cf. \cite{FN80,FIKNBook,JTU81}.

For the homogeneous Painlev\'{e} II equation, i.e., $\nu=0$ in \eqref{eq:inhPII}, by assuming that the solutions tend to zero exponentially fast as $s \to +\infty$, it is readily seen that the equation should be approximated by the classical Airy equation $y''(s)=sy(s)$. A result due to Hastings and McLeod \cite{HM} asserts that, for any $k$, there exists a unique solution to the homogeneous Painlev\'{e} II equation which behaves like $k\Ai(s)$ for large positive $s$, where $\Ai$ is the standard Airy function (cf. \cite[Chapter 9]{DLMF}). Depending on different choices of $k$, there are several classes of well-known solutions of the Painelv\'{e} II equation.
\begin{itemize}
  \item If $k=\pm 1$, the solutions, denoted by $q_{\HM}^{\pm}(s;0)$, are known as the Hastings-McLeod solutions \cite{HM}. They are determined by the choice of Stokes multipliers
\begin{equation}
s_1=\mp i, \qquad s_2=0, \qquad s_3=\pm i,
\end{equation}
with the asymptotic behavior
\begin{equation*}
q_{\HM}^{\pm}(s;0)=\left\{
                   \begin{array}{ll}
                     \pm \sqrt{-\frac{s}{2}}+O(|s|^{-5/2}), & \hbox{as $s \to -\infty$,} \\
                     \pm \Ai(s)+O\left(\frac{e^{-(4/3)s^{3/2}}}{s^{1/4}}\right), & \hbox{as $s \to +\infty$;}
                   \end{array}
                 \right.
\end{equation*}
see \cite{DZ95} for the detailed derivation of the above asymptotic formulas. The Hastings-McLeod solutions $q_{\HM}^{\pm}(s;0)$ are real and pole-free on the real axis. By subtracting off the non-integrable part near $s=-\infty$, the following total integrals of $q_{\HM}^{\pm}(s;0)$ are proved in \cite{BRD08,BBDI} via two different methods:
\begin{equation}\label{eq:intHHM}
\int_c^{+\infty}q_{\HM}^{\pm}(y;0)dy+\int_{-\infty}^c\left(q_{\HM}^{\pm}(y;0)\mp\sqrt{\frac{|y|}{2}}\right)dy
=\mp\frac{\sqrt{2}}{3}c|c|^{1/2}\pm\frac{\ln 2}{2}, \qquad c\in\mathbb{R}.
\end{equation}

Note that one of the explicit expressions for the Tracy-Widom distribution arising from the Gaussian Unitary Ensemble (GUE) \cite{TW,TW96} in random matrix theory is given in terms of  $q_{\HM}^{+}$ as follows:
\begin{equation} \label{eq:TW-PII}
  F_{\textrm{TW}}(x) = \exp \left( - \frac{1}{2} \int_x^{+\infty} (y-x) (q_{\HM}^{+}(y;0) )^2dy \right).
\end{equation}
The above formula can be rewritten as
\begin{equation} \label{eq:TW-PII-Ham}
  F_{\textrm{TW}}(x) = \exp \left( - \int_x^{+\infty} H(y) dy \right),
\end{equation}
where
\begin{equation} \label{eq:PIIHM-Hami}
H(x)=-2^{1/3}H_{\textrm{PII}}(-2^{1/3}x)
\end{equation}
with $H_{\textrm{PII}}$ being the Hamiltonian for the Painlev\'e II equation; see Forrester and  Witte \cite{FW01}, Tracy and Widom  \cite{TW} and \eqref{def:Hamiltonian-CPII}, \eqref{def:Hamiltonian-PII} below. These integrals are also used to describe the asymptotics of the distribution of the largest eigenvalue of a Gaussian Orthogonal Ensemble (GOE) and Gaussian Symplectic Ensemble (GSE) in the edge scaling limit; see \cite{TW96}.

As $x\to-\infty$, it is shown by Tracy and Widom in \cite{TW} that
\begin{equation} \label{eq:TW-large gap asy}
  \ln F_{\textrm{TW}}(x) =- \int_x^{+\infty} H(y)dy =-\frac{|x|^3}{12}-\frac{1}{8}\ln|x|+c_0+O(|x|^{-3/2}),
\end{equation}
where the constant term $c_0$ is conjectured to be
\begin{equation}\label{def:c0}
c_0=\frac{1}{24}\ln 2+\zeta'(-1)
\end{equation}
with $\zeta'(z)$ being the derivative of the Riemann zeta function. This conjecture was later proved by Deift, Its and Krasovsky in \cite{DIK2008}. From \eqref{eq:TW-large gap asy}, it is readily seen that the total integral of the Hamiltonian associated with the Hastings-Mcleod solution $q_{\HM}^{+}$, after certain regularization as $s \to-\infty$, can be evaluated explicitly by the Riemann zeta function. It is worthwhile to mention that the evaluation of the constant term in the asymptotics of the integral for the Hamiltonians associated with Painlev\'e transcendents is in general a challenging problem with important applications in mathematical physics. Several works have been contributed to the studies of this aspect \cite{BR18,dkv,E06,E10,T}; see also \cite{ bip,ilt2013,ilp2018,ilt2015,ip2016} for a novel approach developed recently to resolve this issue.

\item If $-1<k<1$, the one-parameter family of solutions, denoted by $q_{\AS}(s;0,k)$ are known as the real Ablowitz-Segur solutions \cite{AS1,AS2}. They are also pole-free on the real line \cite{AS3}, and determined by the Stokes multipliers
\begin{equation}
-1<is_1=k<1,\qquad s_3=\overline{s_1}=-s_1, \qquad s_2=0,
\end{equation}
with the asymptotic behavior
\begin{equation*}
q_{\AS}(s;0,k)= \begin{cases}
\frac{\sqrt{-2\chi}}{(-s)^{1/4}}\cos\left(\frac{2}{3}(-s)^{3/2}+\chi\ln(8(-s)^{3/2})+\phi\right) \\ \hspace{4.5cm} +O\left(\frac{\ln |s|}{|s|^{5/4}}\right), & \textrm{as } s\to -\infty, \\
k\Ai(s)+O\left(\frac{e^{-(4/3)s^{3/2}}}{s^{1/4}}\right), &  \textrm{as } s\to +\infty.
\end{cases}
\end{equation*}
Here,
\begin{equation*}
\chi:=\frac{1}{2\pi}\ln(1-|k|^2), \quad \phi:=-\frac{\pi}{4}-\arg\Gamma(i\chi)-\arg(-ki),
\end{equation*}
where $\Gamma(z)$ is the Gamma function. The following elegant total integral formula for $q_{\AS}$ is established in
\cite[Theorem 2.1]{BBDI}:
\begin{equation}
\int_{-\infty}^{+\infty}q_{\AS}(y;0,k)dy=\frac{1}{2}\ln\left(\frac{1+k}{1-k}\right).
\end{equation}
A total integral formula involving $q_{\AS}^2$ has recently been evaluated by the Barnes' $G$-function in \cite{BR18, bip}, which is related to the deformed Tracy-Widom distribution and the asymptotics of the Hankel determinant and orthogonal polynomials for a Gaussian weight with a discontinuity at the edge in a critical regime \cite{BCI}; see also Section \ref{application} below for more details about  this aspect.

\item There also exist purely imaginary Ablowitz-Segur solutions, which correspond to purely imaginary $k$. For the total integral formulas of these solutions as well as the generic purely imaginary solutions, we refer to \cite{BBDI} for more information.
\end{itemize}

For the inhomogeneous Painlev\'{e} II equation, i.e., $\nu \neq 0$ in \eqref{eq:inhPII}, the asymptotics of a real or purely imaginary Ablowitz-Segur solution is obtained in \cite{Dai:Hu}. Later, the total integral of these solutions is established \cite[Theorem 1.1]{Kokocki18} with an application to singularity formation in the vortex patch dynamics. We also refer to \cite{Miller} for the studies of total integral involving the increasing tritronque\'{e} solutions.

The aim of the present work is devoted  to the studies of integrals of Painlev\'{e} II transcendents and the corresponding Hamiltonians by considering an important one-parameter family of solutions of \eqref{eq:inhPII}, namely, the tronqu\'{e}e solutions, as we will introduce in what follows.

\subsection{Tronqu\'{e}e solutions of the Painlev\'{e} II equation}
By setting
\begin{equation}\label{eq:paranu}
\nu=2\alpha+\frac12
\end{equation}
in \eqref{eq:inhPII}, the special family of Painlev\'{e} II transcendents we are interested in, denoted by $q(s;2\alpha+\frac12,\omega)$, corresponds to the Stokes multipliers
\begin{equation} \label{eq:Stokes-Tron}
s_1=-e^{-2\alpha \pi i }, \qquad s_2=\omega, \qquad s_3=-e^{2 \alpha \pi i},
\end{equation}
where $\omega$ is a free parameter. As shown in \cite{Kapaev} (see also \cite[Chapter 11]{FIKNBook}), these solutions belong to the classical tronqu\'{e}e solutions due to Boutroux \cite{Boutroux}, which means they are pole-free near infinity in one or more sectors of opening angle $2\pi /3$ of the complex plane. Their applications in random matrix theory and asymptotics of orthogonal polynomials can be found in \cite{CKV08,wxz}.

If $\omega=0$, these solutions are exactly the generalized Hastings-McLeod solutions characterized by the following asymptotic behaviors:
\begin{equation} \label{eq:HMqa-asy}
q(s;2\alpha+\frac12,0)=\left\{
                         \begin{array}{ll}
                           \sqrt{-\frac{s}{2}}-\frac{\alpha+\frac14}{s}+O(|s|^{-5/2}), & \hbox{as $s\to -\infty$,} \\
                           \frac{2\alpha+\frac12}{s}+O(s^{-4}), & \hbox{as $s \to +\infty$.}
                         \end{array}
                       \right.
\end{equation}
Like the homogeneous case, the generalized Hastings-McLeod solutions are real and pole-free on the real axis for $\alpha>-1/2$; see \cite{CKV08}. In addition, we have
\begin{equation}\label{eq:qs00}
q(s;0,0)=q_{\HM}^{+}(s;0),
\end{equation}
where $q_{\HM}^{+}$ is one of the Hastings-McLeod solutions to the homogeneous Painlev\'{e} II equation mentioned before.

For general $\omega \neq 0$, $q(s;2\alpha+\frac12,\omega)$ is smooth near $- \infty$ and has the same asymptotic expansion as $q(s;2\alpha+\frac12,0)$. However, $q(s;2\alpha+\frac12,\omega)$ has infinitely many poles on the real axis accumulating at $+\infty$; see formulas \eqref{def:w} and \eqref{eq:asy-w-negative infty-omega} below. The dependence of the parameter $\omega$ can actually be seen from the coefficients of the oscillatory terms in the asymptotics along the rays $\arg s = 2\pi/3$ or $\arg s = 4\pi/3$, which reflects the quasi-linear Stokes phenomenon for the Painlev\'{e} II transcendent; cf. \cite[Remark 11.5]{FIKNBook}. It is also worthwhile to point out that if $\alpha=0$ and $\omega=1$, one has
\begin{equation}\label{q-special solution}
q(s;\frac12,1)=-2^{-1/3}\frac{\Ai'(-2^{-1/3}s)}{\Ai(-2^{-1/3}s)};
\end{equation}
cf. \cite[Equation (11.4.15)]{FIKNBook}.

With motivations arising from random matrix theory (cf. Section \ref{application} below), we will consider the case that the parameter
$\alpha>-1/2$ and the Stokes multiplier $\omega \geq 0$. We next state our results regarding the regularized integrals of $q(s;2\alpha+\frac12,\omega)$ and the associated Hamiltonians.

\subsection{Statement of results}
In order to obtain a convergent integral, similar to \eqref{eq:intHHM}, we subtract the leading term in the asymptotic expansion \eqref{eq:HMqa-asy} and define an integral for $q(s;2\alpha+\frac{1}{2},0)$ as follows:
\begin{align}\label{def:I11-new}
I_1(s;\alpha,0)&:=\int^{c}_{-\infty} \left( q(\tau;2\alpha+\frac{1}{2},0)-\sqrt{-\frac{\tau}{2}}+\frac{\alpha+\frac{1}{4}}{\tau} \right) d\tau
\nonumber \\
 & \quad + \int_{c} ^{s}q(\tau;2\alpha+\frac{1}{2},0)d\tau +\frac {\sqrt{2}}{3} c |c|^{\frac 12}-(\alpha+\frac{1}{4}) \ln |c|, \quad c<0.
\end{align}
If $\omega \neq 0$, $q(s;2\alpha+\frac12,\omega)$ may have real poles in principle. Since all of these poles are simple with residues $\pm 1$, we then make a modification in the second integral of the above formula and define
\begin{align}\label{def:I12-new}
I_1(s;\alpha,\omega)&:=\int^{c}_{-\infty} \left( q(\tau;2\alpha+\frac{1}{2},\omega)-\sqrt{-\frac{\tau}{2}}+\frac{\alpha+\frac{1}{4}}{\tau} \right) d\tau
\nonumber \\
 & \quad + \textrm{P.V.} \int_{c} ^{s}q(\tau;2\alpha+\frac{1}{2},\omega)d\tau +\frac {\sqrt{2}}{3} c |c|^{\frac 12}-(\alpha+\frac{1}{4}) \ln |c|, \quad c<0,
\end{align}
where $c$ is smaller than $s$ and  such that $q(\tau;2\alpha+\frac{1}{2},\omega)$ is pole-free on $(-\infty, c)$, and P.V. stands for the Cauchy principal value. Our first result is the following theorem.
\begin{thm}\label{thm:totalintPII}
For $\alpha>-1/2$ and $\omega \geq 0$, let $q(s;2\alpha+\frac{1}{2},\omega)$ be the tronqu\'{e}e solutions of the Painlev\'{e} II equation \eqref{eq:inhPII} with $\nu=2\alpha+\frac12$ and characterized by the Stokes multipliers given in \eqref{eq:Stokes-Tron}. Then, the regularized integrals $I_1(s;\alpha,\omega)$ defined in \eqref{def:I11-new} and \eqref{def:I12-new} are convergent and independent of the parameter $c$.
Furthermore, we have, as $s \to +\infty$,
\begin{equation}\label{eq:total integral PII1}
I_1(s;\alpha,0)= (2\alpha+\frac{1}{2})\ln s +\frac{1}{2}\ln (2\pi)-\ln \Gamma(1+2\alpha)- (\alpha+\frac{1}{4})\ln 2+O\left(\frac{1}{s^{3/2}}\right),
\end{equation}
and
\begin{align} \label{eq:total integral PII2}
  I_1(s;\alpha,e^{-2\beta \pi i})  & =
 (\frac{\alpha}{2} - \frac{1}{4})\ln s + \ln\left|\cos\left(\frac{\vartheta(s)}{2}+\arg\Gamma(1+\alpha-\beta)-\frac{\pi}{4}\right)\right|+ \frac{\beta}{2} \pi i
 \nonumber \\
 & \quad +( \frac{5}{3} \alpha+1) \ln 2
 +\ln\left(\frac{|\Gamma(1+\alpha-\beta)|}{\Gamma(1+2\alpha)}\right)
 +O\left(\frac{1}{s^{3/2}}\right),
\end{align}
where $\beta i \in \mathbb{R}$ and
\begin{equation} \label{def:theta}
 \vartheta(s)=\vartheta(s;\alpha,\beta) := \frac{2 \sqrt{2}}{3} s^{3/2} - 3 \beta i \ln s - \alpha \pi - 5  \beta i\ln 2 .
\end{equation}

\end{thm}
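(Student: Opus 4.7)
I would first dispatch the two preliminary claims. Convergence of the integrals in \eqref{def:I11-new} and \eqref{def:I12-new} on $(-\infty,c)$ follows from the asymptotic expansion \eqref{eq:HMqa-asy}, which is shared by every $q(\tau;2\alpha+\tfrac12,\omega)$ with $\omega\geq 0$ as $\tau\to-\infty$ (as stressed in the paragraph following \eqref{eq:qs00}): subtracting $\sqrt{-\tau/2}-(\alpha+\tfrac14)/\tau$ leaves a remainder of size $O(|\tau|^{-5/2})$. The principal value integral on $(c,s)$ is finite because every real pole of $q$ is simple with residue $\pm 1$. Independence from $c$ is immediate from the fundamental theorem of calculus: the $c$-derivative of the sum of the two integrals cancels against the $c$-derivatives of $\tfrac{\sqrt 2}{3}c|c|^{1/2}$ and $-(\alpha+\tfrac14)\ln|c|$, by construction.

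The second step is to read off the leading structure of $I_1$ from the large-$s$ behaviour of $q$. For $\omega=0$, the generalized Hastings--McLeod expansion at $+\infty$ in \eqref{eq:HMqa-asy} yields $q=(2\alpha+\tfrac12)/\tau+O(\tau^{-4})$, whose antiderivative produces the $(2\alpha+\tfrac12)\ln s+O(s^{-3})$ shape in \eqref{eq:total integral PII1}. For $\omega=e^{-2\beta\pi i}>0$, the quasi-linear Stokes phenomenon turns on oscillation and causes $q$ to acquire a string of simple real poles accumulating at $+\infty$. In view of the fact (seen already in the explicit Airy formula \eqref{q-special solution} for $\alpha=0$, $\omega=1$) that such tronqu\'ee solutions can be written near $+\infty$ as a logarithmic derivative of an oscillating function, one expects a large-$s$ expansion of the form
\begin{equation*}
q(\tau;2\alpha+\tfrac12,\omega)\sim\frac{d}{d\tau}\Bigl[\bigl(\tfrac{\alpha}{2}-\tfrac14\bigr)\ln\tau+\ln\bigl|\cos\bigl(\vartheta(\tau)/2+\arg\Gamma(1+\alpha-\beta)-\tfrac{\pi}{4}\bigr)\bigr|\Bigr]+\text{(smooth remainder)},
\end{equation*}
in which the pole structure of $q$ is encoded precisely by the zeros of the cosine. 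Integrating term by term reproduces the logarithmic, oscillatory and $O(s^{-3/2})$ parts of \eqref{eq:total integral PII2}. The shape of the phase $\vartheta(s)$ in \eqref{def:theta} is dictated by the Stokes data \eqref{eq:Stokes-Tron} together with the sub-leading $3\beta i\ln s$ correction typical of quasi-linear Stokes asymptotics.

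The heart of the theorem is the explicit evaluation of the $s$-independent constants, namely $\tfrac12\ln(2\pi)-\ln\Gamma(1+2\alpha)-(\alpha+\tfrac14)\ln 2$ in \eqref{eq:total integral PII1} and its oscillatory counterpart in \eqref{eq:total integral PII2}. My plan is to exploit the independence of $c$ and treat $\alpha$ (and, for \eqref{eq:total integral PII2}, $\beta$) as a continuous parameter. A differential identity of the schematic form
\begin{equation*}
\frac{\partial I_1}{\partial\alpha}(s;\alpha,\omega)=\int_{-\infty}^{s}\frac{\partial q}{\partial\alpha}(\tau;2\alpha+\tfrac12,\omega)\,d\tau
\end{equation*}
can be rewritten, via the Jimbo--Miwa Lax pair underlying the Riemann--Hilbert characterization of $q$, as an explicit bilinear expression in the entries of the $\Psi$-function. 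A Deift--Zhou nonlinear steepest descent analysis of that RH problem at $s\to+\infty$ reduces the $\alpha$-derivative of the constant to digamma values, which integrate against $\alpha$ to give $-\ln\Gamma(1+2\alpha)$. The integration constant is then pinned down by matching with the already-known case $\alpha=-\tfrac14$, $\omega=0$ in \eqref{eq:intHHM}, combined with \eqref{eq:qs00}. The $\beta$-dependence of \eqref{eq:total integral PII2} is treated analogously, with the factor $|\Gamma(1+\alpha-\beta)|/\Gamma(1+2\alpha)$ and the phase $\arg\Gamma(1+\alpha-\beta)$ arising from the connection formulas relating the Stokes sectors at $\arg s=0$ and $\arg s=\pi$.

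The main obstacle is carrying out the steepest-descent analysis uniformly in $\alpha$ and $\beta$ and, in particular, tracing the Gamma-function factors cleanly through the local parametrices (which, on the relevant scale, are built out of parabolic cylinder or confluent hypergeometric functions). Correctly identifying the $O(1)$ contribution coming from the matching at $-\infty$ --- the only place where the genuinely global information about the regularization in \eqref{def:I11-new}--\eqref{def:I12-new} enters --- requires careful bookkeeping, but should follow by the same techniques that recover \eqref{eq:total integral PII1}--\eqref{eq:total integral PII2} in the already-known boundary cases $\alpha=-\tfrac14$, $\omega=0$ and $\alpha=0$, $\omega=1$.
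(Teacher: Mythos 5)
Your treatment of convergence and $c$-independence is fine and matches the paper, and you correctly identify the general machinery (RH characterization, Stokes data, Gamma factors from local parametrices). But your plan for the constant term misses the paper's key idea and, as written, has a genuine gap. The paper does not deform in $\alpha$ or $\beta$ to prove this theorem. Instead it establishes, in Lemma~\ref{lem: integrals of w}, the \emph{exact} logarithmic-derivative identity
\begin{equation*}
w(s;2\alpha+\tfrac12,\omega)=\frac{d}{ds}\ln\bigl|(\Phi_0)_{11}(s)\bigr|,
\end{equation*}
where $\Phi_0(s)$ is the constant (in $\zeta$) term of the analytic prefactor of $\Psi(\zeta;s)$ at $\zeta=0$ from \eqref{Psi0-origin-exp}. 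You gesture at the heuristic that $q$ behaves like a logarithmic derivative near $+\infty$ (citing the Airy example \eqref{q-special solution}) but treat this only as an asymptotic ansatz; the point is that it is an \emph{identity} following from the $s$-part of the Lax pair. Once this is known, $\widetilde{I}_1$ is literally a difference of boundary values of $\ln|(\Phi_0)_{11}|$ (formula \eqref{eq:I-1-Psi}), the $+\infty$ limit supplies only the universal $-(2\alpha+\tfrac12)\ln 2$ via \eqref{eq:asy-phi-0}, and the entire constant plus the oscillatory $\ln|\cos(\cdot)|$ term drop out of the $s\to-\infty$ asymptotics of $(\Phi_0)_{11}$ in Propositions~\ref{Pro: asy phi negative infty} and \ref{Pro: asy phi negative infty-omega}. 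No parameter deformation and no separate matching with the $\alpha=-\tfrac14$ case is needed.

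The gap in your plan is at the step ``rewritten via the Jimbo--Miwa Lax pair as an explicit bilinear expression in the entries of the $\Psi$-function''. The standard isomonodromic-deformation identities (Its--Prokhorov, Malgrange form) produce bilinear formulas for the parameter derivatives of the \emph{Hamiltonian integral}, i.e.\ $\ln\tau$, not for $\int q$; the paper does use $\alpha$- and $\beta$-differentiation, but only for the auxiliary integral $I_3$ via Lemma~\ref{lem: total diff}, and those identities express $\partial_\alpha I_3$ in terms of $\widetilde{I}_1$ (see \eqref{eq:I-2-deff}), which must already be known. In other words, the logical dependence runs $\widetilde{I}_1\Rightarrow I_3\Rightarrow I_2$, not the other way; you cannot bootstrap $I_1$ itself from an $\alpha$-derivative without first supplying an independent handle such as Lemma~\ref{lem: integrals of w}. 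Your schematic $\partial_\alpha I_1=\int^s_{-\infty}\partial_\alpha q$ also ignores the $\alpha$-dependent subtraction terms in \eqref{def:I11-new}--\eqref{def:I12-new}, so it is not literally an identity, and the analogous $\beta$-deformation (changing the Stokes multiplier $\omega$) would require an additional monodromy-deformation argument that you do not outline. So while the overall toolkit you invoke is the right one, the proposed route would stall at the bilinear identity, which is precisely what the paper's antiderivative lemma replaces.
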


\begin{rem}
When $\alpha=-1/4$ and $\omega=0$, we have that $q(s;0,0)=q_{\HM}^{+}(s;0)$. From \eqref{def:I11-new}
and \eqref{eq:total integral PII1}, it follows that
\begin{equation*}
\int_{-\infty} ^{c} \left( q_{\HM}^+(\tau;0)-\sqrt{-\frac{\tau}{2}} \right) d\tau + \int_{c} ^{+\infty}q_{\HM}^+(\tau;0)d\tau
= \frac{\ln 2}{2} - \frac {\sqrt{2}}{3} c |c|^{\frac 12}, \qquad c<0,
\end{equation*}
which is indeed  \eqref{eq:intHHM} established in \cite{BRD08,BBDI}. Here, since the term $(\alpha + \frac{1}{4}) /\tau$ in the first integral of \eqref{def:I11-new} vanishes for $\alpha = -1/4$, the constant $c$ in the above formula can actually be taken to be an arbitrary real number.
\end{rem}

We next move to the Hamiltonian associated with the tronqu\'ee solutions. In view of the Hamiltonian representation of the Tracy-Widom distribution shown in \eqref{eq:TW-PII-Ham} and \eqref{eq:PIIHM-Hami}, we consider the following rescaled Hamiltonian
dynamical system:
\begin{equation}\label{eq:Hamiltonian system}
\left\{\begin{array}{l}
         \displaystyle -\frac{dw}{ds}=\frac{\partial H}{\partial u} =w^2-2u-s, \vspace{1.5mm} \\
         \displaystyle \frac{du}{ds}=\frac{\partial H}{\partial w} = 2uw+2\alpha,
       \end{array}
\right.
\end{equation}
where
\begin{equation}\label{def:Hamiltonian-CPII}
H(s; 2\alpha, \omega)=H(u,w,s ; 2\alpha)=-u^2+(w^2-s)u+2\alpha w
\end{equation}
is the Hamiltonian. Here, the functions $u(s)=u(s;2\alpha, \omega)$ and $w(s) = w(s;2\alpha+\frac12,\omega)$ are related to $q(s)=q(s;2\alpha+\frac12,\omega)$ through the formulas
\begin{align}
w(s) &=  -2^{1/3}q(-2^{1/3}s), \label{def:w} \\
u(s) &=  2^{-1/3}U(-2^{1/3}s),\quad U(s)=q'(s)+q(s)^2+\frac{s}{2} \label{def:u}.
\end{align}
Note that $u(s)$ satisfies the Painlev\'{e} XXXIV equation \cite{InceBook}
\begin{equation}\label{eq:P34}
u^{\prime\prime}(s)=4u(s)^2+2su(s)+\frac{(u'(s))^2-\gamma^2}{2u(s)}
\end{equation}
with the parameter $\gamma=2\alpha$; cf. \cite[Equations (5.0.2) and (5.0.55)]{FIKNBook}. The Hamiltonian \eqref{def:Hamiltonian-CPII} has appeared in recent work on GUE with boundary spectrum singularity \cite{wxz} and is also simply related to the standard Hamiltonian for the Painlev\'{e} II equation \eqref{eq:inhPII} (cf. \cite[Equation 32.6.9]{DLMF})
\begin{equation}\label{def:Hamiltonian-PII}
H_{\textrm{PII}}(p,q,s; \nu= 2\alpha+\frac12)=\frac{1}{2}p^2-pq^2-\frac{s}{2}p+2\alpha q
\end{equation}
via the transformation \eqref{eq:PIIHM-Hami}.
The Hamiltonian appearing in \eqref{eq:TW-PII-Ham} corresponds to $\alpha=\omega=0$. In addition, it is the rescaled Hamiltonian $H(s)$ that satisfies the following Jimbo-Miwa-Okamoto $\sigma$-form \cite{JTU81} of the Painlev\'{e} II equation
\begin{equation} \label{PII-sigma-form}
(\sigma''(s))^2+4\sigma'(s)\left((\sigma'(s))^2-s\sigma'(s)+\sigma(s)\right)-\gamma^2=0
\end{equation}
with $\gamma=2\alpha$; cf. \cite[Chapter 8]{Forrester}. In particular, if $\alpha=0$ and $\omega=1$, we obtain from the special solution \eqref{q-special solution} that
\begin{equation}\label{w-special-solution}
w(s;\frac{1}{2},1)=\frac{\Ai'(s)}{\Ai(s)}.
\end{equation}
This, together with the first equation in \eqref{eq:Hamiltonian system}, implies that
\begin{equation}\label{u-special-solution}
u(s;0,1)=0.
\end{equation}  Thus, it follows from  \eqref{def:Hamiltonian-CPII}  that the Hamiltonian  associated with this special solution is also trivial, i.e.,
\begin{equation}\label{H-special-solution}
H(s; 0,1)=0.
\end{equation}

We now define a regularized integral of $H(s;2\alpha,\omega)$ by
\begin{multline}\label{def:I-2}
I_2(s;\alpha,\omega)
:=\int_{c} ^{+\infty} \left( H (\tau;2\alpha,\omega)+2\alpha\sqrt{\tau}
+\frac{\alpha^2}{\tau} \right) d\tau
\\+\int_{s} ^{c}H(\tau;2\alpha,\omega)d\tau
+\frac 43 \alpha c^{\frac 32}+\alpha^2 \ln c, \quad c>0.
\end{multline}
Comparing with \eqref{def:I11-new} and \eqref{def:I12-new}, the above integral is integrated toward $+\infty$ instead of from $-\infty$, due to the minus sign in the change of variable in  \eqref{eq:PIIHM-Hami}. Moreover, unlike the tronqu\'ee solutions, the Hamiltonian $H(s)$ is pole-free on the real line; see \cite[Theorem 1]{wxz}. Therefore, no Cauchy principal value is needed in the definition of $I_2$ for all $\omega \geq 0$.

Our second result is the following theorem.
\begin{thm}\label{thm:Hamil}
For $\alpha>-1/2$ and $\omega \geq 0$, the regularized integral $I_2(s;\alpha,\omega)$ of the Hamiltonian $H(s;2\alpha,\omega)$ given in \eqref{def:I-2} is convergent and independent of $c$. Furthermore, $I_2(s;\alpha,\omega)$ admits the following asymptotics as $s\to -\infty$. For $\omega=0$, we have
\begin{equation}\label{eq:total integral H}
I_2(s;\alpha,0)=-\frac {s^3}{12} -(2\alpha^2-\frac {1}{8}) \ln|s|+  \alpha
-\frac{\ln 2}{24}-\zeta'(-1)
+\ln \left(\frac{G(1+2\alpha)}{(2\pi)^{\alpha}}\right)  +O\left(\frac{1}{|s|^{3/2}}\right),
\end{equation}
where $\zeta'(z)$ is the derivative of the Riemann zeta-function and $G(z)$ is the Barnes' $G$-function. For $\omega=e^{-2\beta \pi i}$ with $\beta i \in \mathbb{R}$, we have
\begin{align}\label{thm: H-integral}
  I_2(s;\alpha,e^{-2\beta \pi i}) & = \frac{4}{3}i\beta |s|^{3/2}+\frac{3\beta^2-\alpha^2}{2} \ln|s|+3(\beta^2-\alpha^2)\ln2-\alpha\beta\pi i
\nonumber \\
& \quad -\ln\left(\frac{G(1+\alpha+\beta)G(1+\alpha-\beta)}{G(1+2\alpha)}\right)+O\left(\frac{\ln |s|}{|s|^{3/2}}\right).
\end{align}
\end{thm}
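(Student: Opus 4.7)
The plan is to combine a Deift-Zhou steepest-descent analysis of the Painlev\'{e} II Riemann-Hilbert (RH) problem with a differential-identity argument in the spirit of \cite{ilt2013,ilp2018,bip,BR18}, so as to reduce the unknown constant term to a value at a reference point in parameter space where it is already known.

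First, I would dispose of convergence and $c$-independence. The integrand $H+2\alpha\sqrt{\tau}+\alpha^2/\tau$ is integrable on $[c,+\infty)$ because the expansion of $H(s;2\alpha,\omega)$ as $s\to+\infty$, obtained from the known expansion of $q$ as $s\to-\infty$ via the rescaling \eqref{def:w}--\eqref{def:u} and the definition \eqref{def:Hamiltonian-CPII}, starts with $-2\alpha\sqrt{s}-\alpha^2/s+O(s^{-5/2})$; pole-freeness of $H$ on $\er$ established in \cite[Theorem 1]{wxz} removes any need for a principal-value prescription. Differentiating the right-hand side of \eqref{def:I-2} in $c$ and invoking the same expansion confirms independence of $c$.

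Next, I would determine the non-constant part of the $s\to-\infty$ asymptotics. A Deift-Zhou steepest-descent analysis of the RH problem characterizing the tronqu\'{e}e solutions with Stokes data \eqref{eq:Stokes-Tron} in the regime $s\to-\infty$ produces full asymptotic expansions of $w(s;2\alpha+\tfrac12,\omega)$ and $u(s;2\alpha,\omega)$. For $\omega=0$ the expansion is a pure descending power series, and substitution into \eqref{def:Hamiltonian-CPII} followed by term-by-term integration yields the $-s^3/12$ and $-(2\alpha^2-\tfrac18)\ln|s|$ contributions. For $\omega=e^{-2\beta\pi i}$ the expansion carries an oscillatory factor with phase $\vartheta(s)$; its contribution to $\int_s^c H\,d\tau$ is controlled via repeated integration by parts against the rapidly rotating phase, producing only the remainder $O(\ln|s|/|s|^{3/2})$, while the slowly-varying envelope supplies the explicit $\tfrac43 i\beta|s|^{3/2}$ and $\tfrac{3\beta^2-\alpha^2}{2}\ln|s|$ terms.

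The main obstacle is identification of the constant term, which the steepest-descent analysis alone cannot pin down. My strategy is to recognize $I_2$, up to its explicit divergent part, as (the logarithm of) the isomonodromic tau function associated to the RH problem, via the sigma-form identification $H=\partial_s\ln\mathcal{T}$ built into \eqref{PII-sigma-form}. Then, following \cite{ilt2013,ilp2018,bip,BR18}, I would derive a second differential identity expressing $\partial_\alpha I_2$ (for fixed $\omega$) as an integral of a local functional of the RH solution $\Psi$, which is amenable to direct asymptotic evaluation. Integrating this identity in $\alpha$ reconstructs the $\alpha$-dependence of the constant up to an $\alpha$-independent piece; for $\omega=0$ this piece is fixed by evaluating at $\alpha=-1/4$, where \eqref{eq:qs00} identifies the solution with the standard Hastings-McLeod one and the corresponding Tracy-Widom constant $\tfrac{\ln 2}{24}+\zeta'(-1)$ was established by Deift-Its-Krasovsky \cite{DIK2008}. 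The $\alpha$-integration converts the $\ln\Gamma(1+2\alpha)$-type contributions arising in the steepest-descent evaluation of $\partial_\alpha I_2$ into $\ln G(1+2\alpha)$ via the Barnes functional equation $G(z+1)=\Gamma(z)G(z)$. Finally, for $\omega=e^{-2\beta\pi i}$, I would set up an analogous $\partial_\beta$-identity and integrate it in $\beta$ from $\beta=0$, using \eqref{eq:total integral H} as the boundary condition and cross-checking at the trivially solvable point $(\alpha,\omega)=(0,1)$ where $H\equiv0$ by \eqref{H-special-solution}, which produces the stated combination $G(1+\alpha+\beta)G(1+\alpha-\beta)/G(1+2\alpha)$.
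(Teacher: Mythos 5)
Your overall blueprint---reduce the constant to a reference parameter value via differential identities in $(\alpha,\omega)$, then assemble the $G$-functions via the Barnes functional equation---is indeed the route the paper follows (they route through an auxiliary integral $I_3$ built from $uw'+H$ rather than differentiating $I_2$ directly, but that is a cosmetic difference; the essential input is the set of identities in Lemma~\ref{lem: total diff}). However, both of your reference points are misplaced, and each error would derail the evaluation of the constant. For $\omega=0$ you propose to anchor at $\alpha=-1/4$ because there $q(\cdot;0,0)=q_{\HM}^{+}$, but this conflates the $q$-parameterization with the $H$-parameterization. The Hamiltonian in the Tracy--Widom identity \eqref{eq:TW-PII-Ham} is $H(s;0,0)$, i.e.\ the $\alpha=0$ member (as the paper states explicitly just after \eqref{PII-sigma-form}); the mismatch is already visible in the $\ln|s|$ coefficient, since at $\alpha=-1/4$ the coefficient $-(2\alpha^2-\tfrac18)$ in \eqref{eq:total integral H} vanishes whereas \eqref{eq:TW-large gap asy} carries $-\tfrac18\ln|x|$. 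The correct anchor is $\alpha=0$, where $I_2(s;0,0)=-\ln F_{\mathrm{TW}}(s)$ and the Deift--Its--Krasovsky constant $c_0$ can be read off directly.

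For $\omega=e^{-2\beta\pi i}$ you propose integrating a $\partial_\beta$-identity ``from $\beta=0$, using \eqref{eq:total integral H} as the boundary condition.'' This is inconsistent: $\beta=0$ corresponds to $\omega=1$, while \eqref{eq:total integral H} is the $\omega=0$ formula, and the two are not joined by any finite path in $\beta$. What you would actually need at $\beta=0$ is $I_2(s;\alpha,1)$ for general $\alpha$, which is not known a priori. The paper circumvents this by anchoring the $\omega>0$ branch at the fully degenerate point $(\alpha,\omega)=(0,1)$, where $H\equiv 0$ by \eqref{H-special-solution} and hence $I_3(s;0,1)=0$; it then integrates in $\beta$ at $\alpha=0$, followed by an $\alpha$-integration, never invoking the $\omega=0$ result at all. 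Your ``cross-check at $(0,1)$'' should be promoted from a sanity check to the actual anchor of the argument, and the appeal to \eqref{eq:total integral H} dropped.
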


\begin{rem}
If $\alpha=0$ and $i\beta\in[0,+\infty)$, our result \eqref{thm: H-integral} reads
\begin{align}\label{I-3-alpha-0-omega}
  I_2(s;0,e^{-2\beta \pi i}) = \frac{4}{3}i\beta |s|^{3/2}+\frac{3\beta^2}{2} \ln|s|+3\beta^2\ln2-\ln G(1+\beta)G(1-\beta)+O\left(\frac{\ln |s|}{|s|^{3/2}}\right),
\end{align}
which agrees with \cite[Equation (1.34)]{bip}.  A detailed discussion of this comparison will be given in Section \ref{application} below.

\end{rem}


\subsection{Applications}\label{application}
In this section, we present two applications of our results to random matrix theory.

\subsubsection*{Largest eigenvalue distribution of a perturbed GUE with a discontinuity at the edge}
Let us consider a perturbed GUE defined by the following joint probability density function of the $n$ eigenvalues $\lambda_1<\lambda_2<\ldots<\lambda_n$:
\begin{equation}\label{def:jpdf}
\frac{1}{Z_{n,\mu,\omega}}\prod_{i=1}^n w(\lambda_i;\mu, \omega) \prod_{1 \leq i < j \leq n}(\lambda_i-\lambda_j)^2,
\end{equation}
where $Z_{n,\mu,\omega}$ is the normalization constant, and
\begin{equation}\label{def:weight}
w(x;\mu, \omega)=e^{-x^2}\left\{
                           \begin{array}{ll}
                             1, &  \hbox{$x \leq \mu$,} \\
                             \omega, & \hbox{$x>\mu$,}
                           \end{array}
                         \right. \qquad \omega \geq 0,
\end{equation}
is the Gaussian weight with a jump singularity at $\mu$. If $\omega=1$, the ensemble \eqref{def:jpdf} reduces to the well-known GUE. For large $n$, the eigenvalues of GUE are distributed over the interval $[-\sqrt{2n},\sqrt{2n}]$ and the local statistics obeys the principle of universality; cf. \cite{metha}. If $\omega  \neq 1$, a natural question is then to establish the local behavior of the eigenvalues around the discontinuous point $\mu$ of the weight function $w$.

Clearly, the location of $\mu$ plays an important role in the relevant studies. By assuming that
\begin{equation}
\mu=\mu_n=\sqrt{2n}+\frac{s}{\sqrt{2}n^{1/6}},
\end{equation}
i.e., the point of discontinuity is close to $\sqrt{2n}$ (which is the soft edge of the unperturbed GUE), it is shown in \cite{BCI} that the limiting distribution of the largest eigenvalue $\lambda_n$ is given by
\begin{align}
F(s)&:=\lim_{n\to\infty}\textrm{Prob}\left(\lambda_n \leq \mu_n \right)
\nonumber \\
& =\exp\left(-\int_s^{+\infty}(\tau-s)(q_{\AS}(\tau;0,\sqrt{1-\omega}))^2d\tau\right)
\nonumber \\
&=\det\left(1-(1-\omega)K_{\Ai}|_{[s,+\infty)}\right),
\end{align}
 where $q_{\AS}(s;0,\sqrt{1-\omega})$, $0<\omega<1$, is the real Ablowitz-Segur solution of the homogeneous Painlev\'{e} II equation corresponding to the Stokes multiplier $s_1=-i\sqrt{1-\omega}$ as introduced at the beginning, and the third equality involving the Fredholm determinant can be found in \cite{TW}. Here, $K_{\Ai}$ is the trace-class integral operator with kernel
\begin{equation}\label{def:Airykernel}
K_{\Ai}(x,y)=\frac{\Ai'(x)\Ai(y)-\Ai(y)\Ai'(x)}{x-y}
\end{equation}
acting on $[s,+\infty)$. The distribution function $F$ has also appeared in the thinned process of GUE \cite{Boh1,Boh2}. Moreover, by investigating the asymptotics of Hankel determinant associated with the discontinuous function \eqref{def:weight}, the following asymptotics of $F(s)$ as $s \to -\infty$ (also known as the large gap expansion) is conjectured in \cite[Conjecture 3]{BCI}:
\begin{equation}\label{eq:conject}
\lim_{s\to-\infty}\left( \ln F(s)+\frac{4}{3}i\beta|s|^{3/2}+\frac{3}{2}\beta^2\ln|s|\right)
=\ln(G(1+\beta)G(1-\beta))-3\beta^2\ln 2,
\end{equation}
for $\omega=e^{-2\beta \pi i}$, $i\beta\in \mathbb{R}$.
This formula was later proved in \cite{BR18,bip}. Our asymptotic formula \eqref{thm: H-integral} actually provides another proof. Indeed, by \cite[Corollary 1]{wxz}, it follows that
\begin{equation}
\ln F(s)=\int_s^{+\infty} \biggl[H(\tau; 0,1 ) - H(\tau; 0, \omega) \biggr]d\tau ,
\end{equation}
where $H$ is the Hamiltonian defined in \eqref{def:Hamiltonian-CPII}. As $H(\tau; 0, 1) \equiv 0$ (see \eqref{H-special-solution}), by setting $\alpha=0$ in \eqref{thm: H-integral}, we recover \eqref{eq:conject}.

\subsubsection*{Phase transition from the soft edge to the hard edge -- a view from large gap asymptotics}

For the second application of our result, we start with a concrete unitary random matrix ensemble
\begin{equation}\label{def:shiftedGUE}
\frac{1}{Z_{n,N}}(\det M)^{2\alpha}\exp(-N \textrm{Tr} V_c(M))dM, \quad \alpha>-\frac{1}{2},
\end{equation}
defined on $n\times n$ positive definite Hermitian matrices $M$, where
\begin{equation}\label{def:weight-Gaussian-shifted and scaled}
V_c(x)=\frac{1}{2c}(x-2)^2,  \qquad x\geqslant0, \qquad c>0.
\end{equation}
 It is well-known that the limiting mean eigenvalue density $\psi_{V_c}$ of $M$ as $n, N\to \infty$ and $n/N\to 1$ is described by the unique minimizer of the energy functional
$$
\iint \log \frac{1}{|x-y|}\psi(x)\psi(y)dxdy+\int V_c(x) \psi(x)dx,
$$
where $\psi$ is taken over all the probability density function supported on $[0,+\infty)$; cf. \cite{D}. Thus, $\psi_{V_c}$ is also known as the equilibrium measure.

It comes out that the above equilibrium problem can be solved explicitly, and as the parameter $c$ varies, one encounters a phase transition from the soft edge to the hard edge. More precisely, if $c \leq 1$, we have the semi-circle law
\begin{equation}
\psi_{V_c}(x)=\frac{1}{2\pi c}\sqrt{4c-(x-2)^2},\quad x\in[2-2\sqrt{c},2+2\sqrt{c}].
\end{equation}
Note that the left ending point of the support $2(1-\sqrt{c})$ is strictly positive for $c<1$, which is the so-called soft edge. As $c \to 1^{-}$ , it approaches the origin, which is the so-called hard edge. If $c>1$, we have
\begin{equation}\label{def:density-MP-law}
\psi_{V_c}(x)=\frac{(x+a)}{2\pi c}\sqrt{\frac{b-x}{x}},\quad x\in[0,b],
\end{equation}
where
$$a=-\frac{4}{3}+\frac{2}{3}\sqrt{1+3c},\qquad b=\frac{4}{3}+\frac{4}{3}\sqrt{1+3c}.$$
In this case, the support of $\psi_{V_c}$ always contains the hard edge and $\psi_{V_c}$ has a square-root singularity at $0$. Similar phase transition occurs if one replaces the potential $V_c$ in \eqref{def:shiftedGUE} with a general real analytic potential $V$ and requires the associated equilibrium measure vanishes like a square root at the origin.

A question now is how to describe this phase transition. One way to tackle this problem is to explore the scaling limit of the correlation kernel characterizing the eigenvalue distribution. It is well-known that the scaling limit of the correlation kernel leads to the Airy kernel \eqref{def:Airykernel} and the Bessel kernel near the soft edge and the hard edge, respectively, which indicates the principle of universality. The question then asks about the transitional case from the Airy kernel to the Bessel kernel. When $n, N\to \infty$ and $n/N \to 1$ in such a way that $n/N-1=O(n^{-2/3})$, it is shown by Claeys and Kuijlaars in \cite{CK} that the scaling limit of the correlation kernel near the origin is described by the Painlev\'{e} XXXIV kernel, which is also closely related to the generalized Hastings-Mcleod solutions of \eqref{eq:inhPII} and the limiting kernels found in \cite{CKV08}.

Our result actually provides a heuristic interpretation of the phase transition from a different perspective, namely, the large gap asymptotics. The idea is based on the fact that the limiting distributions of the (re-scaled) smallest eigenvalue of the ensemble \eqref{def:shiftedGUE} are given by the Fredholm determinants of the limiting kernels for the correlation kernel near the origin. The phase transition should also be observed from the asymptotics of the relevant Fredholm determinants, i.e., the large gap asymptotics. As the parameter varies, we indeed find such kind of transition from the non-trivial constant term therein. More precisely, we start with a fact that the limiting  kernel in \cite{CK} can be viewed as a special case of the general Painlev\'{e} XXXIV kernel $K_{\alpha,\omega}^{P34}(x,y;t)$ with the parameters $\alpha>-\frac 12$, $\omega=0$, where $t$ comes from the limit $c_1n^{2/3}(1-\frac nN ) \to t$ for some constant $c_1$; see \cite{ikj2008,wxz}.  For $\alpha>-\frac 12$, $\omega\geqslant 0$ and  $t\in \mathbb{R}$, let $K^{P34}_{\alpha,\omega}|_{[s,+\infty)}$ be the trace-class operator acting on $L^2(s,+\infty)$ with the Painlev\'{e} XXXIV kernel $K_{\alpha,\omega}^{P34}(x,y;t)$. It is shown by the first two authors of the present work that, as $s\to -\infty$,
 \begin{align}
   \ln \det\biggl( I-K^{P34}_{\alpha,\omega}|_{[s,+\infty)} \biggr)=&-\frac 1{12}|s+t|^3 +\frac 23 \alpha|s|^{\frac 32}-2\alpha|s|^{\frac{1}{2}}t-(\alpha^2+\frac {1}{8}) \ln|s+t| \label{P34: lag gap asy} \nonumber \\
 &+\frac 43 \alpha \, \mathrm{sgn}(t) |t|^{\frac 32}  +\alpha^2\ln|t| \nonumber \\
 &+\int_t^{+\infty}(\tau-t)\left(u(\tau)-\frac {\alpha}{|\tau|^{\frac{1}{2}}}+\frac {\alpha^2}{\tau^2}\right )d\tau +c_0 +o(1),
 \end{align}
 where $u(s)=u(s;2\alpha,\omega)$ is the Painlev\'{e} XXXIV transcendent \eqref{eq:P34} and $c_0$ is the constant given \eqref{def:c0}; see \cite[Theorem 4]{xd}.

As mentioned before, since $K_{\alpha,0}^{P34}(x,y;t)$ describes the transition from the soft edge to the hard edge as $t$ varies from $+\infty$ to $- \infty$, it is expected to observe this interesting transition from the asymptotics \eqref{P34: lag gap asy} as well. Let us focus on the integral in \eqref{P34: lag gap asy}. With the relation $\frac{dH}{ds} = -u(s)$ (see \eqref{eq:H-u} below) and the large $s$ asymptotics of $u(s)$ (see Propositions \ref{Pro: asy large s} and \ref{Pro: asy phi negative infty} below), we obtain from an integration by parts that
\begin{equation}
  \int_t^{+\infty}(\tau-t)\left(u(\tau)-\frac {\alpha}{\tau^{\frac{1}{2}}}+\frac {\alpha^2}{\tau^2}\right )d\tau = - \int_{t} ^{+\infty} \left( H (\tau;2\alpha,\omega)+2\alpha\sqrt{\tau}
+\frac{\alpha^2}{\tau} \right) d\tau
\end{equation}
for $t>0$. It is easily seen from the large positive $s$ asymptotics of $H(\tau;2\alpha,\omega)$ given in \eqref{eq:H-asy} below that, when $t \to +\infty$, the above integral tends to 0. As a consequence, the remaining constant term in \eqref{P34: lag gap asy} is $c_0$, which is exactly the constant term in the asymptotics of Fredholm determinant associated with the Airy kernel; see \cite{DIK2008,TW}. When $t \to -\infty$, after a modification as that in the definition of $I_2$ in \eqref{def:I-2} and using its asymptotics in \eqref{eq:total integral H}, the non-trivial constant term in \eqref{P34: lag gap asy}  becomes
$$
\ln \left(\frac{G(1+2\alpha)}{(2\pi)^{\alpha}}\right).
$$
It comes out that $\ln \left(\frac{G(1+2\alpha)}{(2\pi)^{\alpha}}\right)$ is just the constant term in the asymptotics of a Bessel-kernel determinant; see \cite{dkv,E10,TW94-2}. It would be interesting to give a rigorous proof of the above heuristic observation, which describes a phase transition from the soft edge to the hard edge from the viewpoint of large gap asymptotics. We will leave this problem to a further investigation.

\subsection{Organization of the rest of this paper}

The rest of this paper is devoted to the proofs of our results via the RH approach \cite{D}. In Section \ref{sec:RH problem}, we recall an RH problem $\Psi$ for the Painlev\'e XXXIV equation. Its connections with the Hamiltonian $H$ and the integral of the Painlev\'{e} II transcedent $w$ are established. To relate the integral of the Hamiltonian to $\Psi$, we derive several differential identities for the Hamiltonian with respect to the parameter $\alpha$ and the Stoke's multiplier $\omega$, similar to those obtained in \cite{bip, ilp2018,Its:Prok}. Based on the asymptotic analysis of the RH problem for $\Psi$ carried out in \cite{ikj2009,wxz}, we obtain the asymptotics of $w,H$ and other relevant functions as $s \to \pm \infty$ in Section \ref{sec:asymptotics}. We present the proofs of Theorems \ref{thm:totalintPII} and \ref{thm:Hamil} in Section \ref{Sec:Main-proof}. For the convenience of the reader, we include various local parametrices used in the asymptotic analysis of the RH problem for $\Psi$ in the Appendix.

\section{RH problem for the Painlev\'e XXXIV equation and differential identities}\label{sec:RH problem}
We first recall the  following RH  problem $\Psi(\zeta;s)$ for  the Painlev\'e XXXIV equation; see \cite{fik, ikj2008,ikj2009}.

\subsection*{RH problem for $\Psi$}
\begin{description}
  \item(a)  $\Psi(\zeta)=\Psi(\zeta;s)$ is a $2\times 2$  matrix-valued function, which is analytic for $\zeta$ in
  $\mathbb{C}\backslash \{\cup_{j=1}^4\Sigma_j\cup\{0\}\}$, where
\begin{align*}
\Sigma_1=(0,+\infty), ~~ \Sigma_2=e^{\frac{2 \pi i}{3}}(0,+\infty), ~~\Sigma_3=(-\infty,0),~~\Sigma_4=e^{-\frac{2 \pi i}{3}}(0,+\infty),
\end{align*}
with the orientations as shown in Figure \ref{fig:jumpsPsi}.

\begin{figure}[t]
\begin{center}
   \setlength{\unitlength}{1truemm}
   \begin{picture}(100,70)(-5,2)
       \put(40,40){\line(-2,-3){16}}
       \put(40,40){\line(-2,3){16}}
       \put(40,40){\line(-1,0){30}}
       \put(40,40){\line(1,0){30}}

       \put(30,55){\thicklines\vector(2,-3){1}}
       \put(30,40){\thicklines\vector(1,0){1}}
       \put(50,40){\thicklines\vector(1,0){1}}
       \put(30,25){\thicklines\vector(2,3){1}}

       \put(39,36.3){$0$}
       \put(20,11){$\Sigma_4$}
       \put(20,69){$\Sigma_2$}
       \put(3,40){$\Sigma_3$}

       \put(72,40){$\Sigma_1$}
       \put(25,44){$\Omega_2$}
       \put(25,34){$\Omega_3$}
       \put(55,44){$\Omega_1$}
       \put(55,33){$\Omega_4$}

       \put(40,40){\thicklines\circle*{1}}

   \end{picture}
   \caption{The jump contours $\Sigma_j$ and the regions $\Omega_j$, $j=1,2,3,4$ for the RH problem for $\Psi$.}
   \label{fig:jumpsPsi}
\end{center}
\end{figure}
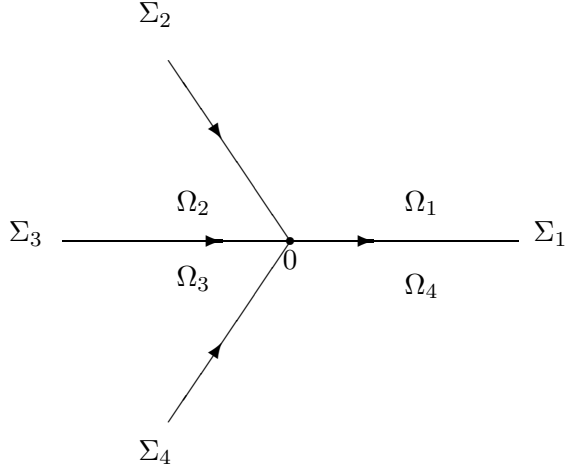

  \item(b)  $\Psi(\zeta)$  satisfies the jump condition
  \begin{gather}\label{eq:Psi-jump}
  \Psi_+ (\zeta)=\Psi_- (\zeta)
  \left\{ \begin{array}{ll}
           \begin{pmatrix}
                                 1 & \omega \\
                                 0 & 1
                                 \end{pmatrix}, &\quad \zeta \in {\Sigma}_1,
\\
           \begin{pmatrix}
                                 1 & 0 \\
                                 e^{2\alpha\pi i} & 1
                                 \end{pmatrix}, &\quad \zeta \in {\Sigma}_2,
\\
           \begin{pmatrix}
                                 0 & 1 \\
                                 -1 & 0
                                 \end{pmatrix},& \quad
                                           \zeta \in {\Sigma}_3,
\\
            \begin{pmatrix}
                                 1 & 0 \\
                                 e^{-2\alpha\pi i} & 1
                                \end{pmatrix},  & \quad \zeta \in \Sigma_4.
          \end{array}
    \right .
  \end{gather}
\item(c)   As $\zeta\rightarrow \infty$, there exists a function $a(s)$ such that
\begin{equation} \label{eq:Psi-infinity}
 \Psi(\zeta;s) = \begin{pmatrix}
                      1 & 0\\
                      ia(s) & 1
                   \end{pmatrix}
       \left (I+\frac {\Psi_1(s)}{\zeta}
    +O\left(  \zeta^{-2}\right) \right)
 \zeta^{-\frac{1}{4}\sigma_3}  \frac{I+i\sigma_1}{\sqrt{2}} e^{-(\frac{2}{3}\zeta^{3/2}+s\zeta^{1/2}) \sigma_3},
\end{equation}
  where the functions $\zeta ^{1/4}, \zeta ^{1/2}, \zeta ^{3/2}$ take the principle branches and
\begin{equation}\label{def:Paulimatrice}
\sigma_1=\begin{pmatrix}
 0 & 1 \\
 1 & 0
 \end{pmatrix}, \qquad \sigma_3=\begin{pmatrix}
 1 & 0 \\
 0 & -1
 \end{pmatrix}
\end{equation}
are the Pauli matrices. Furthermore, we have
$$(\Psi_1)_{12}(s)=-ia(s),$$
where $(M)_{ij}$ stands for the $(i,j)$-th entry of a matrix $M$.
\item(d) As $\zeta\rightarrow 0$, there exist constant matrices $E_j$, $j=1,2,3,4$, such that
\begin{equation}\label{eq:Psi0-1}
 \Psi(\zeta;s)=\Psi_0(\zeta;s)\zeta^{\alpha \sigma_3} E_j,~~\zeta\in \Omega_j, \quad  \mbox{if}~2\alpha\notin   \mathbb{N}_0:=\{0\}\cup \mathbb{N}, \alpha>-\frac {1}{2},
 \end{equation}
 \begin{equation}\label{eq:Psi0-2}
\Psi(\zeta;s)= \Psi_0(\zeta;s)\zeta^{\alpha \sigma_3}\left (I+\frac{\kappa}{2\pi i}\ln \zeta ~\sigma_+\right ) E_j,~~\zeta\in \Omega_j,  \quad \mbox{if}~2\alpha\in   \mathbb{N}_0,
\end{equation}
where  the regions $\Omega_j$, $j=1,2,3,4$ are shown in Figure \ref{fig:jumpsPsi},
\begin{equation}\label{def:kappa}
\kappa=e^{2\pi i\alpha}-\omega, \qquad \sigma_+=
                   \begin{pmatrix}
                     0 & 1
                     \\
                     0 & 0
                   \end{pmatrix},
\end{equation}
and the above multi-valued functions take the principle branches with the cuts along the negative axis.
Here, $\Psi_0(\zeta;s)$ is analytic at $\zeta=0$ and has the following expansion
\begin{equation} \label{Psi0-origin-exp}
  \Psi_0(\zeta;s) = \Phi_0(s) + \Phi_1(s) \zeta + \cdots.
\end{equation}
The piecewise  constant matrices $E_i$ are defined by
\begin{equation} \label{E2-def}
  E_2=\left\{\begin{array}{ll}
              \begin{pmatrix}
        \frac{ e^{2\alpha\pi i}\omega-1}{2ie^{\alpha\pi i}\sin(2\alpha\pi)} & \frac{e^{2\alpha\pi i}-\omega}{2ie^{\alpha\pi i}\sin(2\alpha\pi)} \\
         -e^{\alpha\pi i} & e^{-\alpha\pi i}
       \end{pmatrix}, & \quad 2\alpha\notin   \mathbb{N}_0,\alpha>-\frac {1}{2},\\
               \begin{pmatrix}
                     1 & 0 \\
                     -1 & 1
                  \end{pmatrix}, & \quad  \alpha\in \mathbb{N}_0, \\
               \begin{pmatrix}
               0 & -1 \\
                     1 & 1
                  \end{pmatrix},  & \quad \alpha-\frac {1}{2}\in \mathbb{N}_0,
             \end{array}
\right.
\end{equation}
and the other $E_j$, $j=1,3,4$, are given through the relations
\begin{equation}
  E_1=E_4J_1,\quad E_1=E_2J_2, \quad E_3=E_4J_4,
\end{equation}
where $J_i$ denotes the jump matrix for $\Psi$ on $\Sigma_i$ for $i=1,2,3,4$; see \cite[Proposition 2.2]{ikj2008}.
\end{description}

We have the following lemma concerning the solvability of the RH problem for $\Psi$ and its relation to the Hamiltonian $H$ defined in \eqref{def:Hamiltonian-CPII}.

\begin{lem}\label{lem:solpsi}
For $\alpha>-\frac 12$, $\omega\geq 0$
and $s\in\mathbb{ R}$, there exists
a unique solution to the above RH problem for $\Psi$. The Hamiltonian $H$ defined in \eqref{def:Hamiltonian-CPII} is related to
$\Psi$ through the relation
\begin{equation}\label{eq:H-Psi}
H(s;2\alpha,\omega)=\frac{s^2}{4}+a(s)=\frac{s^2}{4}+i(\Psi_1)_{12}(s),
\end{equation}
where $a(s)$ is given in the asymptotics of $\Psi$ near infinity in \eqref{eq:Psi-infinity}. Furthermore, $H(s;2\alpha,\omega)$ is  real and  pole-free for $s \in \mathbb{R}$.
\end{lem}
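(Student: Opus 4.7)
The plan is to prove the three assertions in sequence, following the isomonodromic/RH framework of \cite{ikj2008,ikj2009,wxz}. For solvability of the RH problem, I would invoke the standard vanishing lemma for Painlev\'{e}-type RH problems. Two inputs make this go through: the hypothesis $\alpha>-1/2$ ensures that the local factor $\zeta^{\alpha\sigma_3}$ (together with the logarithmic correction in \eqref{eq:Psi0-2} in the integer/half-integer case) is square-integrable near the origin, so the singularity at $0$ imposes no obstruction; and the conditions $\alpha\in\mathbb{R}$, $\omega\geq 0$ provide a Schwarz symmetry $\Psi(\zeta;s)=\sigma_3\,\overline{\Psi(\bar\zeta;s)}\,\sigma_3$ (the jumps on $\Sigma_1,\Sigma_3$ are preserved under $\sigma_3$-conjugation, while those on $\Sigma_2,\Sigma_4$ are swapped by complex conjugation since $\overline{e^{\pm 2\alpha\pi i}}=e^{\mp 2\alpha\pi i}$). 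The standard $L^2$-contour deformation then forces any nontrivial solution of the associated homogeneous RH problem to vanish, giving uniqueness and hence existence by the theory of Fredholm singular integral operators. Alternatively, the result can simply be quoted from \cite{ikj2008,wxz}, where this RH problem is analyzed in detail.

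For the identity $H=s^2/4+a$, I would exploit the isomonodromic Lax pair. Because the Stokes multipliers and the local singular factor at $\zeta=0$ are both $s$-independent, the matrix $M(\zeta;s):=\partial_s\Psi(\zeta;s)\,\Psi(\zeta;s)^{-1}$ has no jump on $\bigcup\Sigma_j$ and no singularity at $0$, so it is entire in $\zeta$. Using the identity $\tfrac{I+i\sigma_1}{\sqrt 2}\,\sigma_3\,\tfrac{I-i\sigma_1}{\sqrt 2}=\sigma_2$, the asymptotics \eqref{eq:Psi-infinity} show that $M$ grows at most linearly at infinity; Liouville's theorem therefore identifies $M$ as a polynomial of degree one in $\zeta$ whose coefficients depend on $a(s)$ and on the entries of $\Psi_1(s)$. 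Matching the Laurent expansions of $\partial_s\Psi\cdot\Psi^{-1}$ and of this polynomial up to order $\zeta^{-1}$ then yields a small system that expresses $(\Psi_1)_{11},(\Psi_1)_{12},(\Psi_1)_{21},(\Psi_1)_{22}$ in terms of $w(s)$ and $u(s)$ from \eqref{def:w}--\eqref{def:u} (the identity $(\Psi_1)_{12}=-ia$ is already built in). A direct algebraic reduction, substituting these formulas into \eqref{def:Hamiltonian-CPII} and using the Painlev\'{e} XXXIV equation \eqref{eq:P34}, then collapses $-u^2+(w^2-s)u+2\alpha w$ to $s^2/4+a$, producing \eqref{eq:H-Psi}.

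Finally, for reality and pole-freeness, the Schwarz symmetry established above, combined with uniqueness of the RH solution, forces $a(s)\in\mathbb{R}$ for $s\in\mathbb{R}$, and hence $H(s;2\alpha,\omega)\in\mathbb{R}$. A real pole of $H$ at some $s_0\in\mathbb{R}$ would translate, via the Hamiltonian system \eqref{eq:Hamiltonian system}, into a singularity of $(u,w)$ at $s_0$ and thereby into non-solvability of the RH problem for $\Psi$ at $s_0$, contradicting the first step; this fact is also recorded as \cite[Theorem 1]{wxz}. The main technical obstacle in this plan is the middle step: the Laurent bookkeeping at infinity, with its half-integer powers $\zeta^{1/2}$, its piecewise branch $\zeta^{-\sigma_3/4}$, and its conjugation by $\tfrac{I\pm i\sigma_1}{\sqrt 2}$, must be done with care before the algebraic collapse $-u^2+(w^2-s)u+2\alpha w=s^2/4+a$ becomes apparent.
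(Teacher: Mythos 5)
Your plan is essentially the paper's own: the existence/uniqueness of the RH solution, the relation $H = s^2/4 + a(s)$, and the pole-freeness are all quoted by the paper directly from Wu--Xu--Zhao \cite{wxz}, whereas you sketch how one would re-derive them (vanishing lemma, Lax-pair Liouville argument); the reality of $H$ is then proved in both cases by the Schwarz symmetry $\sigma_3\overline{(\,\cdot\,)(\bar\zeta)}\sigma_3$. One small technical point is worth flagging in your symmetry step. You write the symmetry directly as $\Psi(\zeta;s)=\sigma_3\overline{\Psi(\bar\zeta;s)}\sigma_3$, but the left-multiplicative prefactor in \eqref{eq:Psi-infinity} contains $a(s)$ itself, so $\sigma_3\overline{\Psi(\bar\zeta;s)}\sigma_3$ only a priori satisfies the asymptotic condition with $a$ replaced by $\overline{a}$; invoking uniqueness to conclude $a=\overline a$ then leans on the extra normalization $(\Psi_1)_{12}=-ia(s)$ built into condition (c), which makes the argument slightly circular-looking. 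The paper sidesteps this by first peeling off the prefactor and passing to $\widehat{\Psi}(\zeta;s)=\bigl(\begin{smallmatrix}1&0\\-ia(s)&1\end{smallmatrix}\bigr)\Psi(\zeta;s)$, whose RH problem has $a$-independent data and the clean normalization $\widehat\Psi=(I+O(1/\zeta))\zeta^{-\sigma_3/4}\frac{I+i\sigma_1}{\sqrt 2}e^{-(\cdots)\sigma_3}$ at infinity; the symmetry is then manifest and $a(s)\in\mathbb{R}$ drops out by comparing the subleading term. Your version can be made rigorous with the $(\Psi_1)_{12}=-ia$ constraint, but the normalization is the cleaner route and is what the paper uses.
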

\begin{proof}
The existence of unique solution to the above RH problem is proved in \cite[Proposition 1]{wxz} and the relation \eqref{eq:H-Psi} is given in \cite[Equation (44)]{wxz}, which also implies that $H(s)$ is pole-free for $s\in\mathbb{R}$.

To show $H(s)$ is real for $s\in\mathbb{R}$, we set
\begin{equation} \label{eq:Psi-hat}
\widehat{\Psi}(\zeta;s)= \begin{pmatrix}
                      1 & 0\\
                      -ia(s) & 1
\end{pmatrix}\Psi(\zeta;s).
\end{equation}
Then, $\widehat{\Psi}(\zeta;s)$  is the unique solution to the above RH problem but with the boundary condition at infinity \eqref{eq:Psi-infinity} replaced by
$$\widehat{ \Psi}(\zeta;s) =
       \left (I
    +O\left(1/  \zeta\right) \right) \zeta^{-\frac{1}{4}\sigma_3}  \frac{I+i\sigma_1}{\sqrt{2}} e^{-(\frac{2}{3}\zeta^{3/2}+s\zeta^{1/2}) \sigma_3},\qquad \zeta \to \infty.$$
Since $\sigma_3\overline{\widehat{ \Psi}(\bar{\zeta};s)}\sigma_3$ satisfies the same RH problem as
$\widehat{\Psi}(\zeta;s)$, where $\overline{f(\zeta)}$ denotes the complex conjugation of $f$, it then follows from the unique solvability of the RH problem that
\begin{equation} \label{eq:Psi-symmetry}
\sigma_3\overline{\widehat{ \Psi}(\bar{\zeta};s)}\sigma_3=\widehat{\Psi}(\zeta;s).
\end{equation}
A combination of \eqref{eq:H-Psi}, \eqref{eq:Psi-hat} and  \eqref{eq:Psi-symmetry} yields that $H(s)$ is real for $s\in \mathbb{R}$.

This completes the proof of Lemma \ref{lem:solpsi}.
\end{proof}

Besides the Hamiltonian $H$, one could also relate the rescaled Painlev\'e II transcendent $w(s) = w(s;2\alpha+\frac12,\omega)$ to the RH problem for $\Psi$, as shown in the next lemma.

\begin{lem}\label{lem: integrals of w}
Let $\Phi_0(s)$ be the first term in the expansion of the analytic part of $\Psi(\zeta;s)$ near $\zeta = 0$ (cf. \eqref{Psi0-origin-exp}), then  
\begin{equation} \label{eq:w-anti-derivative}
	 w(s;2\alpha+ \frac{1}{2}, \omega)= \frac{d}{ds}\ln \left|\left(\Phi_0\right)_{11}(s)\right| .
\end{equation}
\end{lem}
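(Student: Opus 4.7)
The main tool is the $s$-deformation matrix $M(\zeta;s):=\partial_s \Psi(\zeta;s)\,\Psi(\zeta;s)^{-1}$. Because the jumps in \eqref{eq:Psi-jump} are $s$-independent and so are the singular factors $\zeta^{\alpha\sigma_3}E_j$ (or their logarithmic variants in \eqref{eq:Psi0-2}) in the local structure of $\Psi$ near $\zeta=0$, the matrix $M$ extends to an entire function of $\zeta\in\mathbb{C}$. The asymptotic \eqref{eq:Psi-infinity} then forces $M$ to be a polynomial of degree one in $\zeta$, and a direct calculation using the identity $(\Psi_1)_{12}=-ia(s)$ yields
\[
M(\zeta;s)=\zeta\begin{pmatrix}0&0\\-i&0\end{pmatrix}+\begin{pmatrix}0&i\\ m(s)&0\end{pmatrix}
\]
for some scalar $m(s)$; in particular the $(1,1)$ and $(2,2)$ entries of $M(0;s)$ vanish, while the $(1,2)$ entry equals $i$.

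Writing $\Psi=\Psi_0\zeta^{\alpha\sigma_3}E_j$ near $\zeta=0$ and differentiating in $s$ gives $\partial_s\Psi_0=M\Psi_0$; substituting $\zeta=0$ and taking the $(1,1)$-entry produces the key identity $\partial_s(\Phi_0)_{11}(s)=i\,(\Phi_0)_{21}(s)$. Transporting the symmetry \eqref{eq:Psi-symmetry} to $\widehat{\Phi}_0:=\bigl(\begin{smallmatrix}1&0\\-ia&1\end{smallmatrix}\bigr)\Phi_0$ yields $\sigma_3\overline{\widehat{\Phi}_0(s)}\sigma_3=\widehat{\Phi}_0(s)$ for $s\in\mathbb{R}$, so that $(\Phi_0)_{11}(s)\in\mathbb{R}$ and $(\Phi_0)_{21}(s)\in i\mathbb{R}$. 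Hence, wherever $(\Phi_0)_{11}(s)\neq 0$, the logarithmic derivative $\tfrac{d}{ds}\ln|(\Phi_0)_{11}(s)|$ coincides with the real number $i(\Phi_0)_{21}(s)/(\Phi_0)_{11}(s)$.

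It remains to identify this ratio with $w(s;2\alpha+\tfrac12,\omega)$. Iterating the vector equation $\partial_s\Phi_0=M(0;s)\Phi_0$ shows that $y:=(\Phi_0)_{11}$ satisfies the scalar ODE $y''(s)=i\,m(s)\,y(s)$. Using the compatibility condition $\partial_s L-\partial_\zeta M=[L,M]$ at the $\zeta^0$ level---where $L:=\partial_\zeta\Psi\cdot\Psi^{-1}$ is meromorphic on $\mathbb{C}$ with a simple pole at the origin of residue $\alpha\Phi_0\sigma_3\Phi_0^{-1}$---and combining with \eqref{eq:H-Psi} and the Hamiltonian identity $H'(s)=-u(s)$, I would derive $im(s)=2u(s)+s$. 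This coincides with the coefficient in the linearisation $y''=(2u+s)y$ of the Riccati equation $w'+w^2=2u+s$ arising from the first equation in \eqref{eq:Hamiltonian system}. Since $e^{\int^s w(\tau)\,d\tau}$ solves the same scalar ODE, an asymptotic matching as $s\to+\infty$ based on the analysis of $\Psi$ in Section \ref{sec:asymptotics} excludes the second linearly independent solution and fixes $(\Phi_0)_{11}(s)$ to be a constant multiple of $e^{\int^s w(\tau)\,d\tau}$, yielding the claim.

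The main obstacle will be the identification $im(s)=2u(s)+s$, which requires propagating the expansion \eqref{eq:Psi-infinity} one order further to control the diagonal entries $(\Psi_1)_{11}-(\Psi_1)_{22}$ via the Lax-pair compatibility. The concluding asymptotic matching must also carefully exclude any admixture of the growing second solution of the scalar ODE, most delicately in the case $\omega>0$ where $(\Phi_0)_{11}$ has real zeros precisely at the poles of the tronqu\'ee solution.
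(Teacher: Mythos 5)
Your proposal follows essentially the same route as the paper: derive the first-order $s$-equation for $\Phi_0(s)$ from the $s$-component of the Lax pair, obtain the second-order scalar ODE $\frac{d^2}{ds^2}(\Phi_0)_{11}=(2u+s)(\Phi_0)_{11}=(w^2+w')(\Phi_0)_{11}$, use the reality symmetry \eqref{eq:Psi-symmetry} to pass to $\ln|(\Phi_0)_{11}|$, and fix the solution of the resulting Riccati-type identity by asymptotic matching at $+\infty$. The one difference is that what you flag as the "main obstacle"—establishing $im(s)=2u+s$ from the RH data via the compatibility of the Lax pair—is bypassed in the paper, which simply cites the known $s$-Lax pair \eqref{lax-pair-Psi-s} from \cite[Lemma 3.2]{ikj2008}, so no re-derivation is needed.
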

\begin{proof}
Recall the following Lax pair associated with the function $\Psi(\zeta;s)$ (see \cite[Lemma 3.2]{ikj2008}):
\begin{align}
  \frac{\partial}{ \partial \zeta} \Psi(\zeta;s) & = \begin{pmatrix}
    \frac{u'}{2\zeta} & i - i\frac{u}{\zeta} \\ -i \zeta - i(u+s) - i \frac{(u')^2 - (2\alpha)^2}{4u\zeta} & -\frac{u'}{2\zeta}
  \end{pmatrix} \Psi(\zeta;s), \\
  \frac{\partial}{ \partial s} \Psi(\zeta;s) & =
  \begin{pmatrix}
    0 & i
    \\
    -i \zeta - 2i(u + \frac{s}{2}) & 0
  \end{pmatrix} \Psi(\zeta;s) , \label{lax-pair-Psi-s}
\end{align}
where $u$ is given in \eqref{def:u}. From \eqref{lax-pair-Psi-s}  and the asymptotic behavior of $\Psi(\zeta;s)$  near the origin given in \eqref{eq:Psi0-1} and \eqref{eq:Psi0-2},
we have
\begin{equation}\label{eq:deff-Psi-0}
\frac{d}{ds}\Phi_0(s)=\begin{pmatrix}
                              0 & i \\
                              -2i(u+\frac{s}{2})& 0
                            \end{pmatrix}\Phi_0(s).
\end{equation}
This, together with \eqref{eq:Hamiltonian system}, implies that
\begin{equation}\label{eq:phi-0}
\frac {d^2}{ds^2}\left(\Phi_0\right)_{11}(s)=2(u+\frac{s}{2})\left(\Phi_0\right)_{11}(s)=(w^2+w')\left(\Phi_0\right)_{11}(s).
\end{equation}
From \eqref{Psi0-origin-exp}, \eqref{eq:Psi-hat} and  \eqref{eq:Psi-symmetry}, we see that  $\left(\Phi_0\right)_{11}(s)$ is real for $s\in \mathbb{R}$.
By setting
$$y(s)=\ln\left|\left(\Phi_0\right)_{11}(s)\right|, $$
it follows from \eqref{eq:phi-0} that
\begin{equation}
(y')^2 + y'' = w^2+w'.
\end{equation}
From the large $s$ asymptotics of $w$ and $\left(\Phi_0\right)_{11}$ derived in Section \ref{sec:asymptotics} below, we conclude from the above
equation that $y'=w$, i.e., the identity \eqref{eq:w-anti-derivative}.

This completes the proof of Lemma \ref{lem: integrals of w}.
\end{proof}


We finally present some useful differential identities related to the Hamiltonian $H$.
\begin{lem}\label{lem: total diff}
Let $H(s)=H(s;2\alpha,\omega)$ be the Hamiltonian  defined in \eqref{def:Hamiltonian-CPII}. Then, we have
\begin{equation}\label{eq:H-u}
H_s=-u,
\end{equation}
\begin{equation}\label{eq:total differential}
(uw+2sH)_{s}=3(uw_{s}+H)+3(H-2\alpha w),
\end{equation}
and
\begin{equation}\label{eq:differential alpha}
(uw_{s}+H)_{\alpha}=(uw_{\alpha})_{s}+2w, \quad (uw_{s}+H)_{\omega}=(uw_{\omega})_{s}
\end{equation}
where the subscripts $s$, $\alpha$  and $\omega$ denote the derivative with respect to the variable $s$, the parameter $\alpha$ and the Stokes data $\omega$, respectively.
\end{lem}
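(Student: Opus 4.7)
The plan is to verify each of the three identities by direct computation, relying only on the Hamilton system \eqref{eq:Hamiltonian system}, the explicit form of $H$ in \eqref{def:Hamiltonian-CPII}, and the chain rule. There is no deep obstacle here; the only subtlety is keeping track of explicit versus implicit dependence of $H$ on $s$, $\alpha$, and $\omega$. To streamline the calculation I will view $H(s;2\alpha,\omega)=H(u(s),w(s),s;\alpha)$ with $u(s)$, $w(s)$ depending on $s,\alpha,\omega$, and record that by \eqref{eq:Hamiltonian system} one has $w_s=2u+s-w^2$ and $u_s=2uw+2\alpha$.

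For \eqref{eq:H-u} I would invoke the classical Hamiltonian identity. Along any solution trajectory,
\[
\frac{dH}{ds}=H_u u_s+H_w w_s+\left.\partial_s H\right|_{\text{expl}}.
\]
Since $u_s=\partial H/\partial w$ and $w_s=-\partial H/\partial u$, the first two terms cancel. The explicit $s$-dependence of \eqref{def:Hamiltonian-CPII} contributes $\left.\partial_s H\right|_{\text{expl}}=-u$, giving $H_s=-u$.

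For \eqref{eq:total differential} I would expand the left-hand side via the product rule and substitute the three first-order relations above:
\[
(uw+2sH)_s=u_s w+u w_s+2H+2sH_s=(2uw+2\alpha)w+uw_s+2H-2su.
\]
After collecting terms one checks that the claimed identity is equivalent to $H=-u^2+(w^2-s)u+2\alpha w$, which is just the definition \eqref{def:Hamiltonian-CPII}, so the identity holds. This step is purely algebraic and takes a few lines.

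For \eqref{eq:differential alpha} the key observation is a cancellation between the implicit parameter derivatives of $H$ and the derivative of $uw_s$. Differentiating $H(u(s;\alpha,\omega),w(s;\alpha,\omega),s;\alpha)$ with respect to $\alpha$ and using Hamilton's equations $H_u=-w_s$, $H_w=u_s$, together with the explicit derivative $\left.\partial_\alpha H\right|_{\text{expl}}=2w$, yields
\[
H_\alpha=-w_s u_\alpha+u_s w_\alpha+2w.
\]
Hence
\[
(uw_s+H)_\alpha=u_\alpha w_s+u(w_\alpha)_s+H_\alpha=u(w_\alpha)_s+u_s w_\alpha+2w=(uw_\alpha)_s+2w,
\]
which is the first equation in \eqref{eq:differential alpha}. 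The same argument with $\alpha$ replaced by $\omega$ gives the second one, the only change being that $H$ has no explicit $\omega$-dependence, so the residual $2w$ is absent. Once again the computation is routine; the main point is that the two cross terms $u_\bullet w_s$ and $-w_s u_\bullet$ cancel, leaving a clean total derivative.
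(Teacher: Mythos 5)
Your proof is correct and follows essentially the same route as the paper's: verify \eqref{eq:H-u} via the Hamiltonian cancellation, expand $(uw+2sH)_s$ with the product rule and the Hamilton equations to reduce \eqref{eq:total differential} to the defining formula for $H$, and for \eqref{eq:differential alpha} exploit the cancellation of the cross terms $u_\alpha w_s$ and $-w_s u_\alpha$ (respectively $u_\omega w_s$ and $-w_s u_\omega$) arising from $H_u=-w_s$, $H_w=u_s$, along with equality of mixed partials to package the remainder as a total $s$-derivative. The only cosmetic difference is that the paper first records the auxiliary identity $wu_s-2uw_s=4H+2su-6\alpha w$ and then combines it with \eqref{eq:H-u}, whereas you directly substitute the Hamilton equations and check equivalence with the definition of $H$; these are the same computation.
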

\begin{proof}
We obtain from the Hamiltonian system \eqref{eq:Hamiltonian system} that
\begin{equation}\label{eq:H-uw'}
H_s=w_s\frac{\partial H}{\partial w}+u_s\frac{\partial H}{\partial u}-u=-u,
\end{equation}
which is \eqref{eq:H-u}. Also note that
\begin{equation}
wu_s-2uw_s=w\frac{\partial H}{\partial w}+2u\frac{\partial H}{\partial u}=2w(w+\alpha)-2u(w^2-2u-s)
=4H+2su-6\alpha w,
\end{equation}
which gives us \eqref{eq:total differential} by combining with \eqref{eq:H-u}.
Similarly, a straightforward calculation yields
\begin{multline}
(uw_{s}+H)_{\alpha}=u_{\alpha}w_{s}+uw_{s\alpha}+w_{\alpha}\frac{\partial H}{\partial w}+u_{\alpha}\frac{\partial H}{\partial u}+2w
\\
=u_{\alpha}w_{s}+uw_{s\alpha}+w_{\alpha}u_s-u_{\alpha}w_s+2w
=(uw_{\alpha})_{s}+2w,
\end{multline}
and
\begin{multline}
(uw_{s}+H)_{\omega}=u_{\omega}w_{s}+uw_{s\omega}+w_{\omega}\frac{\partial H}{\partial w}+u_{\omega}\frac{\partial H}{\partial u}
\\
=u_{\omega}w_{s}+uw_{s\omega}+w_{\omega}u_s-u_{\omega}w_s =(uw_{\omega})_{s}.
\end{multline}

This completes the proof of the Lemma \ref{lem: total diff}.
\end{proof}

As we will see later, the above differential identities play an important role in the proof of our main results. Roughly speaking, these identities allow us to represent the regularized integrals of the Painlev\'e II transcendents and the associated Hamiltonians or their derivatives with respect to various parameters explicitly into some known functions. It is crucial that the asymptotics of these functions can be found, as shown in the next section.


\section{Large $s$ asymptotics of $w$, $H$ and relevant functions}\label{sec:asymptotics}

In this section, we will derive the asymptotics of the rescaled Painlev\'e II transcendent $w(s) = w(s;2\alpha+\frac12,\omega)$,  the associated Hamiltonian $H(s) = H(s; 2\alpha, \omega)$ and other relevant functions as $s \to \pm \infty$. The derivation is based on the asymptotic analysis of the RH problem for $\Psi$ carried out in \cite{ikj2009,wxz}.

\subsection{Asymptotic analysis of the RH problem for $\Psi$  for large positive $s$}\label{sec:RH analysis large postivs s}

The asymptotic analysis of the RH problem for $\Psi$  as $s \to +\infty$ was performed by Its, Kuijlaars and \"Ostensson in \cite{ikj2009} for the special case $\alpha>-\frac 12$ and $\omega=1$, which can be extended to the case $\alpha>-\frac 12$ and arbitrary  $\omega\geq 0$.   For our purpose, we give a sketch of the analysis here.

Following the idea of nonlinear steepest descent analysis of the RH problem \cite{D}, the analysis consists of a series of explicit and invertible transformations:
\begin{equation}
\Psi \to A \to B \to C \to D.
\end{equation}
More precisely, we have
\begin{itemize}
  \item The first transformation $\Psi \to A$ is defined by
\begin{equation}\label{A}
A(\zeta)=s^{\sigma_3/4}\begin{pmatrix}
1 & 0\\
 -ia(s)& 1
  \end{pmatrix} \Psi(s\zeta;s),
\end{equation}
 which is a scaling of the variable.
  \item The second transformation $A \to B$ creates an RH problem for $B$ whose jump contour $\Sigma_B$ consists of four rays starting from $z=-1$ instead of from $z=0$; see \cite[Equation (2.4)]{ikj2009}. In particular, if $\zeta$ is small and belongs to the region $\Omega_2$ (see Figure \ref{fig:jumpsPsi}), the transformation is given by
\begin{equation}\label{def:BtoA}
B(\zeta)=A(\zeta)
J_2=A(\zeta)
\begin{pmatrix}
1 & 0
\\
e^{2\alpha \pi i} & 1
\end{pmatrix}.
\end{equation}
  \item The third transformation $B \to C$ is defined by
\begin{equation}\label{C}
C(\zeta)=
\begin{pmatrix}
1 & 0 \\
-\frac{is^{3/2}}{4} & 1
\end{pmatrix} B(\zeta)e^{s^{3/2}g(\zeta)\sigma_3},
\end{equation}
where
\begin{equation}\label{g-function}
  g(\zeta) = \frac{2}{3} (\zeta+1)^{3/2}, \qquad - \pi < \arg(\zeta+1) < \pi.
\end{equation}
As a consequence, the asymptotic behavior of $B$ at infinity is normalized.
\item We now construct various approximations of $C$ in different regions of the complex plane for large positive $s$. Since the jump matrices of $C$ tend to the identity matrices exponentially fast except the ones on $(-\infty, -1)$ and $(-1,0)$ (on which the jumps are constant matrices) as $s\to +\infty$, it is expected that $C$ is well-approximated by a global parametrix $P^{(\infty)}$ that solves the following RH problem.
\subsubsection*{RH problem for $P^{(\infty)}$}
\begin{description}
  \item(a)  $P^{(\infty)}(\zeta)$ is a $2\times 2$  matrix-valued function, which is analytic for $\zeta$ in
  $\mathbb{C}\backslash (-\infty,0]$.
  \item(b)  $P^{(\infty)}(\zeta)$  satisfies the jump condition
  \begin{gather}
  P_+^{(\infty)}(\zeta)=P_-^{(\infty)}(\zeta)
  \left\{ \begin{array}{ll}
           \begin{pmatrix}
                                 0 & 1 \\
                                 -1 & 0
                                 \end{pmatrix}, & \zeta \in (-\infty,-1),
\\
                                  \begin{pmatrix}
                                 e^{2\alpha \pi i} & 0
                                 \\
                                 0 & e^{-2\alpha \pi i}
                                 \end{pmatrix}, & \zeta \in (-1,0).
          \end{array}
    \right .
  \end{gather}
\item(c)   As $\zeta\rightarrow \infty$, we have
\begin{equation}
 P^{(\infty)}(\zeta) =
       (I+O(1/\zeta))  \zeta^{-\frac{1}{4}\sigma_3}\frac{I+i\sigma_1}{\sqrt{2}}.
\end{equation}
\end{description}
By \cite[Equation (2.16)]{ikj2009}, the solution to the above RH problem is given by
\begin{equation}\label{P-infty}
P^{(\infty)}(\zeta)=\begin{pmatrix}
    1 & 0 \\ 2 \alpha i & 1
  \end{pmatrix} (\zeta + 1)^{-\sigma_3/4} \frac{1}{\sqrt{2}} \begin{pmatrix}
    1 & i \\ i & 1
  \end{pmatrix} d(\zeta)^{\sigma_3},
\end{equation}
where
\begin{equation}\label{def:d}
d(\zeta):=\left( \frac{(\zeta+1)^{1/2} + 1}{(\zeta+1)^{1/2} - 1} \right)^{-\alpha}
\end{equation}
with the branches of the arguments fixed by the inequalities
$$-\pi <\arg(\zeta+1)<\pi, \qquad -\pi<\arg \left( \frac{(\zeta+1)^{1/2} + 1}{(\zeta+1)^{1/2} - 1} \right)<\pi.$$
The global parametrix $P^{(\infty)}$ does not provide a good approximation near $\zeta=-1$ and $\zeta=0$, where special treatments are needed.
Near $\zeta=-1$, $C$ is approximated by the local parametrix
\begin{equation}\label{eq:Airypara}
P^{(-1)}(\zeta)=E(\zeta)\Phi^{(\Ai)}(s(1+\zeta))e^{s^{3/2}g(\zeta)\sigma_3},
\end{equation}
where $E$ is analytic near $\zeta=-1$, $\Phi^{(\Ai)}$ is the well-known Airy parametrix (see Section \ref{sec:Airy} below) and $g$ is defined in \eqref{g-function}. Near $\zeta=0$, we need to construct a local parametrix $P^{(0)}$ solving the following RH problem.
\subsubsection*{RH problem for $P^{(0)}$}
\begin{description}
  \item(a)  $P^{(0)}(\zeta)$ is a $2\times 2$  matrix-valued function, which is analytic for $\zeta$ in
  $U(0,\delta)\backslash (-\delta,\delta)$, for some  $\delta\in(0,1)$, where $U(a,\delta)$ stands for an open disc centered at $a$ with radius $\delta$.
  \item(b)  $P^{(0)}(\zeta)$  satisfies the jump condition
  \begin{gather}
  P_+^{(0)}(\zeta)=P_-^{(0)}(\zeta)
  \left\{ \begin{array}{ll}
           \begin{pmatrix}
                                 1& \omega e^{-2s^{3/2}g(\zeta)} \\
                                 0 & 1
                                 \end{pmatrix}, & \quad \zeta \in (0,\delta),
\\
                                  \begin{pmatrix}
                                 e^{2\alpha \pi i} & e^{-2s^{3/2}g(\zeta)}
                                 \\
                                 0 & e^{-2\alpha \pi i}
                                 \end{pmatrix}, & \quad  \zeta \in (-\delta,0).
          \end{array}
    \right .
  \end{gather}
\item(c)   As $s\rightarrow +\infty$, we have the matching condition
\begin{equation}
 P^{(0)}(\zeta) =
       (I+O(e^{-\frac{2}{3}s^{3/2}})) P^{(\infty)}(\zeta) ,
\end{equation}
for $\zeta\in \partial U(0,\delta)$, where $P^{(\infty)}$ is the global parametrix given in \eqref{P-infty}.
\end{description}

The solution to the above RH problem is explicitly given by
\begin{equation}\label{P-0}
P^{(0)}(\zeta)=  \begin{pmatrix}
    1 & 0 \\ 2 \alpha i & 1
  \end{pmatrix} (\zeta+1)^{-\sigma_3/4} \frac{1}{\sqrt{2}} \begin{pmatrix}
    1 & i \\ i & 1
  \end{pmatrix}
\begin{pmatrix}
    1 & r(\zeta)\\
    0&1
  \end{pmatrix}  d(\zeta)^{ \sigma_3},
\end{equation}
where $d(\zeta)$ is defined in \eqref{def:d} and $r(\zeta)$ is expressed in terms of the Cauchy integrals
\begin{equation}\label{def:r}
r(\zeta):=\frac{1}{2\pi i}\left(\int_{-1}^0\frac{e^{-2s^{3/2}g(\tau)}|d(\tau)|^2}{\tau-\zeta} d\tau+\omega \int_0^{\infty}\frac{e^{-2s^{3/2}g(\tau)}|d(\tau)|^2}{\tau-\zeta} d\tau\right).
\end{equation}
For later use, we need to know the asymptotic behavior of $r(\zeta)$ near the origin. By using \eqref{def:d} and \eqref{def:r}, we have, as $\zeta \to 0$ and $0<\arg \zeta<\pi$,
\begin{equation}\label{eq:r-0-case 1}
r(\zeta)=\left\{
           \begin{array}{ll}
             r_1(\zeta)+
                \frac{1-\omega e^{-2\alpha\pi i}}{2i\sin(2\alpha\pi)}e^{-2s^{3/2}g(\zeta)}d^2(\zeta), & \quad \hbox{$\alpha>-1/2$ and $2\alpha \notin \mathbb{N}_0$,}
\vspace{1.5mm}\\
             r_2(\zeta)+\frac{  e^{2\alpha\pi i}-\omega}{2\pi i} e^{-2s^{3/2}g(\zeta)}d^2(\zeta)\ln \zeta, & \quad \hbox{$2\alpha\in \mathbb{N}_0$,}
           \end{array}
         \right.
\end{equation}
where $r_1(\zeta)$ and $r_2(\zeta)$ are two analytic functions near the origin.
%
%

\item In the final transformation $C \to D$, we compare the global and local parametrices by considering
\begin{equation} \label{D}
 D(\zeta) = \left\{
              \begin{array}{ll}
                C(\zeta)\left(P^{(\infty)}(\zeta)\right)^{-1}, & \quad \hbox{$\zeta \in  \mathbb{C}\setminus (\overline{U(0,\delta)}\cup \overline{U(-1,\delta)})$,} \\
                C(\zeta)\left(P^{(0)}(\zeta)\right)^{-1}, & \quad \hbox{$\zeta \in U(0,\delta)$,} \\
                C(\zeta)\left(P^{(-1)}(\zeta)\right)^{-1}, & \quad \hbox{$\zeta \in U(-1,\delta)$.}
              \end{array}
            \right.
\end{equation}
Then, $D$ also satisfies an RH problem which is analytic in $\mathbb{C}\setminus \Sigma_D$, where $$\Sigma_D:=\partial U(-1,\delta)\cup \partial U(0,\delta) \cup (-1+\delta,-\delta)\cup (\delta,+\infty) \cup e^{\frac{2\pi i}{3}}(-1+\delta,+\infty)
\cup e^{-\frac{2\pi i}{3}}(-1+\delta,+\infty).$$
Since $D_+(\zeta)=D_-(\zeta)J_D(\zeta)$ for $\zeta \in \Sigma_D$ with
$$J_D=I+O(s^{-3/2}),\qquad s\to +\infty,$$ uniformly on $\Sigma_D$, and $$D(\zeta)=I+O(\zeta^{-1}),\qquad  \zeta \to \infty,$$
it follows from a standard argument as in \cite{D} that
\begin{equation}\label{D-est}
  D(\zeta) = I+O(s^{-3/2}), \qquad s\to +\infty,
\end{equation}
uniformly for $\zeta \in \mathbb{C} \setminus \Sigma_D$; see \cite[Equation (2.28)]{ikj2009}.
\end{itemize}

We are now ready to derive the asymptotics of $w,H$ and relevant functions for large positive $s$, which will be the topics of the next section.

 \subsection{Asymptotics of $w,H$ and relevant functions for large positive $s$}

\begin{pro}\label{Pro: asy large s}
For $\alpha>-\frac 12$ and  $\omega \geq 0$, 
we have the following asymptotics
as $s\to +\infty$:
\begin{align}
u(s;2\alpha,\omega)& =\frac{\alpha}{\sqrt{s}} \left(1-\frac{\alpha}{s^{3/2}}+O(s^{-3}) \right) \nonumber \\
& \quad +(e^{2\pi i\alpha } - \omega)\frac{\Gamma(2\alpha+1)}{2^{2+6\alpha} \pi}s^{-(3\alpha+ \frac{1}{2})}e^{- \frac{4}{3} s^{3/2}}(1+O(s^{-1/4})), \label{eq:u-asy} \\
w(s;2\alpha+\frac {1}{2},\omega) &=-\sqrt{s}-\frac{\alpha+\frac{1}{4}}{s}+O(s^{-5/2}), \label{eq:w-asy} \\
H(s;2\alpha,\omega) &=-2\alpha\sqrt{s}-\frac{\alpha^2}{s}+O(s^{-5/2}), \label{eq:H-asy} \\
\ln\left|\left(\Phi_0\right)_{11}(s)\right|& = -\frac{2}{3}s^{3/2}-(\alpha+\frac{1}{4})\ln s
-(2 \alpha + \frac{1}{2})\ln 2+O(s^{-3/2}). \label{eq:asy-phi-0}
\end{align}
\end{pro}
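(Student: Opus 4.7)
The plan is to derive all four asymptotics from the small-norm conclusion $D(\zeta)=I+O(s^{-3/2})$ of the steepest-descent analysis recalled in Section~3.1, combined with the differential and algebraic identities of Section~2. Unwinding the chain of transformations $\Psi\to A\to B\to C\to D$, I obtain a representation of the form
\begin{equation*}
\Psi(s\zeta;s)=\begin{pmatrix}1 & 0\\ ia(s) & 1\end{pmatrix}s^{-\sigma_3/4}\begin{pmatrix}1 & 0\\ is^{3/2}/4 & 1\end{pmatrix}D(\zeta)P(\zeta)e^{-s^{3/2}g(\zeta)\sigma_3},
\end{equation*}
valid with $P=P^{(\infty)}$ outside the disks and $P=P^{(0)}$ in a neighborhood of $\zeta=0$. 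Since the scalar factors and both parametrices are known explicitly, every asymptotic coefficient reduces to reading off an entry, with $O(s^{-3/2})$ errors inherited from $D$.

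For the Hamiltonian $H$, I let $\zeta\to\infty$ in this representation, expand $P^{(\infty)}(\zeta)$ in powers of $1/\zeta$ using \eqref{P-infty}, and match the $\zeta^{-1}$ coefficient against the prescribed expansion of $\Psi$ at infinity in \eqref{eq:Psi-infinity}. This determines $(\Psi_1)_{12}=-ia(s)$ up to the stated error, and the identity \eqref{eq:H-Psi} then yields the asymptotics of $H$. The algebraic terms in $u$ then follow from $u=-H_s$ via \eqref{eq:H-u} together with the standard justification that an asymptotic expansion depending holomorphically on $s$ may be differentiated.

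The key new feature of the formula for $u$ is the exponentially small, $\omega$-dependent correction, which is invisible at infinity and must come from the local parametrix at $\zeta=0$. I expand $r(\zeta)$ of \eqref{def:r} near $\zeta=0$: using $g(\tau)=\tfrac{2}{3}+\tau+O(\tau^2)$ and $d(\tau)^2\sim(\tau/4)^{2\alpha}$, Laplace's method applied to the $\omega$-weighted integral $\omega\int_0^\infty e^{-2s^{3/2}g(\tau)}|d(\tau)|^2\,d\tau/(\tau-\zeta)$ at $\tau=0$ produces a factor $\Gamma(2\alpha+1)/(2s^{3/2})^{2\alpha+1}$ times $e^{-4s^{3/2}/3}$; combined with the sister integral over $(-1,0)$ and the split \eqref{eq:r-0-case 1}, which automatically supplies the prefactor $e^{2\alpha\pi i}-\omega$, the first Lax equation of Section~2 (in which $u$ appears as the coefficient of $1/\zeta$ in $\partial_\zeta\Psi\cdot\Psi^{-1}$) then assembles the exponentially small term.

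Finally, for $w$ I substitute the ansatz $w=-\sqrt{s}-(\alpha+1/4)/s+O(s^{-5/2})$ into $2u=w^2-s+w_s$ from \eqref{eq:Hamiltonian system} and match coefficients against the asymptotics of $u$. Then Lemma~\ref{lem: integrals of w} yields $\ln|(\Phi_0)_{11}|=\int w\,ds+\text{const}$, and integrating gives $-\tfrac{2}{3}s^{3/2}-(\alpha+\tfrac14)\ln s+c$; the constant $c=-(2\alpha+\tfrac12)\ln 2$ is pinned down by computing $\Phi_0(s)$ directly from the explicit $P^{(0)}$ in \eqref{P-0} as $\zeta\to 0$, where the factor $(\zeta/4)^\alpha$ in $d(\zeta)$ combined with the prefactors from the transformations $\Psi\to A\to B\to C$ and the analytic factor $\zeta^{\alpha\sigma_3}$ from \eqref{eq:Psi0-1} produces the required constant. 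The main obstacle is the delicate coalescence analysis required to extract the exponentially small term in $u$: the integration point $\zeta$ and the integration endpoint both approach $0$, and the resonant case $2\alpha\in\mathbb{N}_0$, where \eqref{eq:r-0-case 1} features a $\ln\zeta$ singularity, must be handled separately.
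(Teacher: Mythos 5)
Your overall strategy---extract everything from the chain $\Psi\to A\to B\to C\to D$ and the small-norm conclusion $D=I+O(s^{-3/2})$---is in the same spirit as the paper, and your treatment of the constant in \eqref{eq:asy-phi-0} is close to what the paper actually does, including your (correct) observation that the resonant case $2\alpha\in\mathbb{N}_0$ needs special handling. But there are three genuine issues.

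First, the paper does not re-derive \eqref{eq:u-asy}, \eqref{eq:w-asy}, \eqref{eq:H-asy}: it simply cites \cite[Theorem 1.2]{ikj2009} and \cite[Theorems 1, 2]{wxz}. Your proposed derivation of \eqref{eq:H-asy} from the $\zeta\to\infty$ matching is not supported by the estimate $D(\zeta)=I+O(s^{-3/2})$ alone. When you pass from the $\zeta$ variable back to $\zeta'=s\zeta$, the $1/\zeta'$ coefficient $\Psi_1$ picks up a contribution of order $s\cdot D_1$ from the error term, i.e. $O(s^{-1/2})$; that floods the $-\alpha^2/s$ term you need. To reach $O(s^{-5/2})$ accuracy in $H$ you must compute $D_1$ explicitly (via the Cauchy operator acting on the boundary of $U(-1,\delta)$), not just bound it. Similarly, the exponentially small, $\omega$-dependent correction to $u$ in \eqref{eq:u-asy} is a genuine computation carried out in \cite{ikj2009}, not something that falls out of a one-line Laplace estimate on $r(\zeta)$ followed by ``the Lax pair assembles it''; this is the technical heart of an entire section of that reference.

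Second, your route to \eqref{eq:asy-phi-0} through Lemma~\ref{lem: integrals of w} is circular. The proof of Lemma~\ref{lem: integrals of w} turns the Riccati relation $(y')^2+y''=w^2+w'$ into $y'=w$ precisely by invoking ``the large $s$ asymptotics of $w$ and $(\Phi_0)_{11}$ derived in Section~3'' --- that is, by using the asymptotics you are trying to prove here. You cannot feed Proposition~\ref{Pro: asy large s} into Lemma~\ref{lem: integrals of w} and then feed the lemma back into the proposition. (It is also redundant: once you carry out the direct $\zeta\to 0$ computation of $\Phi_0$, you get the full expansion including the $-\tfrac{2}{3}s^{3/2}$ and the $\ln s$ term, so nothing is gained by first integrating $w$.)

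Third, your sketch of the direct computation of $\Phi_0(s)$ is missing the essential cancellation mechanism. The limit in \eqref{Phi0-limit-relation} involves $\zeta^{-\alpha\sigma_3}$, and in the region $\Omega_2$ the local parametrix $P^{(0)}$ contains the singular Cauchy-integral function $r(\zeta)$, whose behavior near $0$ (the second line of \eqref{eq:r-0-case 1}) is not analytic. It is only after multiplying by $E_2^{-1}$ (through \eqref{E2-J2-1} or \eqref{E2-J2-2}), using the jump matrix $J_2$, and using the explicit split $r=r_1+(\text{singular part})$, that the singular pieces cancel against $\zeta^{-\alpha\sigma_3}$ and a finite limit $M_1(s)=2^{-2\alpha\sigma_3}e^{-\alpha\pi i\sigma_3}e^{-\frac{2}{3}s^{3/2}\sigma_3}$ emerges. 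Your proposal mentions the factor $(\zeta/4)^{\alpha}$ from $d(\zeta)$ but omits the $E_j$ connection matrices entirely; without them the limit does not exist.
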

\begin{proof}
The asymptotics for the Painlev\'e XXXIV transcendent $u(s;2\alpha,\omega)$ and the Hamiltonian $H(s;2\alpha,\omega)$ in \eqref{eq:u-asy} and \eqref{eq:H-asy} are proved in \cite[Theorem 1.2]{ikj2009}, \cite[Theorem 2]{wxz} and \cite[Equation (17)]{wxz}, respectively.
The asymptotics of the tronqu\'ee  Painlev\'e II transcendent $w(s;2\alpha,\omega)$ in \eqref{eq:w-asy} is known in \cite[Equation (11.5.56)]{FIKNBook}. The expansion in \eqref{eq:w-asy} except the error term can also be obtained by substituting  \eqref{eq:u-asy} into the second equation in \eqref{eq:Hamiltonian system}.

To show the asymptotic behavior of $\ln|\left(\Phi_0\right)_{11}(s)|$, we first observe from \eqref{eq:Psi0-1}--\eqref{Psi0-origin-exp} that
\begin{align} \label{Phi0-limit-relation}
  \Phi_0(s) = \begin{cases}
    \displaystyle \lim_{\zeta \to 0 , \atop \zeta \in \Omega_j} \Psi(\zeta; s) E_j^{-1} \zeta^{-\alpha \sigma_3}, & \quad 2\alpha\notin   \mathbb{N}_0, \alpha>-\frac {1}{2}, \vspace{1.5mm} \\
    \displaystyle \lim_{\zeta \to 0, \atop \zeta \in \Omega_j} \Psi(\zeta; s) E_j^{-1} \left (I-\frac{\kappa}{2\pi i}\ln \zeta~ \sigma_+\right ) \zeta^{-\alpha \sigma_3}, & \quad 2\alpha\in   \mathbb{N}_0.
  \end{cases}
\end{align}
The above formulas also hold if $\zeta$ is replaced by $s\zeta$ or $-s\zeta$ on the right hand side. Moreover, if $\zeta \in \Omega_2$ and small,
it follows from the transformations \eqref{A}--\eqref{C} and \eqref{D} that
\begin{align}
  \Psi(s\zeta; s) & = \begin{pmatrix}
    1 & 0 \\ ia(s) & 1
  \end{pmatrix}
s^{-\sigma_3/4} \begin{pmatrix}
    1 & 0 \\ \frac{is^{3/2}}{4} & 1
  \end{pmatrix} D (\zeta) \begin{pmatrix}
    1 & 0 \\ 2 \alpha i & 1
  \end{pmatrix} (\zeta+1)^{-\sigma_3/4}  \nonumber \\
  & \quad \times  \frac{1}{\sqrt{2}} \begin{pmatrix}
    1 & i \\ i & 1
  \end{pmatrix} \begin{pmatrix}
    1 &r(\zeta)\\
    0&1
  \end{pmatrix} d(\zeta)^{ \sigma_3} e^{-s^{3/2} g(\zeta) \sigma_3} J_2^{-1}, \label{Psi-sz-in-II}
\end{align}
where $g$ is defined in \eqref{g-function} and $D(z)$ satisfies the estimate \eqref{D-est}.

We next split our discussions into two cases, based on different values of $\alpha$. If $2\alpha\notin   \mathbb{N}_0$ and $\alpha>-\frac {1}{2}$, we obtain from \eqref{Phi0-limit-relation} and \eqref{Psi-sz-in-II} that
\begin{align}
  \Phi_0(s)  =  \begin{pmatrix}
    1 & 0 \\ ia(s)+\frac{i s^2}{4}& 1
  \end{pmatrix}
s^{-\sigma_3/4}  D(0)
  \begin{pmatrix}
    1 & 0 \\ 2 \alpha i & 1
  \end{pmatrix} \frac{1}{\sqrt{2}} \begin{pmatrix}
    1 & i \\ i & 1
  \end{pmatrix}
  \begin{pmatrix}
    1 & r_1(0) \\ 0 & 1
  \end{pmatrix}M_1(s)s^{-\alpha\sigma_3}, \label{Phi0-expression}
\end{align}
where
\begin{equation}\label{def:K}
M_1(s):=\lim_{\zeta \to 0 , \atop \zeta \in \Omega_2}
\begin{pmatrix}
    1 & r(\zeta)-r_1(\zeta)\\
    0&1
  \end{pmatrix}d(\zeta)^{ \sigma_3} e^{-s^{3/2} g(\zeta) \sigma_3} (E_2J_2)^{-1}\zeta^{-\alpha\sigma_3},
\end{equation}
with $r(\zeta)$ and $r_1(\zeta)$ given in \eqref{eq:r-0-case 1}. To this end, we note from \eqref{eq:r-0-case 1} that
\begin{equation} \label{eq:r-d}
d(\zeta)^{ -\sigma_3} e^{s^{3/2} g(\zeta) \sigma_3}\begin{pmatrix}
    1 & r(\zeta)-r_1(\zeta)\\
    0&1
  \end{pmatrix}d(\zeta)^{ \sigma_3} e^{-s^{3/2} g(\zeta) \sigma_3}
  =\begin{pmatrix}
    1 & \frac{1-\omega e^{-2\alpha\pi i}}{2i\sin(2\alpha\pi)} \\
    0 &1  \end{pmatrix},\quad \zeta\in \Omega_2,
\end{equation}
and by \eqref{eq:Psi-jump} and \eqref{E2-def},
\begin{equation} \label{E2-J2-1}
E_2 J_2= e^{\alpha \pi i\sigma_3}\begin{pmatrix}
   1 & \frac{1-\omega e^{-2\alpha\pi i}}{2i\sin(2\alpha\pi)} \\
    0 & 1
  \end{pmatrix}.
  \end{equation}
 Substituting \eqref{eq:r-d} and \eqref{E2-J2-1} in \eqref{def:K} yields
\begin{equation}\label{eq:M-1-exp}
 M_1(s)= \lim_{\zeta \to 0 , \atop \zeta \in \Omega_2} d(\zeta)^{ \sigma_3} e^{-s^{3/2} g(\zeta) \sigma_3} e^{-\alpha \pi i\sigma_3}\zeta^{-\alpha\sigma_3}= 2^{-2\alpha \sigma_3}
    e^{-\alpha \pi i\sigma_3}  e^{-\frac{2}{3}s^{3/2} \sigma_3},
\end{equation}
since $d(\zeta)\sim (\zeta/4)^{\alpha}$ as $\zeta \to 0$. This, together with \eqref{Phi0-expression} and \eqref{D-est}, implies that, as $s\to +\infty$,
\begin{multline}
  \Phi_0(s)  =  \begin{pmatrix}
    1 & 0 \\ ia(s)+\frac{i s^2}{4}& 1
  \end{pmatrix}
s^{-\sigma_3/4}  (I+O(s^{-3/2}))
  \begin{pmatrix}
    1 & 0 \\ 2 \alpha i & 1
  \end{pmatrix}
\\
\times \frac{1}{\sqrt{2}} \begin{pmatrix}
    1 & i \\ i & 1
  \end{pmatrix}
  \begin{pmatrix}
    2^{-2\alpha}e^{-\pi i \alpha} e^{-\frac23 s^{2/3}} & \ast \\ 0 & \ast
  \end{pmatrix}s^{-\alpha\sigma_3},
\end{multline}
where $\ast$ stands for some unimportant term. We then obtain \eqref{eq:asy-phi-0} for  $\alpha>-1/2$ and $2\alpha\not\in \mathbb{N}_0$.

If $2\alpha\in  \mathbb{N}_0$, in view of \eqref{Phi0-limit-relation}, the expression in \eqref{Phi0-expression} is modified to be
\begin{align}
  \Phi_0(s) & =  \begin{pmatrix}
    1 & 0 \\ ia(s)+\frac{i s^2}{4}& 1
  \end{pmatrix}
s^{-\sigma_3/4}  D(0)
  \begin{pmatrix}
    1 & 0 \\ 2 \alpha i & 1
  \end{pmatrix} \frac{1}{\sqrt{2}} \begin{pmatrix}
    1 & i \\ i & 1
  \end{pmatrix}
  \begin{pmatrix}
    1 & r_2(0) \\ 0 & 1
  \end{pmatrix}M_2(s)s^{-\alpha\sigma_3}, \label{Phi0-expression-2}
\end{align}
where
\begin{equation}\label{def:M-2}
M_2(s):=\lim_{\zeta \to 0 , \atop \zeta \in \Omega_2}
\begin{pmatrix}
    1 & r(\zeta)-r_2(\zeta)\\
    0&1
  \end{pmatrix}d(\zeta)^{ \sigma_3} e^{-s^{3/2} g(\zeta) \sigma_3} (E_2J_2)^{-1}\left (I-\frac{\kappa}{2\pi i}\ln \zeta ~\sigma_+\right )\zeta^{-\alpha\sigma_3}
\end{equation}
with $r(\zeta)$ and $r_2(\zeta)$ defined in \eqref{eq:r-0-case 1} and $\kappa$ defined in \eqref{def:kappa}.
Since
\begin{equation} \label{eq:r-d-2}
d(\zeta)^{ -\sigma_3} e^{s^{3/2} g(\zeta) \sigma_3}\begin{pmatrix}
    1 & r(\zeta)-r_2(\zeta)\\
    0&1
  \end{pmatrix}d(\zeta)^{ \sigma_3} e^{-s^{3/2} g(\zeta) \sigma_3}
  =\begin{pmatrix}
    1 & \frac{  e^{2\alpha\pi i}-\omega}{2\pi i} \ln \zeta\\
    0 &1  \end{pmatrix}, \quad \zeta\in  \Omega_2,
\end{equation}
and
\begin{align} \label{E2-J2-2}
  E_2 J_2 = \begin{cases}
    \begin{pmatrix}
      1 & 0 \\ -1 & 1
    \end{pmatrix} \begin{pmatrix}
      1 & 0 \\ e^{2\alpha\pi i} & 1
    \end{pmatrix} = I, & \quad \alpha \in \mathbb{N}_0, \vspace{2mm} \\
    \begin{pmatrix}
      0 & -1 \\ 1 & 1
    \end{pmatrix} \begin{pmatrix}
      1 & 0 \\ e^{2\alpha\pi i} & 1
    \end{pmatrix} = \begin{pmatrix}
      1 & -1 \\ 0 & 1
    \end{pmatrix}, & \quad \alpha- \frac{1}{2}\in \mathbb{N}_0,
  \end{cases}
\end{align}
it follows that
\begin{equation}\label{eq:M-2-exp}
 M_2(s)= 2^{-2\alpha \sigma_3}  e^{-\frac{2}{3}s^{3/2} \sigma_3}, \qquad  2\alpha\in \mathbb{N}_0.
\end{equation}
Finally, substituting \eqref{eq:M-2-exp} in \eqref{Phi0-expression-2}, we again obtain \eqref{eq:asy-phi-0} for $2\alpha\in\mathbb{N}_0$ from \eqref{D-est}.

This completes the proof of Proposition \ref{Pro: asy large s}.
\end{proof}

\begin{rem} Actually, the asymptotic analysis of the RH problem for $\Psi$ as $s\to +\infty$ carried out  in Section \ref{sec:RH analysis large postivs s} holds for  $\alpha>-1/2$ and more general parameter $\omega\in \mathbb{C}\setminus (-\infty,0)$. Thus, Proposition \ref{Pro: asy large s} is also true for  $\alpha>-1/2$ and $\omega\in \mathbb{C}\setminus (-\infty,0)$.
\end{rem}

\subsection{Asymptotic analysis of the RH problem for $\Psi$  for large negative $s$}
The detailed nonlinear steepest descend analysis of the RH problem for $\Psi$ as $s\to -\infty$ was performed in
\cite{ikj2009} for $\alpha>-1/2$ and $\omega>0$, and by Wu, Xu and Zhao in  \cite{wxz} for the case $\alpha>-1/2$ and $\omega=0$. Again, we give a sketch of the analysis in this section for later use.

\subsubsection*{The case $\alpha>-1/2$ and $\omega=0$}
Following \cite{wxz}, the analysis consists of a series of explicit and invertible transformations:
\begin{equation}
\Psi \to \widetilde A \to \widetilde B \to D_{\alpha}.
\end{equation}
More precisely, we have
\begin{itemize}
  \item The first transformation $\Psi \to \widetilde A$ is defined by
  \begin{equation}\label{A-0}
\widetilde A(\zeta)=(-s)^{\sigma_3/4}\begin{pmatrix}
1 & 0\\
 -ia(s)& 1
  \end{pmatrix} \Psi(-s\zeta;s),
\end{equation}
 which is a scaling of the variable.
  \item The second transformation $\widetilde A \to \widetilde B$ is defined by
  \begin{equation}\label{B-0}
\widetilde B(\zeta)=\widetilde A(\zeta) e^{-|s|^{3/2}h(\zeta)\sigma_3},
\end{equation}
where
\begin{equation}\label{h-function}
  h(\zeta) = \zeta^{1/2}-\frac{2}{3}\zeta^{3/2}, \qquad -\pi< \arg \zeta < \pi.
\end{equation}
The aim of this transformation is to normalize the asymptotic behavior of $\widetilde A$ at infinity.
  \item As $s\to -\infty$, the jump matrices for $\widetilde B$  tend to the identity matrices with an exponential rate except the one on $(-\infty, 0)$, which leads to the following global parametrix $\widetilde P^{(\infty)}$.
\subsubsection*{RH problem for $\widetilde P^{(\infty)}$}
\begin{description}
  \item(a)  $\widetilde P^{(\infty)}(\zeta)$ is a $2\times 2$  matrix-valued function, which is analytic for $\zeta$ in
  $\mathbb{C}\backslash (-\infty,0]$.
  \item(b)  $\widetilde P^{(\infty)}(\zeta)$  satisfies the jump condition
  \begin{equation}
  \widetilde P_+^{(\infty)}(\zeta)=\widetilde P_-^{(\infty)}(\zeta)\begin{pmatrix}
                                 0 & 1 \\
                                 -1 & 0
                                 \end{pmatrix},  \qquad \zeta \in (-\infty,0).
  \end{equation}
\item(c)   As $\zeta\rightarrow \infty$, we have
\begin{equation}
 \widetilde P^{(\infty)}(\zeta) =
       (I+O(1/\zeta))  \zeta^{-\frac{1}{4}\sigma_3}\frac{I+i\sigma_1}{\sqrt{2}}.
\end{equation}
\end{description}
The solution to the above RH problem is explicitly given by
\begin{equation}\label{P-infty-omega-0}
\widetilde P^{(\infty)}(\zeta)= \zeta^{-\sigma_3/4} \frac{1}{\sqrt{2}} \begin{pmatrix}
    1 & i \\ i & 1
  \end{pmatrix}.
  \end{equation}
Near $\zeta=0$, $\widetilde B$ is approximated by the local parametrix
\begin{equation}\label{def:Bes0}
\widetilde P^{(0)}(\zeta)=E_{\alpha}(\zeta) \Phi_{2\alpha}^{(\mathrm{Bes})} \left((-s)^{3} h(\zeta)^2\right)e^{-|s|^{3/2}h(\zeta)\sigma_3},
\end{equation}
where the prefactor
\begin{equation} \label{E-z-defn}
  E_{\alpha}(\zeta) = \left(\pi (-s)^{3/2} \left(1-\frac{2\zeta}{3} \right) \right)^{\sigma_3/2}
\end{equation}
is analytic near the origin, $h$ is given in \eqref{h-function}, and $\Phi_{\alpha}^{(\Bes)}$ is the Bessel parametrix shown in Section \ref{sec:Bessel} below.

  \item The final transformation $\widetilde B \to D_{\alpha}$ is defined by
\begin{equation}
\label{D-omega-0}
 D_{\alpha}(\zeta) = \left\{
  \begin{array}{ll}
    \widetilde B(\zeta)\left( \widetilde P^{(\infty)}(\zeta)\right)^{-1}, &\quad  \hbox{$\zeta \in  \mathbb{C}\setminus \overline{U(0,\delta)}$,} \\
    \widetilde B(\zeta)\left( \widetilde P^{(0)}(\zeta)\right)^{-1}, &\quad  \hbox{$\zeta \in U(0,\delta)$.}
  \end{array}
\right.
\end{equation}
Then, $D_\alpha$ satisfies an RH problem which is analytic in $\mathbb{C}\setminus \Sigma_{D_{\alpha}}$, where $$\Sigma_{D_{\alpha}}:=\partial U(0,\delta)\cup e^{\frac{2\pi i}{3}}(\delta,+\infty)
\cup e^{-\frac{2\pi i}{3}}(\delta,+\infty).$$
Since $D_{\alpha,+}(\zeta)=D_{\alpha,-}(\zeta)J_{D_\alpha}(\zeta)$ for $\zeta \in \Sigma_{D_{\alpha}}$ with
\begin{equation}\label{eq:estjump}
J_{D_\alpha}=I+O(|s|^{-3/2}),\qquad s\to -\infty,
\end{equation}
uniformly on $\Sigma_{D_{\alpha}}$, and $$D_\alpha(\zeta)=I+O(\zeta^{-1}),\qquad  \zeta \to \infty,$$
it follows that
\begin{equation}\label{D-0-est}
  D_\alpha(\zeta) = I+O(|s|^{-3/2}), \qquad s\to -\infty,
\end{equation}
uniformly for $\zeta \in \mathbb{C} \setminus \Sigma_{D_{\alpha}}$; see the equation after \cite[Equation (A.8)]{wxz}.
\end{itemize}

\subsubsection*{The case $\alpha>-1/2$ and $\omega=e^{-2\beta \pi i}>0$}
Following \cite{ikj2009}, the analysis consists of a series explicit and invertible transformations:
\begin{equation}
\Psi \to \widetilde A \to \widehat B \to \widehat C \to D_{\alpha,\beta}.
\end{equation}
More precisely, we have
\begin{itemize}
  \item The first transformation $\Psi \to \widetilde A$ is the same as in the previous case; see \eqref{A-0}.
  \item The second transformation $\widetilde A \to \widehat B$ is defined by
 \begin{equation}\label{B-ab}
\widehat B(\zeta)=
\begin{pmatrix}
1 & 0 \\ -\frac{i|s|^{3/2}}{4} & 1
  \end{pmatrix} \widetilde A(\zeta)e^{|s|^{3/2}\widehat{g}(\zeta)\sigma_3},
\end{equation}
where
\begin{equation}\label{g-hat-function}
  \widehat{g}(\zeta) = \frac{2}{3}(\zeta-1)^{3/2}, \qquad  -\pi<\arg(\zeta-1)<\pi,
\end{equation}
so that the asymptotic behavior of $\widetilde A$ at infinity is normalized.
  \item The third transformation $\widehat B \to \widehat C$ involves lenses-opening around the interval $(0,1)$, which is based on a classical factorization of the jump matrix of $\widehat B$ on $(0,1)$. In particular, we have
     \begin{equation}\label{def:BtoC}
     \widehat C(\zeta)=\widehat B(\zeta), \qquad \textrm{$\zeta$ outside the lenses around $(0,1)$.}
     \end{equation}
As a consequence, all the jump matrices of $\widehat C$ tend to the identity matrices exponentially fast for large negative $s$ except the ones on the intervals $(-\infty,0)$ and $(0,1)$. The global parametrix in this case reads as follows.
 \subsubsection*{RH problem for $\widehat P^{(\infty)}$}
\begin{description}
  \item(a)  $\widehat P^{(\infty)}(\zeta)$ is a $2\times 2$  matrix-valued function, which is analytic for $\zeta$ in
  $\mathbb{C}\backslash (-\infty,1]$.
  \item(b)  $\widehat P^{(\infty)}(\zeta)$  satisfies the jump condition
  \begin{equation}
  \widehat P_+^{(\infty)}(\zeta)=\widehat P_-^{(\infty)}(\zeta)
  \left\{
    \begin{array}{ll}
      \begin{pmatrix}
       0 & 1
       \\
       -1 & 0
      \end{pmatrix}, & \quad \hbox{$\zeta \in (-\infty,0)$,} \\
       \begin{pmatrix}
       0 & \omega
       \\
       -\omega^{-1} & 0
      \end{pmatrix}, & \quad \hbox{$\zeta \in (0,1)$.}
    \end{array}
  \right.
  \end{equation}
\item(c)   As $\zeta\rightarrow \infty$, we have
\begin{equation}
 \widehat P^{(\infty)}(\zeta) =
       (I+O(1/\zeta))  \zeta^{-\frac{1}{4}\sigma_3}\frac{I+i\sigma_1}{\sqrt{2}}.
\end{equation}
\end{description}
The solution to the above RH problem is explicitly given by
\begin{equation}
\widehat P^{(\infty)}(\zeta)=
\begin{pmatrix}
1 & 0
\\
2\omega & 1
\end{pmatrix}(\zeta-1)^{-\sigma_3/4} \frac{1}{\sqrt{2}} \begin{pmatrix}
    1 & i \\ i & 1
  \end{pmatrix}
\left(\frac{(\zeta-1)^{1/2}+i}{(\zeta-1)^{1/2}-i}\right)^{\omega \sigma_3},
  \end{equation}
where the branches of the arguments are fixed by the inequalities
$$-\pi <\arg(\zeta-1)<\pi, \qquad -\pi<\arg\left(\frac{(\zeta-1)^{1/2}+i}{(\zeta-1)^{1/2}-i}\right)<\pi;$$
see \cite[Equation (A.24)]{ikj2009}. Near $\zeta=0$, $\widehat C$ is approximated by the local parametrix
\begin{equation}\label{def:CHFpara}
\widehat P^{(0)}(\zeta)=E_{\alpha,\beta}(\zeta) \Phi_{\alpha,\beta}^{(\mathrm{CHF})} \left((-s)^{3/2} f(\zeta)\right)e^{|s|^{3/2}\widehat{g}(\zeta)\sigma_3},
\end{equation}
where the prefactor $E_{\alpha,\beta}(\zeta)$ is analytic at $\zeta =0$ with
\begin{equation} \label{E-alpha-beta-0}
  E_{\alpha,\beta}(0) = \frac{1}{\sqrt{2}} \begin{pmatrix}
    1 & 1 \\ 2 \beta - 1 & 2 \beta + 1
  \end{pmatrix} (4(-s)^{3/2})^{\beta \sigma_3} e^{[\frac{2i(-s)^{3/2}}{3} - i \pi (\frac{1}{4} + \alpha - \frac{\beta}{2})] \sigma_3},
\end{equation}
$f$ is given by
\begin{equation}\label{def:fzeta}
  f(\zeta) = \frac{2}{3} - \frac{2i}{3} (\zeta-1)^{3/2}, \qquad 0<\arg(\zeta-1)<2\pi,
\end{equation}
and $\Phi_{\alpha,\beta}^{(\mathrm{CHF})}$ is the confluent hypergeometric parametrix shown in Section \ref{sec:CHF} below;
see \cite[Equation (A.34)]{ikj2009}. Near $\zeta=1$, the local parametrix $\widehat P^{(1)}$ is constructed with the aid of Airy parametrix $\Phi^{(\Ai)}$, similar to the local parametrix built in \eqref{eq:Airypara}, we omit the details here.
  \item The final transformation $\widehat C \to D_{\alpha,\beta}$ is defined by
  \begin{equation} \label{D-ab}
 D_{\alpha,\beta}(\zeta) =
\left\{
  \begin{array}{ll}
    \widehat C(\zeta)\left(\widehat P^{(\infty)}(\zeta)\right)^{-1}, & \quad \hbox{$\zeta \in  \mathbb{C}\setminus (\overline{U(0,\delta)}\cup \overline{U(1,\delta)})$,} \\
    \widehat C(\zeta)\left(\widehat P^{(0)}(\zeta)\right)^{-1}, & \quad \hbox{$\zeta \in U(0,\delta)$,} \\
    \widehat C(\zeta)\left(\widehat P^{(1)}(\zeta)\right)^{-1}, & \quad \hbox{$\zeta \in U(1,\delta)$.}
  \end{array}
\right.
\end{equation}
Then, $D_{\alpha,\beta}$ satisfies an RH problem which is analytic in $\mathbb{C}\setminus \Sigma_{D_{\alpha,\beta}}$, where
\begin{align}
\Sigma_{D_{\alpha,\beta}}& :=   \partial U(0,\delta)\cup e^{\frac{2\pi i}{3}}(\delta,+\infty)
\cup e^{-\frac{2\pi i}{3}}(\delta,+\infty)\cup \partial U(1,\delta)\cup(1+\delta,+\infty)
\nonumber
\\
& \quad ~~\cup \textrm{\{the boundaries of the lenses outside $U(0,\delta)$ and $U(1,\delta)\}$}.
\end{align}
Since $D_{\alpha,\beta}$ tends to the identity matrix at infinity, and its jump matrices still satisfy the estimate \eqref{eq:estjump} uniformly on the jump contour $\Sigma_{D_{\alpha,\beta}}$, we finally conclude that
\begin{equation}\label{D-ab-est}
  D_{\alpha,\beta}(\zeta) = I+O(|s|^{-3/2}), \quad s\to -\infty,
\end{equation}
uniformly for $\zeta \in \mathbb{C} \setminus \Sigma_{D_{\alpha,\beta}}$; see \cite[Equation (3.35)]{ikj2009}.
\end{itemize}

We are now ready to derive the asymptotics of $w,H$ and relevant functions for large negative $s$.

\subsection{Asymptotics of $w,H$ and relevant functions for large negative $s$}
Since the asymptotic analysis of the RH problem for $\Psi$ is different for $\omega=0$
and $\omega>0$, accordingly, we separate the asymptotic results into two propositions below. We start with the case $\omega=0$.

\begin{pro}\label{Pro: asy phi negative infty}
For $\alpha>-\frac 12$ and $\omega =0$, we have the following asymptotics as $s\to -\infty$:
\begin{align}
  u(s;2\alpha,0) & = -\frac{s}{2}+(2\alpha^2-\frac{1}{8})\frac{1}{s^2}+O(|s|^{-7/2}), \label{eq:asy-u-negative infty} \\
  w(s;2\alpha+\frac {1}{2}, 0)& =(2\alpha+\frac{1}{2})\frac {1}{s}+O(s^{-9/2}),   \label{eq:asy-w-negative infty} \\
  H(s;2\alpha,0)& =
    \frac{s^2}{4}+(2\alpha^2-\frac{1}{8})\frac{1}{s}+O(|s|^{-5/2}),  \label{eq:asy-H-negative infty} \\
  \ln\left|\left(\Phi_0\right)_{11}(s)\right|&=(2\alpha+\frac{1}{2})\ln|s|- 2\alpha\ln 2-\ln \Gamma(1+2\alpha)
 + \frac{1}{2}\ln \pi  +O(|s|^{-3/2}). \label{eq:asy-phi-0-negative infty}
\end{align}
\end{pro}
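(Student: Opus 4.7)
The plan is to extract all four expansions from the nonlinear steepest descent analysis of the RH problem for $\Psi$ summarized in the previous subsection, together with the small-norm estimate $D_{\alpha}(\zeta)=I+O(|s|^{-3/2})$ in \eqref{D-0-est}, the explicit global parametrix $\widetilde P^{(\infty)}$ in \eqref{P-infty-omega-0} and the Bessel local parametrix $\widetilde P^{(0)}$ in \eqref{def:Bes0}.

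First I would derive \eqref{eq:asy-H-negative infty}. Unwinding the transformations \eqref{A-0}, \eqref{B-0}, \eqref{D-omega-0} outside $U(0,\delta)$ yields
\begin{equation*}
\Psi(-s\zeta;s)=\begin{pmatrix} 1 & 0 \\ ia(s) & 1\end{pmatrix}(-s)^{-\sigma_3/4}\,D_\alpha(\zeta)\,\widetilde P^{(\infty)}(\zeta)\,e^{|s|^{3/2}h(\zeta)\sigma_3}.
\end{equation*}
Setting $z=-s\zeta$, rescaling $\zeta=z/|s|$, and comparing with the large-$z$ expansion \eqref{eq:Psi-infinity} yields, after absorbing the factors $|s|^{\pm\sigma_3/4}$, the identification
\begin{equation*}
(\Psi_1)_{12}=|s|^{1/2}\,(D_1)_{12},
\end{equation*}
where $D_1$ is the coefficient of $\zeta^{-1}$ in the expansion of $D_\alpha$ at infinity. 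By small-norm theory, $D_1$ is the residue at infinity of the jump $\widetilde P^{(0)}(\widetilde P^{(\infty)})^{-1}-I$ on $\partial U(0,\delta)$. Plugging the known asymptotic expansion at infinity of the Bessel parametrix $\Phi_{2\alpha}^{(\mathrm{Bes})}$ (reproduced in the appendix) into \eqref{def:Bes0} gives
\begin{equation*}
\widetilde P^{(0)}(\zeta)\widetilde P^{(\infty)}(\zeta)^{-1}=I+\frac{1}{|s|^{3/2}}R(\zeta)+O(|s|^{-3}),
\end{equation*}
with $R$ an explicit rational matrix in $\zeta$ whose $(1,2)$-entry is pinned down by the standard Bessel expansion; residue calculus then produces $i|s|^{1/2}(D_1)_{12}=(2\alpha^{2}-\tfrac18)/s+O(|s|^{-5/2})$. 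Combined with $H=s^{2}/4+i(\Psi_1)_{12}$ from \eqref{eq:H-Psi}, this gives \eqref{eq:asy-H-negative infty}. The asymptotics \eqref{eq:asy-u-negative infty} then follows from the identity $u=-H_s$ in \eqref{eq:H-u} (the RH expansion is uniform in $s$ on a neighborhood of each point, hence differentiable termwise).

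Next I would handle \eqref{eq:asy-phi-0-negative infty}. For $\zeta$ inside $U(0,\delta)$ we have instead
\begin{equation*}
\Psi(-s\zeta;s)=\begin{pmatrix} 1 & 0 \\ ia(s) & 1\end{pmatrix}(-s)^{-\sigma_3/4}\,D_\alpha(\zeta)\,\widetilde P^{(0)}(\zeta)\,e^{|s|^{3/2}h(\zeta)\sigma_3}.
\end{equation*}
Substituting the explicit form of $\widetilde P^{(0)}$ from \eqref{def:Bes0} together with the near-origin expansion of $\Phi_{2\alpha}^{(\mathrm{Bes})}$ (which contains a $\zeta^{\alpha\sigma_3}$ factor and a constant prefactor involving $\Gamma(1+2\alpha)$), and then isolating the analytic piece via \eqref{Phi0-limit-relation} — splitting into the cases $2\alpha\notin\mathbb{N}_0$ and $2\alpha\in\mathbb{N}_0$ to absorb the logarithmic factor in \eqref{eq:Psi0-2} — produces an explicit expression for $\Phi_0(s)$ modulo $O(|s|^{-3/2})$ coming from $D_\alpha(0)$. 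The constants $-2\alpha\ln 2$, $-\ln\Gamma(1+2\alpha)$ and $\tfrac12\ln\pi$ arise from $E_\alpha(0)$ in \eqref{E-z-defn} together with the Bessel model's origin behavior, while the growing contribution $(2\alpha+\tfrac12)\ln|s|$ comes from combining the $(-s)^{-\sigma_3/4}$ factor in \eqref{A-0} with the power $|s|^{\alpha}$ produced by the substitution $\zeta=z/|s|$ acting on $\zeta^{\alpha\sigma_3}$ inside $\widetilde P^{(0)}$. Finally, \eqref{eq:asy-w-negative infty} follows either by differentiating \eqref{eq:asy-phi-0-negative infty} via the identity $w=\tfrac{d}{ds}\ln|\left(\Phi_0\right)_{11}|$ from Lemma \ref{lem: integrals of w}, or directly by applying \eqref{def:w} to the large-positive-argument expansion \eqref{eq:HMqa-asy} of the generalized Hastings--McLeod solution $q(s;2\alpha+\tfrac12,0)$.

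The main obstacle is Step 2: one must carefully track several competing normalization factors — the conjugation $(-s)^{-\sigma_3/4}$ from \eqref{A-0}, the prefactor $E_\alpha(\zeta)$ in \eqref{E-z-defn}, the Bessel model's near-zero behavior, and the matching matrices $E_j$ in \eqref{eq:Psi0-1}--\eqref{eq:Psi0-2} — and treat the degenerate case $2\alpha\in\mathbb{N}_0$ (where the $\ln\zeta$ term in \eqref{eq:Psi0-2} enters) separately, so as to recover the explicit $\ln\Gamma(1+2\alpha)$, $\ln 2$ and $\ln\pi$ contributions. By comparison, Step 1 is a routine residue computation once the Bessel model's expansion at infinity is in hand.
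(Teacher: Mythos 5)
Your plan is sound, and for the heart of the proposition — the expansion of $\ln\left|\left(\Phi_0\right)_{11}(s)\right|$ — you follow exactly the paper's strategy: invert the chain $\Psi\to\widetilde A\to\widetilde B\to D_\alpha$ inside $U(0,\delta)$, write $\Psi(-s\zeta;s)$ in terms of $D_\alpha(\zeta)E_\alpha(\zeta)\Phi_{2\alpha}^{(\mathrm{Bes})}((-s)^3h(\zeta)^2)$, insert the near-origin behavior of the modified Bessel functions (split, as you say, between $2\alpha\notin\mathbb{N}_0$ and $2\alpha\in\mathbb{N}_0$ to absorb the $\ln\zeta$ factor), and then use \eqref{Phi0-limit-relation} and the small-norm estimate \eqref{D-0-est}. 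The contributions you identify — $\tfrac12\ln\pi$ from $E_\alpha(0)$, $-2\alpha\ln 2-\ln\Gamma(1+2\alpha)$ from the Bessel model, and the combined power $(2\alpha+\tfrac12)\ln|s|$ — are precisely how the constant and the growing term arise, though note the exponent decomposes as $-\tfrac14+\tfrac34+3\alpha-\alpha$ (from $(-s)^{-\sigma_3/4}$, $E_\alpha(0)$, $z^\alpha\sim(-s)^{3\alpha}\zeta^\alpha$, and $(-s\zeta)^{-\alpha\sigma_3}$ respectively), so the pieces you attribute the $\alpha$-power to are slightly coarser than the actual bookkeeping.

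Where you genuinely diverge from the paper is the first three expansions. The paper does not re-derive \eqref{eq:asy-u-negative infty}, \eqref{eq:asy-w-negative infty}, \eqref{eq:asy-H-negative infty}: it simply cites Theorems 1 and 2 of \cite{wxz} for $u$ and $H$, and cites \cite{Kapaev92} for $w$ (adding the remark that the leading term of $w$ can also be recovered from the second equation of \eqref{eq:Hamiltonian system} and the expansion of $u$). You propose instead a self-contained re-derivation: extract $a(s)=i(\Psi_1)_{12}$ from the $\zeta^{-1}$ coefficient $D_1$ of the small-norm solution $D_\alpha$, compute $D_1$ by residue calculus against the $O(|s|^{-3/2})$ correction to the Bessel–global-parametrix matching on $\partial U(0,\delta)$, and then pass to $H$ via \eqref{eq:H-Psi} and to $u$ via $u=-H_s$. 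This is in effect re-proving the cited results of \cite{wxz}; it buys self-containedness at the cost of length. It is workable, but two points need care in an actual write-up: (i) the identification $(\Psi_1)_{12}\propto|s|^{1/2}(D_1)_{12}$ comes from conjugating $\Psi_1$ by $(-s)^{\sigma_3/4}$ and there is a sign to track; (ii) term-by-term differentiation to get $u=-H_s$ requires the usual justification that the small-norm estimate is differentiable in $s$ (true here by standard arguments, but worth a sentence). For $w$, either of your two alternatives works; the paper's own remark is closest to your second route, while your first route (differentiating \eqref{eq:w-anti-derivative}) would need the same differentiability caveat.
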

\begin{proof}
The asymptotics for the Painlev\'e XXXIV transcendent $u(s;2\alpha,0)$ and the Hamiltonian $H(s;2\alpha,0)$ in \eqref{eq:asy-u-negative infty} and \eqref{eq:asy-H-negative infty} are proved in Theorem 2 and Theorem 1 of \cite{wxz}, respectively. The asymptotics of the tronqu\'ee solution $w(s;2\alpha+\frac {1}{2}, 0)$ in \eqref{eq:asy-w-negative infty}  is obtained in \cite[Equations (19) and (31)]{Kapaev92}.
The leading term in the expansion \eqref{eq:asy-w-negative infty}  can also be derived by using the second equation of  \eqref{eq:Hamiltonian system} and  the expansion \eqref{eq:asy-u-negative infty}.


To show the asymptotics of $\ln |\left(\Phi_0\right)_{11}(s)|$,  as in the derivation of \eqref{eq:asy-phi-0}, we explore the explicit expression of $\Psi(-s\zeta; s)$ in $\Omega_2$. 
By inverting the transformations \eqref{A-0}, \eqref{B-0} and \eqref{D-omega-0}, we obtain from \eqref{def:Bes0} that
\begin{align} \label{Phi-Bessel-expression}
  \Psi(-s\zeta; s) =
\begin{pmatrix}
1 & 0
\\
ia(s) & 1
\end{pmatrix}(-s)^{-\sigma_3/4}D_{\alpha}(\zeta) E_{\alpha}(\zeta) \Phi_{2\alpha}^{(\mathrm{Bes})} \left((-s)^{3} h(\zeta)^2 \right), \qquad \zeta \in \Omega_2.
\end{align}

  \begin{figure}[t]
\begin{center}
   \setlength{\unitlength}{1truemm}
   \begin{picture}(80,70)(-5,2)
       \put(40,40){\line(-2,-3){16}}
       \put(40,40){\line(-2,3){16}}
       \put(40,40){\line(-1,0){30}}

       \put(30,55){\thicklines\vector(2,-3){1}}
       \put(30,40){\thicklines\vector(1,0){1}}
       \put(30,25){\thicklines\vector(2,3){1}}

       \put(39,36.3){$0$}
       \put(20,11){$\Sigma_3$}
       \put(20,69){$\Sigma_1$}
       \put(3,40){$\Sigma_2$}

       \put(52,39){$\texttt{I}$}
       \put(25,44){$\texttt{II}$}
       \put(25,34){$\texttt{III}$}

       \put(40,40){\thicklines\circle*{1}}

   \end{picture}
   \caption{The  regions in the $z$ plane for $\Phi^{(\Bes)}_{\alpha}$.}
   \label{fig:regions-Phi-B}
\end{center}
\end{figure}
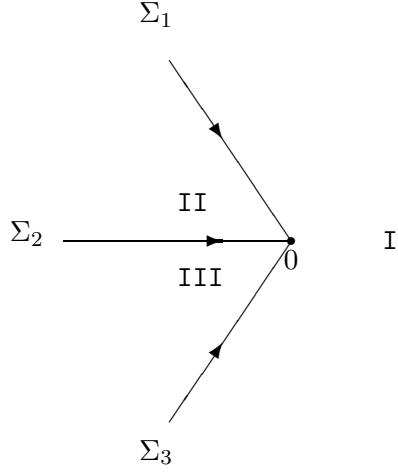

By \eqref{h-function}, it follows that
$$ h(\zeta)^2 \sim \zeta, \qquad \zeta \to 0,$$
and
$$z = (-s)^{3} h(\zeta)^2 \in \texttt{II}, \qquad \zeta\in\Omega_2, $$
where the region $\texttt{II}$ is shown in Figure \ref{fig:regions-Phi-B}. In addition, from the explicit construction of the Bessel parametrix $\Phi_{\alpha}^{(\mathrm{Bes})}$ given in \eqref{Phi-B-solution}, we have
\begin{align}\label{eq:phibes}
  \Phi_{2\alpha}^{(\mathrm{Bes})}(z) & = \begin{pmatrix}
    I_{2\alpha}(z^{1/2}) & \frac{i}{\pi} K_{2\alpha}(z^{1/2}) \\
    \pi i z^{1/2} \, I_{2\alpha}'(z^{1/2}) & -z^{1/2} \, K_{2\alpha}'(z^{\frac{1}{2}})
  \end{pmatrix} J_2^{-1},  \qquad z \in \texttt{II}.
\end{align}
We now need to know the asymptotic behavior of $\Phi_{2\alpha}^{(\mathrm{Bes})} \left((-s)^{3} h(\zeta)^2 \right)$ as $\zeta \to 0$ from $\Omega_2$. Recall the following asymptotics of $I_{2\alpha}(z)$ and $K_{2\alpha} (z)$ near the origin:
\begin{align}
  I_{2\alpha}(z) & \sim \frac{z^{2\alpha}}{2^{2\alpha} \Gamma(2\alpha + 1)}, 
  \\
   K_{2\alpha}(z) & \sim \frac{\pi}{2 \sin(2\alpha \pi)} \begin{cases}
    \frac{(z/2)^{-2\alpha}}{\Gamma(-2\alpha + 1)}, &\quad \alpha > 0, 2\alpha \notin \mathbb{N}, \\
    \frac{(z/2)^{2\alpha}}{\Gamma(2\alpha + 1)}, &\quad  -\frac{1}{2}< \alpha < 0,
  \end{cases} \\
  K_{0}(z) & \sim - \ln z, \qquad   K_{n}(z)  \sim \frac{\Gamma(n)}{2^{n+1} z^{n}};
\end{align}
see \cite[\S 10.31]{DLMF}. This, together with \eqref{eq:phibes}, implies that, as $z \to 0$,
\begin{align}
  \Phi_{2\alpha}^{(\mathrm{Bes})}(z) J_2 &\sim \begin{pmatrix}
    \frac{z^\alpha}{2^{2\alpha} \Gamma(2\alpha + 1)} & \frac{i}{2 \sin(2 \alpha \pi)} \frac{2^{2\alpha}}{\Gamma(-2\alpha + 1) z^{\alpha}} \vspace{2mm} \\
    \frac{\pi i  z^\alpha}{2^{2\alpha} \Gamma(2\alpha )} & - \frac{\pi}{2 \sin(2 \alpha \pi)} \frac{2^{2\alpha}}{\Gamma(-2\alpha ) z^{\alpha}}
  \end{pmatrix}, \qquad \alpha > 0, 2\alpha \notin \mathbb{N}, \label{Bessel-asy-1} \\
  \Phi_{2\alpha}^{(\mathrm{Bes})}(z) J_2 &\sim \begin{pmatrix}
    \frac{z^\alpha}{2^{2\alpha} \Gamma(2\alpha + 1)} & -\frac{i}{2 \sin(2 \alpha \pi)} \frac{z^{\alpha}}{2^{2\alpha}\Gamma(2\alpha + 1) } \vspace{2mm} \\
    \frac{\pi i  z^\alpha}{2^{2\alpha} \Gamma(2\alpha )} & \frac{\pi}{2 \sin(2 \alpha \pi)} \frac{z^{\alpha}}{2^{2\alpha}\Gamma(2\alpha ) }
  \end{pmatrix}, \qquad -\frac{1}{2} < \alpha < 0, \label{Bessel-asy-2}
\end{align}
\begin{align}
  \Phi_{0}^{(\mathrm{Bes})}(z) J_2 &\sim \begin{pmatrix}
    1 & -\frac{i}{2\pi} \ln z \vspace{2mm} \\
    \frac{\pi i }{2} z & 1
  \end{pmatrix},  \label{Bessel-asy-3} \\
  \Phi_{2\alpha}^{(\mathrm{Bes})}(z) J_2 &\sim \begin{pmatrix}
    \frac{z^\alpha}{2^{2\alpha} \Gamma(2\alpha + 1)} & \frac{i}{\pi} \frac{\Gamma(2\alpha )}{2^{2\alpha+1} z^{\alpha}} \vspace{2mm} \\
    \frac{\pi i  z^\alpha}{2^{2\alpha} \Gamma(2\alpha )} & \frac{\Gamma(2\alpha +1)}{2^{2\alpha+1} z^{\alpha}}
  \end{pmatrix}, \qquad 2\alpha \in \mathbb{N}. \label{Bessel-asy-4}
\end{align}

If $2\alpha\notin \mathbb{N}_0$ and $\alpha>-\frac {1}{2}$, we see from \eqref{Phi0-limit-relation}, \eqref{E-z-defn} and \eqref{Phi-Bessel-expression} that
\begin{multline}
  \Phi_0(s)  = \begin{pmatrix}
1 & 0
\\
ia(s) & 1
\end{pmatrix} (-s)^{-\sigma_3/4}D_\alpha(0)(\pi (-s)^{3/2})^{\sigma_3/2}
 \\
\times \displaystyle \lim_{\zeta \to 0 , \atop \zeta \in \Omega_2} \left[ \Phi_{\alpha}^{(\mathrm{Bes})} \left((-s)^{3} h(\zeta) \right) E_2^{-1}  \zeta^{-\alpha \sigma_3} \right]  (-s)^{-\alpha \sigma_3}. \label{prop2-Phi0-limit-alpha-1}
\end{multline}
In view of \eqref{E2-J2-1} with $\omega=0$, it is readily seen from the above formula and \eqref{Bessel-asy-1} that
\begin{align} \label{prop2-Phi0-limit-alpha-2}
  \Phi_0(s) & = \begin{pmatrix}
1 & 0
\\
ia(s) & 1
\end{pmatrix} (-s)^{-\sigma_3/4} D_\alpha(0) (\pi (-s)^{3/2})^{\sigma_3/2}
\frac{e^{-\pi i \alpha} (-s)^{3\alpha}}{2^{2\alpha} \Gamma(2\alpha + 1)}  \begin{pmatrix}
    1 & * \\ 2\alpha \pi i & *
  \end{pmatrix}  (-s)^{-\alpha \sigma_3},
\end{align}
for $\alpha>0$ and $2\alpha \notin \mathbb{N}$. For the case $-\frac{1}{2} < \alpha < 0$, one obtains the above limit as well with the aid of  \eqref{Bessel-asy-2} and  \eqref{prop2-Phi0-limit-alpha-1}.

If $2\alpha \in \mathbb{N}_0$, as in the proof of \eqref{eq:asy-phi-0}, we obtain from \eqref{Phi0-limit-relation}, \eqref{E-z-defn} and \eqref{Phi-Bessel-expression} that the formula \eqref{prop2-Phi0-limit-alpha-2} still holds if the constant $e^{-\pi i \alpha}$ is replaced by $1$.

Finally, by letting $s\to -\infty$, we obtain the asymptotics for $\ln|\left(\Phi_0\right)_{11}(s)|$ in \eqref{eq:asy-phi-0-negative infty} from \eqref{prop2-Phi0-limit-alpha-2} and \eqref{D-0-est}.

This completes the proof of Proposition \ref{Pro: asy phi negative infty}.
\end{proof}

Next, we derive the asymptotics when $\omega > 0$.
\begin{pro}\label{Pro: asy phi negative infty-omega}
For $\alpha>-\frac 12$ and $\omega = e^{-2\beta \pi i}$ with $\beta i \in \mathbb{R}$, we have the following asymptotics as $s\to -\infty$:
\begin{align}
  u(s;2\alpha,\omega) & = \frac{2|\alpha-\beta|}{\sqrt{|s|}} \cos\left(\frac{\theta(s)}{2}+\arg\Gamma(1+\alpha-\beta)-\frac{\pi}{4}\right) \nonumber \\
    & \quad  \times \cos\left(\frac{\theta(s)}{2}+\arg\Gamma(\alpha-\beta)+\frac{\pi}{4}\right) +O(s^{-2}),   \label{eq:asy-u-negative infty-omega} \\
  w(s;2\alpha+\frac {1}{2},\omega)& =  |s|^{1/2}  \tan \left( \frac{\theta(s)}{2}  + \arg \Gamma(1+\alpha - \beta) -
    \frac{\pi}{4} \right)  + O(s^{-1}),    \label{eq:asy-w-negative infty-omega}
  \\
  H(s;2\alpha,\omega)   &= 2 \beta i |s|^{1/2}  - \frac{|\alpha-\beta|}{2s} \sin\left(\theta(s) +2\arg\Gamma(\alpha-\beta)+\arg(\alpha-\beta) \right)  \nonumber \\
   & \quad + \frac{\alpha^2 - 3\beta^2 }{2s}  +O(|s|^{ - 5/2}), \label{eq:asy-H-negative infty-omega}
\\
\ln\left|\left(\Phi_0\right)_{11}(s)\right| &= (\frac{\alpha}{2} - \frac{1}{4})\ln(|s|)+\ln\left(\left|\cos \left( \frac{1}{2} \theta(s) + \arg \Gamma(1+\alpha - \beta) - \frac{\pi}{4} \right) \right|\right)\nonumber\\
&\quad +\ln\left(\frac{|\Gamma(1+\alpha-\beta)|}{\Gamma(1+2\alpha)}\right)-\frac{\beta}{2}\pi i +(\alpha+\frac{1}{2})\ln 2+O(|s|^{-3/2}), \label{eq:asy-phi-0-negative infty-omega}
\end{align}
where $\theta(s):=\theta(s;\alpha,\beta)$ is given by
\begin{equation} \label{theta}
  \theta(s;\alpha,\beta) = \frac{4}{3} |s|^{3/2} - \alpha \pi - 6 i \beta \ln 2 - 3i \beta \ln|s|.
\end{equation}
\end{pro}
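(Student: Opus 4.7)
The plan is to mirror the strategy used in the proof of Proposition \ref{Pro: asy phi negative infty}, but with the confluent hypergeometric parametrix $\Phi_{\alpha,\beta}^{(\mathrm{CHF})}$ playing the role that the Bessel parametrix $\Phi_{2\alpha}^{(\mathrm{Bes})}$ played there. The asymptotics \eqref{eq:asy-u-negative infty-omega} and \eqref{eq:asy-w-negative infty-omega} for the Painlev\'e XXXIV transcendent $u(s;2\alpha,\omega)$ and the tronqu\'ee solution $w(s;2\alpha+\frac12,\omega)$ can essentially be read off from the analysis already carried out in \cite{ikj2009}; the leading behavior \eqref{eq:asy-w-negative infty-omega} of $w$ may alternatively be obtained from the relation $w^2 = 2u+s+w_s$ built into \eqref{eq:Hamiltonian system} together with \eqref{eq:asy-u-negative infty-omega}. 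The expansion \eqref{eq:asy-H-negative infty-omega} of the Hamiltonian is then extracted from $H(s)=\frac{s^2}{4}+ia(s)=\frac{s^2}{4}+(\Psi_1)_{12}(s)\cdot i$ (cf.\ \eqref{eq:H-Psi}), by reading off the $1/\zeta$ coefficient in the large-$\zeta$ expansion of $\Psi$ through the transformations \eqref{A-0}, \eqref{B-ab}, \eqref{def:BtoC} and \eqref{D-ab}, and using \eqref{D-ab-est}.

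The main new content is the asymptotic \eqref{eq:asy-phi-0-negative infty-omega} for $\ln|(\Phi_0)_{11}(s)|$. My plan is as follows. Invert the chain \eqref{A-0}, \eqref{B-ab}, \eqref{def:BtoC}, \eqref{D-ab} to write, for $\zeta$ near $0$ with $\zeta\in\Omega_2$,
\begin{equation*}
\Psi(-s\zeta;s)=\begin{pmatrix}1&0\\ ia(s)+\tfrac{i|s|^{3/2}}{4}&1\end{pmatrix}(-s)^{-\sigma_3/4}D_{\alpha,\beta}(\zeta)\,E_{\alpha,\beta}(\zeta)\,\Phi_{\alpha,\beta}^{(\mathrm{CHF})}\bigl((-s)^{3/2}f(\zeta)\bigr)e^{|s|^{3/2}\widehat{g}(\zeta)\sigma_3},
\end{equation*}
with $E_{\alpha,\beta}$, $f$, $\widehat g$ as in \eqref{E-alpha-beta-0}, \eqref{def:fzeta}, \eqref{g-hat-function}. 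Then I plug this into the definition \eqref{Phi0-limit-relation} of $\Phi_0(s)$. The map $\zeta\mapsto (-s)^{3/2}f(\zeta)$ sends a small neighborhood of $0$ in $\Omega_2$ to a region in the CHF $z$-plane where the explicit formula from Section \ref{sec:CHF} for $\Phi_{\alpha,\beta}^{(\mathrm{CHF})}(z)$ is available. After multiplying by $E_j^{-1}\zeta^{-\alpha\sigma_3}$ (or its log-modified version when $2\alpha\in\mathbb{N}_0$) and letting $\zeta\to 0$, the singular $\zeta^{\pm\alpha}$ pieces cancel by construction, leaving a finite matrix whose $(1,1)$ entry is a product of: powers of $(-s)$ coming from $E_{\alpha,\beta}(0)$; the exponential $e^{i|s|^{3/2}\widehat g(0)}$-type factors that combine with the $f(0)=2/3$ contribution to produce the phase $\theta(s)$ in \eqref{theta}; and ratios of Gamma functions coming from the connection constants of the confluent hypergeometric functions $\phi(a,c;z)$ and $\psi(a,c;z)$ at $z=0$.

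The key constants to track are $E_{\alpha,\beta}(0)$ in \eqref{E-alpha-beta-0}, which produces the $(4(-s)^{3/2})^{\beta}$ factor and the phase shift $-\pi i(\tfrac14+\alpha-\tfrac{\beta}{2})$, together with the values of $\phi$ and $\psi$ near the origin, which involve $\Gamma(1+\alpha-\beta)$ and $\Gamma(1+2\alpha)$ (via $c=1+2\alpha$, $a=\alpha-\beta$ in the standard CHF notation). Combining these, taking the absolute value and using $\ln|(-s)^{3/2}|^{\beta}=0$ since $\beta i\in\mathbb{R}$, one obtains the $(\frac{\alpha}{2}-\frac14)\ln|s|$ term, the oscillating cosine factor $\cos(\tfrac12\theta+\arg\Gamma(1+\alpha-\beta)-\tfrac{\pi}{4})$, the pure imaginary constant $-\tfrac{\beta}{2}\pi i$, the explicit constant $(\alpha+\tfrac12)\ln 2$, and the Gamma-ratio $|\Gamma(1+\alpha-\beta)|/\Gamma(1+2\alpha)$. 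Finally, the estimate \eqref{D-ab-est} feeds the error $O(|s|^{-3/2})$.

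The main obstacle will be the bookkeeping in extracting $(\Phi_0)_{11}$ from the product of matrices, in particular tracking all powers of $2$, phases from $e^{-\pi i\alpha}$, from $e^{i|s|^{3/2}\widehat g(0)}$ with $\widehat g(0)=-\frac{2i}{3}$, and from $E_{\alpha,\beta}(0)$, and matching them with $E_j$ in \eqref{E2-def} to obtain the stated constant $\ln\!\bigl(|\Gamma(1+\alpha-\beta)|/\Gamma(1+2\alpha)\bigr)-\frac{\beta}{2}\pi i+(\alpha+\tfrac12)\ln 2$. A secondary subtle point is the case $2\alpha\in\mathbb{N}_0$, where the logarithmic modification in \eqref{eq:Psi0-2} must be carried through; as in the proof of Proposition \ref{Pro: asy phi negative infty}, this just amounts to verifying that the $\ln\zeta$ term cancels thanks to the structure of $E_2$ in \eqref{E2-def}, so the final answer has a single unified form.
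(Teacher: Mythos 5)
Your proposal takes essentially the same route as the paper: the asymptotics of $u$, $w$, $H$ are read off from the literature (\cite{wxz}, \cite{Kapaev92}) or via the Hamiltonian system, and the genuinely new calculation is $\ln|(\Phi_0)_{11}|$, obtained by inverting the transformation chain $\Psi\to\widetilde A\to\widehat B\to\widehat C\to D_{\alpha,\beta}$ for $\zeta\in\Omega_2$ near $0$, substituting the explicit CHF parametrix, and passing to the limit $\zeta\to 0$ in \eqref{Phi0-limit-relation} so that the $\zeta^{\pm\alpha}$ singularities cancel against $\zeta^{-\alpha\sigma_3}$, exactly as in \eqref{Phi0-limit-(-s)}--\eqref{Phi0-limit-(-s)-pure I beta}. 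One small slip worth flagging: you write $w^2=2u+s+w_s$ from \eqref{eq:Hamiltonian system}, but the first equation there is $-w_s=w^2-2u-s$, i.e. $w^2=2u+s-w_s$; moreover the paper actually derives the leading term of $w$ from the \emph{second} Hamiltonian equation $u_s=2uw+2\alpha$, i.e. $w=(u_s-2\alpha)/(2u)$, which does not require knowledge of $w_s$ and is therefore cleaner. Also note that the cancellation of the $\zeta^{\pm\alpha}$ and $\ln\zeta$ terms does not happen ``by construction'' automatically in a way that avoids computation: the paper achieves it only after invoking the Kummer connection formulas \eqref{eq:K-relation-1}--\eqref{eq:K-relation-3} to rewrite the $\psi$ functions in terms of the entire function $\phi$, which is the real workhorse step you should make explicit rather than subsume under ``connection constants''.
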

\begin{proof}
Again, the asymptotics for the Painlev\'e XXXIV transcendent $u(s;2\alpha,\omega)$ in \eqref{eq:asy-u-negative infty-omega} and the Hamiltonian $H(s;2\alpha,\omega)$ in \eqref{eq:asy-H-negative infty-omega} are given in Theorem 2 and Theorem 1 of \cite{wxz}, respectively.  The asymptotics of the tronqu\'ee solution $w(s;2\alpha+\frac {1}{2}, \omega)$ in  \eqref{eq:asy-w-negative infty}   is given in \cite[Equations (19) and (28)]{Kapaev92}.  The leading term in the expansion \eqref{eq:asy-w-negative infty}  can also be derived by using the second equation of  \eqref{eq:Hamiltonian system} and  the expansion  \eqref{eq:asy-u-negative infty-omega}.


To derive the asymptotics of $\ln|\left(\Phi_0\right)_{11}(s)|$, similar to the case $\omega=0$ in the previous proposition, we consider the explicit expression of $\Psi(-s\zeta; s)$ in $\Omega_2$. 
By inverting the transformations \eqref{A-0}, \eqref{B-ab}, \eqref{def:BtoC} and \eqref{D-ab}, it follows from \eqref{def:CHFpara} that
  \begin{multline} \label{Psi-neg-sz-in-II}
  \Psi(-s\zeta; s)  = \begin{pmatrix}
1 & 0
\\
ia(s) & 1
\end{pmatrix}(-s)^{-\sigma_3/4} \begin{pmatrix}
    1 & 0 \\ \frac{i(-s)^{3/2}}{4} & 1
  \end{pmatrix}
\\ \times
 D_{\alpha,\beta} (\zeta) E_{\alpha,\beta}(\zeta) \Phi_{\alpha,\beta}^{(\mathrm{CHF})} \left((-s)^{3/2} f(\zeta)\right), \qquad \zeta \in \Omega_2.
\end{multline}
On account of the definition of $f$ given in \eqref{def:fzeta}, it is easily seen that
\begin{equation}\label{eq:fzero}
 f(\zeta) \sim \zeta,  \qquad \zeta \to 0,
\end{equation}
and
$$z = (-s)^{3/2} f(\zeta) \in \texttt{IV}, \qquad \zeta \in \Omega_2,$$
where the region $\texttt{IV}$ in this case is illustrated in Figure \ref{fig:regions-Phi}.
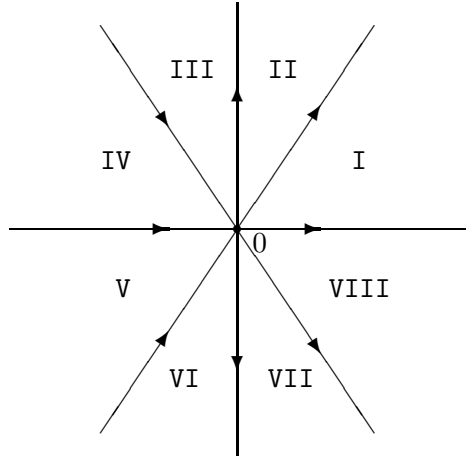
\begin{figure}[t]
\begin{center}
   \setlength{\unitlength}{1truemm}
   \begin{picture}(100,70)(-5,2)
       \put(40,40){\line(-2,-3){18}}
       \put(40,40){\line(-2,3){18}}
       \put(40,40){\line(-1,0){30}}
       \put(40,40){\line(1,0){30}}
      \put(40,40){\line(0,1){30}}
 \put(40,40){\line(0,-1){30}}
  \put(40,40){\line(2,-3){18}}
    \put(40,40){\line(2,3){18}}

       \put(30,55){\thicklines\vector(2,-3){1}}
       \put(30,40){\thicklines\vector(1,0){1}}
       \put(50,40){\thicklines\vector(1,0){1}}
       \put(30,25){\thicklines\vector(2,3){1}}
      \put(50,25){\thicklines\vector(2,-3){1}}
       \put(50,55){\thicklines\vector(2,3){1}}

     \put(40,58){\thicklines\vector(0,1){1}}
      \put(40,22){\thicklines\vector(0,-1){1}}

       \put(42,36.9){$0$}

          \put(55,48){$\texttt{I}$}
           \put(44,60){$\texttt{II}$}
              \put(31,60){$\texttt{III}$}
            \put(22,48){$\texttt{IV}$}
        \put(24,31){$\texttt{V}$}
        \put(31,19){$\texttt{VI}$}
         \put(44,19){$\texttt{VII}$}
 \put(52,31){$\texttt{VIII}$}

       \put(40,40){\thicklines\circle*{1}}
\end{picture}
   \caption{The  regions $\texttt{I}-\texttt{VIII}$ in the $z$ plane for $\Phi_{\alpha,\beta}^{(\mathrm{CHF})}$.}
   \label{fig:regions-Phi}
\end{center}
\end{figure}

The explicit expression of $\Phi_{\alpha,\beta}^{(\mathrm{CHF})}(z)$ for $z\in\texttt{IV}$ is given by
\begin{multline}
  \Phi_{\alpha,\beta}^{(\mathrm{CHF})}(z)  = 2^{\beta \sigma_3} e^{\frac{\pi i \beta}{2}  \sigma_3} \begin{pmatrix}
    e^{- \pi i (\alpha + 2\beta)} & 0 \\ 0 & e^{\pi i (2\alpha + \beta)}
  \end{pmatrix}
  M(z)
\\
\times \begin{pmatrix}
    e^{\pi i \alpha} & 0 \\ 2i \sin(\pi (\beta - \alpha)) & e^{- \pi i \alpha}
  \end{pmatrix} e^{\frac{\pi i \beta}{2} \sigma_3},  \label{Phi-CHF-expression}
\end{multline}
where
\begin{align}
  & M(z) = \left(\begin{array}{l}
    (2 e^{\frac{\pi i }{2}} z)^\alpha \, \psi(\alpha + \beta, 1+2\alpha, 2e^{\frac{\pi i}{2}} z ) e^{\pi i (\alpha + 2\beta )} e^{-iz} \vspace{2mm} \\
    - \frac{\Gamma(1+\alpha + \beta)}{\Gamma(\alpha - \beta)} (2 e^{\frac{\pi i }{2}} z)^{-\alpha}  \, \psi(1-\alpha + \beta, 1-2\alpha, 2e^{\frac{\pi i}{2}} z ) e^{\pi i (\beta - 3 \alpha)} e^{-iz}
  \end{array} \right. \nonumber \\
  &  \left. \qquad \qquad \qquad \begin{array}{l}
    - \frac{\Gamma(1+\alpha - \beta)}{\Gamma(\alpha + \beta)} (2 e^{-\frac{\pi i }{2}} z)^\alpha \, \psi(1+\alpha - \beta, 1+2\alpha, 2e^{-\frac{\pi i}{2}} z ) e^{\pi i (2\alpha + \beta )} e^{iz} \vspace{2mm} \\
    (2 e^{-\frac{\pi i }{2}} z)^{-\alpha} \, \psi(-\alpha - \beta, 1-2\alpha, 2e^{-\frac{\pi i}{2}} z ) e^{ - 2\alpha \pi i } e^{iz}
  \end{array}\right) \label{eq:M-fun-def}
\end{align}
with branch cuts chosen such that $-\frac{\pi }{2} < \arg z < \frac{3\pi}{2}$; see \eqref{Phi-C-solution} below. The confluent hypergeometric function $\psi(a,b,z)$ is the unique solution to the Kummer's equation
$$z\frac{d^2y}{dz^2}+(b-z)\frac{dy}{dz}-ay=0$$
satisfying the boundary condition $\psi(a,b,z)\sim  z^{-a}$ as $z\to \infty$ and $-\frac{3\pi }{2} < \arg z < \frac{3\pi}{2}$; see \cite[Chapter 13]{DLMF}.

If $2\alpha\notin \mathbb{N}_0$ and $\alpha>-\frac {1}{2}$, we obtain from \eqref{Phi0-limit-relation},
\eqref{Psi-neg-sz-in-II} and \eqref{Phi-CHF-expression} that
\begin{align} \label{Phi0-limit-(-s)}
  \Phi_0(s) & = \begin{pmatrix}
1 & 0
\\
ia(s) & 1
\end{pmatrix} (-s)^{-\sigma_3/4} \begin{pmatrix}
    1 & 0 \\ \frac{i(-s)^{3/2}}{4} & 1
  \end{pmatrix} D_{\alpha,\beta}(0) E_{\alpha,\beta}(0) 2^{\beta \sigma_3} e^{\frac{\pi i \beta}{2} \sigma_3} \begin{pmatrix}
    e^{- \pi i (\alpha + 2\beta)} & 0 \\ 0 & e^{\pi i (2\alpha + \beta)}
  \end{pmatrix} \nonumber \\
  & \quad \times   \displaystyle \lim_{\zeta \to 0 , \atop \zeta \in \Omega_2} \left[ M((-s)^{3/2} f(\zeta)) \begin{pmatrix}
    e^{\pi i \alpha} & 0 \\ 2i \sin(\pi (\beta - \alpha)) & e^{- \pi i \alpha}
  \end{pmatrix} e^{\frac{\pi i \beta}{2} \sigma_3} E_2^{-1}  \zeta^{-\alpha \sigma_3} \right]
\nonumber
\\
& \quad \times (-s)^{-\alpha \sigma_3}.
\end{align}
To study the above limit, we start with some properties of the confluent hypergeometric function $\psi(a,b,z)$ listed below:
\begin{align}
  \psi(a,b,z) & =  \frac{\Gamma(1-b)}{\Gamma(a-b+1)} \phi(a,b,z)+ \frac{\Gamma(b-1)}{\Gamma(a)} \phi(a-b+1,2-b,z)z^{1-b}, \label{eq:K-relation-1} \\
  \phi(a,b,z) & = \frac{\Gamma(b)}{\Gamma(b-a)} e^{i\pi a}\psi(a,b,z)+ \frac{\Gamma(b)}{\Gamma(a)}e^{i\pi(a-b)} \phi(b-a,b,e^{-i\pi}z)e^{z}; \label{eq:K-relation-2}
\end{align}
see \cite[Equations 13.2.41 and 13.2.42]{DLMF}. In the above formulas, $\phi(a,b,z)$ is the other confluent hypergeometric function defined by
\begin{equation}\label{eq:K}
  \phi(a,b,z) = \sum_{n=0}^{\infty}\frac{(a)_n}{(b)_n}\frac{z^n}{n!},
 \end{equation}
which is an entire function in $z$ and satisfies the relation
 \begin{equation}\label{eq:K-relation-3}
  \phi(a,b,z) = \phi(b-a,b,-z)e^{z};
 \end{equation}
see \cite[Equations 13.2.2 and 13.2.39]{DLMF}. Since
\begin{equation} \label{E2-1-product-matrix-1}
  \begin{pmatrix}
    e^{\pi i \alpha} & 0 \\ 2i \sin(\pi (\beta - \alpha)) & e^{- \pi i \alpha}
  \end{pmatrix} e^{\frac{\pi i \beta}{2} \sigma_3} E_2^{-1} = \begin{pmatrix}
    e^{\frac{\beta \pi i}{2} } & * \\
    e^{(\frac{3 \beta}{2} - 2 \alpha) \pi i } & *
  \end{pmatrix},
\end{equation}
it follows from \eqref{eq:M-fun-def} and \eqref{eq:K-relation-2} that
\begin{align}
  & \left[ M(z) \begin{pmatrix}
    e^{\pi i \alpha} & 0 \\ 2i \sin(\pi (\beta - \alpha)) & e^{- \pi i \alpha}
  \end{pmatrix} e^{\frac{\pi i \beta}{2} \sigma_3} E_2^{-1}  \right]_{11} \nonumber \\
  &  = (2 e^{\frac{\pi i }{2}} z)^\alpha \, \psi(\alpha + \beta, 1+2\alpha, 2e^{\frac{\pi i}{2}} z ) e^{\pi i (\alpha + 2\beta )} e^{-iz}  e^{\frac{\beta \pi i}{2} }  \nonumber \\
  &  \quad - \frac{\Gamma(1+\alpha - \beta)}{\Gamma(\alpha + \beta)} (2 e^{-\frac{\pi i }{2}} z)^\alpha \, \psi(1+\alpha - \beta, 1+2\alpha, 2e^{-\frac{\pi i}{2}} z ) e^{\pi i (2\alpha + \beta )} e^{iz}  e^{(\frac{3 \beta}{2} - 2 \alpha) \pi i } \nonumber \\
  &  = 2^{\alpha}e^{(\frac{ \alpha}{2} + \frac{3 \beta}{2} )  \pi i }\frac{\Gamma(1+\alpha - \beta)}{\Gamma(1+2\alpha)}\phi(\alpha+\beta,1+2\alpha,2e^{\frac{\pi i}{2}} z )e^{-iz}z^{\alpha}. \label{eq:M-11-entry}
\end{align}
We next evaluate $(2,1)$ entry of the matrix $M(z) \begin{pmatrix}
    e^{\pi i \alpha} & 0 \\ 2i \sin(\pi (\beta - \alpha)) & e^{- \pi i \alpha}
  \end{pmatrix} e^{\frac{\pi i \beta}{2} \sigma_3} E_2^{-1}$. From  \eqref{eq:K-relation-1}  and  \eqref{eq:K-relation-3}, it is easily seen that
\begin{align*}
  \psi(1-\alpha + \beta, 1-2\alpha, 2e^{\frac{\pi i}{2}} z ) & = \frac{\Gamma(2\alpha)}{\Gamma(1+\alpha+\beta)} \phi(1-\alpha+\beta, 1-2\alpha, 2e^{\frac{\pi i}{2}} z) \nonumber \\
   & \quad + \frac{\Gamma(-2\alpha)}{\Gamma(1-\alpha+\beta)} \phi(1+\alpha+\beta, 1+2\alpha, 2e^{\frac{\pi i}{2}} z) \, ( 2e^{\frac{\pi i}{2}} z)^{2 \alpha}, \\
   \psi(-\alpha - \beta, 1-2\alpha, 2e^{-\frac{\pi i}{2}} z ) & = \frac{\Gamma(2\alpha)}{\Gamma(\alpha-\beta)} \phi(1-\alpha+\beta, 1-2\alpha, 2e^{\frac{\pi i}{2}} z) \, e^{2iz} \nonumber \\
   & \quad + \frac{\Gamma(-2\alpha)}{\Gamma(-\alpha-\beta)} \phi(1+\alpha+\beta, 1+2\alpha, 2e^{\frac{\pi i}{2}} z) \, ( 2e^{-\frac{\pi i}{2}} z)^{2 \alpha} e^{-2iz}.
\end{align*}
This, together with the reflection formula
$$\Gamma(z) \Gamma(1-z) = \pi/ \sin(\pi z),\qquad z\neq 0, \pm 1, \pm2, \ldots,$$ implies that
\begin{align}
  & \left[ M(z) \begin{pmatrix}
    e^{\pi i \alpha} & 0 \\ 2i \sin(\pi (\beta - \alpha)) & e^{- \pi i \alpha}
  \end{pmatrix} e^{\frac{\pi i \beta}{2} \sigma_3} E_2^{-1}  \right]_{21} \nonumber \\
  &  = - \frac{\Gamma(1+\alpha + \beta)}{\Gamma(\alpha - \beta)} (2 e^{\frac{\pi i }{2}} z)^{-\alpha}  \, \psi(1-\alpha + \beta, 1-2\alpha, 2e^{\frac{\pi i}{2}} z ) e^{\pi i (\beta - 3 \alpha)} e^{-iz}  e^{\frac{\beta \pi i}{2} }  \nonumber \\
  &   \quad +(2 e^{-\frac{\pi i }{2}} z)^{-\alpha} \, \psi(-\alpha - \beta, 1-2\alpha, 2e^{-\frac{\pi i}{2}} z ) e^{ - 2\alpha \pi i } e^{iz}  e^{(\frac{3 \beta}{2} - 2 \alpha) \pi i } \nonumber \\
  &  = 2^{\alpha}e^{(-\frac{7 \alpha}{2} + \frac{\beta}{2} )  \pi i }\frac{\Gamma(1+\alpha + \beta)}{\Gamma(1+2\alpha)}\phi(1+\alpha+\beta,1+2\alpha,2e^{\frac{\pi i}{2}} z )e^{-iz}z^{\alpha}.  \label{eq:M-21-entry}
\end{align}
Inserting \eqref{eq:M-11-entry} and \eqref{eq:M-21-entry} into \eqref{Phi0-limit-(-s)}, we obtain from \eqref{E-alpha-beta-0}, \eqref{eq:fzero}, the fact (see \cite[Equations 13.2.13 and 13.2.16]{DLMF})
\begin{equation}
\phi(a,b,z ) \sim 1 , \qquad \psi(a,b,z) \sim \frac{\Gamma(b-1)}{\Gamma(a)}z^{1-b}, \qquad z\to 0,
\end{equation}
and \eqref{D-ab-est}, that
\begin{align*}
  \left(\Phi_0\right)_{11}(s) & =  (-s)^{\frac{\alpha}{2} - \frac{1}{4}}  2^{\alpha-1/2}  e^{(-\alpha + \frac{\beta}{2})\pi i } \biggl[2^{3\beta}\frac{\Gamma(1+\alpha-\beta)}{\Gamma(1+2\alpha)}|s|^{\frac{3}{2}\beta}\exp(\frac{2}{3}i|s|^{\frac{3}{2}}-\frac{1}{4}\pi i -\frac{\alpha}{2}\pi i )\nonumber\\
&\quad + 2^{-3\beta}\frac{\Gamma(1+\alpha+\beta)}{\Gamma(1+2\alpha)}|s|^{-\frac{3}{2}\beta}\exp(-\frac{2}{3}i|s|^{\frac{3}{2}}+\frac{1}{4}\pi i+\frac{\alpha}{2}\pi i)\biggr](1+O(|s|^{-3/2}))
\end{align*}
for large negative $s$. Recall that $\beta$ is purely imaginary, the above formula can be further simplified as
\begin{multline}\label{Phi0-limit-(-s)-pure I beta}
 \left(\Phi_0\right)_{11}(s)  = \frac{ |\Gamma(1+\alpha-\beta)|e^{ (- \alpha + \frac{\beta}{2})\pi i }2^{\alpha +\frac{1}{2}} (-s)^{\frac{\alpha}{2} - \frac{1}{4}} }{\Gamma(1+2\alpha)}   \\
 \times \cos\left(\frac{1}{2}\theta(s)+\arg\Gamma(1+\alpha-\beta)-\frac{\pi}{4}\right)(1+O(|s|^{-3/2})),
\end{multline}
where $\theta(s)$ is defined in \eqref{theta}. The asymptotic formula of $\ln |(\Phi_0)_{11}(s)|$ in \eqref{eq:asy-phi-0-negative infty-omega} then follows directly from \eqref{Phi0-limit-(-s)-pure I beta}.

If $2\alpha \in \mathbb{N}_0$, due to \eqref{Phi0-limit-relation}, $\Phi_0(s)$ takes a similar expression as \eqref{Phi0-limit-(-s)}, except that the limit is modified to be
\begin{equation}
  \lim_{\zeta \to 0 , \atop \zeta \in \Omega_2} \Bigg[ M((-s)^{3/2} f(\zeta))
   \begin{pmatrix}
    e^{\pi i \alpha} & 0
    \\
    2i \sin(\pi (\beta - \alpha)) & e^{- \pi i \alpha}
  \end{pmatrix}
  e^{\frac{\pi i \beta}{2} \sigma_3} E_2^{-1} \left (I-\frac{\kappa}{2\pi i}\ln \zeta ~\sigma_+\right)   \zeta^{-\alpha \sigma_3} \Bigg].
\end{equation}
In view of the expression of $E_2$ in \eqref{E2-def}, we have
\begin{equation} \label{E2-1-product-matrix-2}
  \begin{pmatrix}
    e^{\pi i \alpha} & 0 \\ 2i \sin(\pi (\beta - \alpha)) & e^{- \pi i \alpha}
  \end{pmatrix} e^{\frac{\pi i \beta}{2} \sigma_3} E_2^{-1} \left (I-\frac{\kappa}{2\pi i}\ln \zeta ~\sigma_+\right )
= e^{\pi i \alpha} \begin{pmatrix}
    e^{\frac{\beta \pi i}{2} } & * \\
    e^{(\frac{3 \beta}{2} - 2 \alpha) \pi i } & *
  \end{pmatrix},
\end{equation}
whose first column is $e^{\pi i \alpha}$ multiplying the first column in the  matrix in \eqref{E2-1-product-matrix-1}. This implies that the asymptotics in \eqref{Phi0-limit-(-s)-pure I beta} should include an additional constant $e^{\pi i \alpha}$ when $2\alpha \in \mathbb{N}_0$, which also leads to \eqref{eq:asy-phi-0-negative infty-omega}.

This completes the proof of Proposition \ref{Pro: asy phi negative infty-omega}.
\end{proof}


\section{Proofs of main theorems} \label{Sec:Main-proof}

In this section, we will prove main results of this paper and begin with some useful differential identities for the integrals.

\subsection{Differential identities for the integrals}
As the asymptotics derived in Section \ref{sec:asymptotics} is for the function $w(s;2\alpha+\frac12,\omega)$ instead of for $q(s;2\alpha+\frac12,\omega)$, it is convenient for us to consider the following integral at this stage:
\begin{align}
  \widetilde{I}_1(s;\alpha,\omega) & :=\int_{c} ^{+\infty} \left( w(\tau;2\alpha+\frac{1}{2},\omega)+\sqrt{\tau}+\frac{\alpha+\frac{1}{4}}{\tau} \right) d\tau \nonumber \\
  &  \quad + \textrm{P.V.}\int_{s} ^{c}w(\tau;2\alpha+\frac{1}{2},\omega)d\tau +\frac 23 c^{\frac 32}+(\alpha+\frac{1}{4}) \ln c, \qquad c>0, \label{def:I12}
\end{align}
and introduce one more auxiliary integral
\begin{align}\label{def:I-3}
I_3(s;\alpha,\omega) &:= \int_{c}^{+\infty} \left( u(\tau;2\alpha,\omega)w'(\tau;2\alpha+\frac{1}{2},\omega)+H(\tau;2\alpha,\omega)+2\alpha\sqrt{\tau}+\frac{\alpha(2\alpha+1)}{2\tau} \right) d\tau \nonumber \\
&  \quad + \textrm{P.V.} \int_s^c \left( u(\tau; 2\alpha,\omega)w'(\tau;2\alpha+\frac{1}{2},\omega)+H(\tau;2\alpha,\omega) \right) d\tau
\nonumber
\\
& \quad +\frac{4}{3}\alpha c^{3/2}+\alpha(\alpha+\frac{1}{2})\ln c,
\end{align}
where $c$ is chosen such that the integrand over $(c, +\infty)$ is pole-free, and the Cauchy principle value is used if the integrand over $(s,c)$ has poles. In particular, it follows from \eqref{def:u} and \eqref{eq:Hamiltonian system} that $u(s; 2\alpha,\omega)w'(s;2\alpha+\frac{1}{2},\omega)$ only has simple poles on the real axis, which implies well-posedness of the Cauchy principal value in \eqref{def:I-3}. From the relation between $w$ and $q$ in \eqref{def:w}, it is easily seen that
\begin{equation} \label{eq: relation-I11}
  I_1(s;\alpha,\omega) = - \widetilde{I}_1(-2^{-\frac{1}{3}} s ;\alpha,\omega) - \frac{\alpha + \frac{1}{4}}{3} \ln2,
\end{equation}
where the integral $I_1$ is defined in \eqref{def:I11-new} and \eqref{def:I12-new}.

The aim of this section is to establish the following lemma which gives us the relations among the integrals $\widetilde{I}_1(s;\alpha,\omega)$,  $I_2(s;\alpha,\omega)$ and $I_3(s;\alpha,\omega)$.

\begin{lem}\label{lem: total integrals of w, v}
For $c>0$ and $\omega=e^{-2\beta \pi i} \geq 0$, the integrals $I_2$, $\widetilde I_1$  and $I_3$ defined in \eqref{def:I-2}, \eqref{def:I12} and \eqref{def:I-3} are convergent and independent of $c$. Moreover, we have the following identities
\begin{align}\label{eq:I-1-Psi}
\widetilde{I}_1(s;\alpha,\omega)& =-(2\alpha+\frac{1}{2})\ln2   -\ln|\left(\Phi_0\right)_{11}(s)|,
\\
\label{eq:I-2-deff}
\frac{\partial}{\partial\alpha}I_3(s;\alpha,\omega)& = -u(s) \frac{\partial}{\partial \alpha}w(s)+2\widetilde{I}_1(s;\alpha,\omega),
\\
\frac{\partial}{\partial\beta}I_3(s;\alpha,e^{-2\beta \pi i}) & =-u(s) \frac{\partial}{\partial \beta}w(s),\label{eq:I-2-deff2}
\end{align}
and
\begin{align}
  I_2(s;\alpha,\omega) & =-\frac{1}{3}\left(u(s;2\alpha,\omega)w(s;2\alpha+\frac{1}{2}, \omega)+2sH(s;2\alpha,\omega) +2\alpha^2 +\alpha \right) \nonumber \\
  & \quad +2\alpha \widetilde{I}_1(s; \alpha,\omega) -I_3(s;\alpha,\omega). \label{eq: integral H}
\end{align}
\end{lem}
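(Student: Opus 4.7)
The plan is to dispatch the claims in four stages, following the natural logical order.

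\textbf{Stage 1 (convergence and $c$-independence).} For each of $I_2$, $\widetilde I_1$, $I_3$, the tail at $+\infty$ is made integrable by the counter-terms precisely because of Proposition~\ref{Pro: asy large s}: $H + 2\alpha\sqrt{\tau} + \alpha^2/\tau = O(\tau^{-5/2})$ by \eqref{eq:H-asy}, $w + \sqrt{\tau} + (\alpha+\tfrac14)/\tau = O(\tau^{-5/2})$ by \eqref{eq:w-asy}, and a short computation using $u = \alpha/\sqrt{\tau}+O(\tau^{-2})$ and $w_s = -\tfrac{1}{2\sqrt{\tau}}+O(\tau^{-5/2})$ shows that $uw_s + H + 2\alpha\sqrt{\tau} + \alpha(2\alpha+1)/(2\tau) = O(\tau^{-5/2})$. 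The Cauchy principal values on $(s,c)$ are well-defined because the only singularities of $w$ and of $uw_s$ on the real line are simple poles (inherited from the simple poles of $q$ with residues $\pm 1$). Independence of $c$ is a direct check: differentiating each definition in $c$ yields zero because the integrand contribution at $\tau=c$ cancels the derivative of the boundary correction by construction.

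\textbf{Stage 2 (the identity \eqref{eq:I-1-Psi}).} By Lemma~\ref{lem: integrals of w}, $w(\tau) = \frac{d}{d\tau}\ln|(\Phi_0)_{11}(\tau)|$, and zeros of $(\Phi_0)_{11}$ correspond to simple poles of $w$ with residue $+1$, so the P.V.\ integral evaluates cleanly as $\ln|(\Phi_0)_{11}(c)| - \ln|(\Phi_0)_{11}(s)|$. Adding the $(c,\infty)$ piece and letting the upper limit tend to $+\infty$, the counter-terms $\sqrt{\tau} + (\alpha+\tfrac14)/\tau$ and boundary contributions $\tfrac23 c^{3/2} + (\alpha+\tfrac14)\ln c$ are designed to match the divergent part of $\ln|(\Phi_0)_{11}(\tau)|$ given by \eqref{eq:asy-phi-0}, leaving only the constant $-(2\alpha+\tfrac12)\ln 2$. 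After cancellation of $\ln|(\Phi_0)_{11}(c)|$ between the two pieces, we arrive at \eqref{eq:I-1-Psi}.

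\textbf{Stage 3 (the differential identities \eqref{eq:I-2-deff} and \eqref{eq:I-2-deff2}).} I differentiate $I_3$ under the integral sign with respect to $\alpha$ (resp.\ $\beta$, where $\omega = e^{-2\beta\pi i}$) and substitute the identity \eqref{eq:differential alpha}:
\begin{equation*}
(uw_\tau + H)_\alpha = (uw_\alpha)_\tau + 2w, \qquad (uw_\tau + H)_\omega = (uw_\omega)_\tau.
\end{equation*}
The terms $(uw_\alpha)_\tau$ (resp.\ $(uw_\beta)_\tau$) integrate to a total differential whose boundary contribution at $+\infty$ vanishes, using the asymptotics $u \sim \alpha/\sqrt{\tau}$, $w_\alpha \sim -1/\tau$, and the exponential suppression $w_\beta = O(e^{-\tfrac{4}{3}\tau^{3/2}})$ inferred from the $\omega$-dependent term in \eqref{eq:u-asy}. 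What remains is $-u(s)w_\alpha(s)$ plus $2\widetilde I_1(s;\alpha,\omega)$ (the factor $2$ and the counter-terms in the $\alpha$-derivative of $I_3$ assemble precisely into $2\widetilde I_1$), and similarly $-u(s)w_\beta(s)$ in the $\beta$-case.

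\textbf{Stage 4 (the identity \eqref{eq: integral H}).} Here I integrate \eqref{eq:total differential} rather than guess-and-check. Writing
\begin{equation*}
H = \tfrac13(uw + 2\tau H)_\tau - (uw_\tau + H) + 2\alpha w
\end{equation*}
and then decomposing the counter-term $2\alpha\sqrt{\tau} + \alpha^2/\tau$ as $-[2\alpha\sqrt{\tau} + \alpha(2\alpha+1)/(2\tau)] + 2\alpha[\sqrt{\tau} + (\alpha+\tfrac14)/\tau]$ matches the regularizations in $I_3$ and $2\alpha\widetilde I_1$ exactly. Combining the pieces, the integral of the total derivative $\tfrac13(uw + 2\tau H)_\tau$ produces the boundary contribution $-\tfrac13(uw(s) + 2sH(s))$; the limit at $+\infty$ of $\tfrac13(uw + 2\tau H)$, together with the leftover boundary constants $\tfrac43\alpha c^{3/2} + \alpha^2\ln c$ from $I_2$, collapses using the asymptotics $uw \to -\alpha$ and $2\tau H = -4\alpha\tau^{3/2} - 2\alpha^2 + O(\tau^{-3/2})$ to the constant $-\tfrac13(2\alpha^2 + \alpha)$. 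The main obstacle is keeping track of the bookkeeping of divergent boundary pieces in this cancellation; as a sanity check one can also verify the identity by $\partial_s$-differentiation (using $\partial_s I_2 = -H$, $\partial_s\widetilde I_1 = -w$, $\partial_s I_3 = -(uw_s+H)$, together with \eqref{eq:total differential} rewritten as $(uw+2sH)_s = 3H+3u^2$) and matching both sides as $s\to +\infty$ via Proposition~\ref{Pro: asy large s} and \eqref{eq:asy-phi-0}.
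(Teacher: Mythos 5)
Your proof is correct and follows essentially the same route as the paper's: regularization via the large-$s$ asymptotics of Proposition~\ref{Pro: asy large s}, integrating the anti-derivative identity of Lemma~\ref{lem: integrals of w} and invoking \eqref{eq:asy-phi-0} for \eqref{eq:I-1-Psi}, differentiating $I_3$ under the integral using \eqref{eq:differential alpha} for \eqref{eq:I-2-deff}--\eqref{eq:I-2-deff2}, and integrating \eqref{eq:total differential} from $s$ to $L$ for \eqref{eq: integral H}. One small arithmetic slip in Stage 4: the decomposition $-\bigl[2\alpha\sqrt{\tau}+\tfrac{\alpha(2\alpha+1)}{2\tau}\bigr]+2\alpha\bigl[\sqrt{\tau}+\tfrac{\alpha+1/4}{\tau}\bigr]$ equals $\alpha^2/\tau$ rather than $2\alpha\sqrt{\tau}+\alpha^2/\tau$; the missing $2\alpha\sqrt{\tau}$ is precisely the piece that, upon integration, supplies the $\tfrac43\alpha L^{3/2}$ correction making the boundary limit at $L\to+\infty$ finite, but since your bookkeeping implicitly uses it, the conclusion stands.
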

\begin{proof}
The convergence of the relevant integrals over $(c,+\infty)$ in  \eqref{def:I12} and \eqref{def:I-3} follows directly from the large positive $s$ asymptotics of the functions $u(s;2\alpha,\omega)$, $w(s;2\alpha+\frac {1}{2},\omega)$ and $H(s;2\alpha,\omega)$ established in \eqref{eq:u-asy}--\eqref{eq:H-asy}. The existence of the Cauchy principle values therein follows from the fact that the integrants posses only simple poles on the real axis,  as  shown after \eqref{def:I12-new} and \eqref{def:I-3}.
Moreover, it is readily seen that the derivative of each integral with respect to $c$ is 0, which implies that they are all $c$-independent.

To show \eqref{eq:I-1-Psi}, we integrate on both sides of \eqref{eq:w-anti-derivative} from $s$ to $L$ and obtain after a straightforward calculation that
\begin{align}
 &\textrm{P.V.} \int_{s} ^{L}\frac{d}{d\tau}\ln\left(|\left(\Phi_0\right)_{11}(\tau)|\right)d\tau=\ln|\left(\Phi_0\right)_{11}(L)|-\ln|\left(\Phi_0\right)_{11}(s)|
\nonumber
\\
 & = \int_{c} ^{L} \left( w(\tau)+\sqrt{\tau}+\frac{\alpha+\frac{1}{4}}{\tau} \right) d\tau + \textrm{P.V.} \int_{s} ^{c}w(\tau)d\tau +\frac 23 c^{\frac 32}\nonumber \\
  & \quad+(\alpha+\frac{1}{4}) \ln c -\frac 23 L^{\frac 32}-(\alpha+\frac{1}{4}) \ln L,
\end{align}
where $c>0$ is large enough such the integrand $w(s)$ is pole-free on $(c,+\infty)$, which is possible in view of \eqref{eq:w-asy}. Since $w(s)$ only has simple poles on the real axis, the Cauchy principal value of the integral for $w(\tau)$ and thus  that of $\frac{d}{d\tau}\ln\left|\left(\Phi_0\right)_{11}(\tau)\right|$ on $[s,L]$ exists if $s,L\in \mathbb{R}$ are not poles of $w$.
By setting $L \to +\infty$, it follows from \eqref{eq:asy-phi-0} that
\begin{align}
  \widetilde{I}_1(s;\alpha,\omega) &= \lim_{L \to +\infty} \left( \ln|\left(\Phi_0\right)_{11}(L)| +\frac 23 L^{\frac 32}+(\alpha+\frac{1}{4}) \ln L \right)-\ln|\left(\Phi_0\right)_{11}(s)|
\nonumber
\\
&=-(2\alpha+\frac{1}{2})\ln2   -\ln|\left(\Phi_0\right)_{11}(s)|,
\end{align}
which is \eqref{eq:I-1-Psi}.

Next, by taking derivative with respect to $\alpha$ on both sides of \eqref{def:I-3}, it follows from the first equation in \eqref{eq:differential alpha} that
\begin{equation}
  \frac{\partial}{\partial\alpha}I_3(s;\alpha,\omega)= \lim_{\tau \to +\infty}\left(u(\tau) \frac{\partial}{\partial \alpha}w(\tau)\right) -u(s) \frac{\partial}{\partial \alpha}w(s)+2\widetilde{I}_1(s;\alpha,\omega).
\end{equation}
This, together with the large positive $s$ asymptotics of $u$ and $w$ given in \eqref{eq:u-asy} and \eqref{eq:w-asy}, gives us \eqref{eq:I-2-deff}. In a similar manner, one can prove \eqref{eq:I-2-deff2} with the aid of the second equation in \eqref{eq:differential alpha}, \eqref{eq:u-asy} and \eqref{eq:w-asy}.

Finally, we integrate on both sides of \eqref{eq:total differential} and obtain
\begin{align*}
  & u(L)w(L)+2LH(L)- u(s)w(s)-2sH(s) \\
  & = 3~ \textrm{P.V.}  \int_s^L (u(\tau)w'(\tau)+H(\tau)) d\tau + 3 ~ \textrm{P.V.}  \int_s^L (H(\tau)-2\alpha w(\tau)) d \tau.
\end{align*}
In view of the integrals $I_2$, $\widetilde I_1$  and $I_3$ defined in \eqref{def:I-2}, \eqref{def:I12} and \eqref{def:I-3}, respectively, we have, after letting $L \to +\infty$ in the above formula, that
\begin{align}
  & 3 I_3(s;\alpha,\omega) + 3 I_2(s;\alpha,\omega) - 6\alpha  \widetilde{I}_1(s;\alpha,\omega) \nonumber \\
  &  = \lim_{L \to +\infty}  \left(  u(L)w(L)+2LH(L) + 4\alpha L^{3/2} \right) - u(s)w(s)-2sH(s) \nonumber \\
  &=-\alpha-2\alpha^2- u(s)w(s)-2sH(s),
\end{align}
which is equivalent to  \eqref{eq: integral H}. In the second equality of the above formula, we have made use of the asymptotics established in Proposition \ref{Pro: asy large s}.

This completes the proof of Lemma \ref{lem: total integrals of w, v}.
\end{proof}

\subsection{Proof of Theorem \ref{thm:totalintPII}}

We have proved the convergence and $c$-independence of the integral $\widetilde{I}_1(s;\alpha,\omega)$ in Lemma \ref{lem: total integrals of w, v}. Moreover, from \eqref{eq:I-1-Psi} and the asymptotics of $\ln|\left(\Phi_0\right)_{11}(s)|$ given in \eqref{eq:asy-phi-0-negative infty} and \eqref{eq:asy-phi-0-negative infty-omega}, it follows that, as $s\to-\infty$,
\begin{align}
  \widetilde{I}_1(s;\alpha,0)& =
 -(2\alpha+\frac{1}{2})\ln |s|-\frac{1}{2}\ln (2\pi)+\ln \Gamma(1+2\alpha)+O(|s|^{-3/2}), \label{eq:total integral PII-old} \\
 \widetilde{I}_1(s;\alpha,e^{-2\beta \pi i}) & =
 -(\frac{\alpha}{2} - \frac{1}{4})\ln |s|-(3 \alpha+1) \ln 2  -\frac{\beta}{2} \pi i  + \ln \frac{\Gamma(1+2\alpha)}{|\Gamma(1+\alpha-\beta)|} \nonumber \\
 & \quad -\ln\left|\cos\left(\frac{\theta(s)}{2}+\arg\Gamma(1+\alpha-\beta)-\frac{\pi}{4}\right)\right| \label{eq:total integral PII-omega-old}
 +O(|s|^{-3/2}),
\end{align}
where $\theta(s)$ is defined in \eqref{theta}. A combination of the relation between $I_1$ and $\widetilde{I}_1$ in \eqref{eq: relation-I11} and the above two formulas then gives us Theorem \ref{thm:totalintPII}.

This completes the proof of Theorem \ref{thm:totalintPII}. \qed

\subsection{Proof of Theorem \ref{thm:Hamil}}

The convergence and $c$-independence of $I_2(s;\alpha,\omega)$ have already been proved in Lemma \ref{lem: total integrals of w, v}.
To show the asymptotic behavior of $I_2(s;\alpha,\omega)$ for large negative $s$, we observe from \eqref{eq: integral H}, \eqref{eq:total integral PII-old} and \eqref{eq:total integral PII-omega-old} that it suffices to derive the asymptotics of the auxiliary integral $I_3(s;\alpha,\omega)$ defined in \eqref{def:I-3}.

We begin with the proof of \eqref{eq:total integral H}. Inserting the asymptotics for $u(s;2\alpha,0)$, $w(s;2\alpha+\frac {1}{2}, 0)$ and $\widetilde{I}_1(s;\alpha,0)$ in \eqref{eq:asy-u-negative infty}, \eqref{eq:asy-w-negative infty} and \eqref{eq:total integral PII-old} into \eqref{eq:I-2-deff}, it is easily seen that
\begin{equation}\label{eq:I-3 diff}
\frac{\partial I_3(s;\alpha,0)}{\partial \alpha}= -2(2\alpha+\frac{1}{2})\ln |s|+2\ln \Gamma(1+2\alpha)+1-\ln (2\pi)+O(|s|^{-3/2}), \qquad s \to -\infty.
\end{equation}
Integrating the above formula with respect to $\alpha$ yields 
\begin{equation}\label{eq:I-3 }
I_3(s;\alpha,0)= I_3(s;0,0)-(2\alpha^2+\alpha)\ln|s|-(\ln(2\pi)-1)\alpha+2\int_0^{\alpha}\ln \Gamma(1+2\tau)d\tau+O(|s|^{-3/2}),
\end{equation}
as $s \to -\infty$. In view of \eqref{eq: integral H} with $\alpha =0$, we have
\begin{equation}\label{eq:I-3-inital }
I_3(s;0,0)=-\int_s^{\infty}H(\tau;0,0)d\tau-\frac{1}{3}\left(u(s;0,0)w(s;\frac{1}{2},0)+2sH(s;0,0)\right).
\end{equation} 
This, together with \eqref{eq:TW-large gap asy} and the asymptotics for $u(s;0,0)$ and $w(s;\frac{1}{2},0)$ in \eqref{eq:asy-u-negative infty} and \eqref{eq:asy-w-negative infty}, gives us
\begin{equation}
  I_3(s;0,0) = - \frac{s^3}{12} - \frac{1}{8} \ln |s| +c_0 + \frac{1}{6} + O(|s|^{-3/2}), \qquad s \to -\infty,
\end{equation}
where the constant $c_0$ is given in \eqref{def:c0}. For the integral involving the Gamma function in \eqref{eq:I-3 }, we recall the following relation between the Barnes' $G$-function and Loggamma integral
\begin{equation}\label{G-function}
  \ln G(1+x)=\frac{x}{2}\ln (2\pi)-\frac{x(x+1)}{2}+x\ln \Gamma(x+1)-\int_0^x \ln \Gamma(1+t)dt, \quad \mathrm{Re} ~x>-1;
\end{equation}
see \cite[Equation 5.17.4]{DLMF}. Thus, from \eqref{eq:I-3 } and the above two formulas, we have
\begin{align}
  I_3(s;\alpha,0) & = - \frac{s^3}{12} - ( 2 \alpha^2  +\alpha +  \frac{1}{8}) \ln |s| +c_0 + \frac{1}{6} \nonumber \\
   & \quad -2\alpha (\alpha + 1) + 2\alpha \ln \Gamma(1+2\alpha) - \ln G(1+2\alpha) +O(|s|^{-3/2}), \qquad s \to -\infty. \label{eq:I-3-alpha-asy}
\end{align}
The asymptotics of the integral for $H (\tau;2\alpha,0)$ in \eqref{eq:total integral H} then follows by substituting the asymptotics of $\widetilde{I}_1(s;\alpha,0)$ in \eqref{eq:total integral PII-old} and $I_3(s;\alpha,0)$ in \eqref{eq:I-3-alpha-asy} into the relation \eqref{eq: integral H}, where the asymptotics for $u(s;2\alpha,0)$, $w(s;2\alpha+\frac {1}{2}, 0)$ and $H(s;2\alpha,0)$ in \eqref{eq:asy-u-negative infty}, \eqref{eq:asy-w-negative infty} and \eqref{eq:asy-H-negative infty} are used as well.

The proof of the integral for $H(s; 2\alpha,e^{-2\beta \pi i})$ with $\beta i \in \mathbb{R}$ is a little bit involved. We first note from \eqref{u-special-solution} and \eqref{H-special-solution} that
 \begin{equation}\label{I-initial-beta}
  I_3(s;0,1) = 0.
\end{equation}
Thus,  we obtain from \eqref{eq:I-2-deff2} that
\begin{equation}\label{I3-initial-beta}
  I_3(s;0,e^{-2 \beta \pi i}) =- \int_0^{\beta} u(s;0,e^{-2 t \pi i}) \frac{\partial}{\partial t} w(s; \frac 12, e^{-2 t \pi i})dt.
\end{equation}
Substituting the asymptotics of $u(s)$ and $w(s)$ in \eqref{eq:asy-u-negative infty-omega} and \eqref{eq:asy-w-negative infty-omega} into the above formula,
we have, as $s\to-\infty$,
\begin{align}\label{I-3-alpha-0}
  & I_3(s;0,e^{-2 \beta \pi i})
  \nonumber
\\
  &= 2i \int_0^{\beta}t\left(i\ln\left(8|s|^{3/2}\right)-\frac{d}{dt}\arg(\Gamma(1-t)\right)dt+O\left(\frac{\ln |s|}{|s|^{3/2}}\right)\nonumber\\
  &= -\beta^2\ln\left(8|s|^{3/2}\right)-2i\beta\arg(1-\beta)+2i\int_0^{\beta}\arg\Gamma(1-t)dt+O\left(\frac{\ln |s|}{|s|^{3/2}}\right).
\end{align}
Similarly, from  the asymptotics of $u(s)$ and $w(s)$ in \eqref{eq:asy-u-negative infty-omega} and \eqref{eq:asy-w-negative infty-omega}, we have, as $s\to-\infty $,
 \begin{align}\label{uw-alpha}
 u(s)w_{\alpha}(s)
&= \frac{2|\alpha-\beta|\cos\left(\frac{\theta(s)}{2}+\arg\Gamma(\alpha-\beta)+\frac{\pi}{4}\right)}
  {\cos\left(\frac{\theta(s)}{2}+\arg\Gamma(1+\alpha-\beta)-\frac{\pi}{4}\right)}
\nonumber
\\
& \quad \times \frac{d}{d\alpha}\left(\frac{\theta(s)}{2}+\arg\Gamma(1+\alpha-\beta)-\frac{\pi}{4}\right)+O(|s|^{-3/2}) \nonumber\\
  &=\left( 2i\beta-2\alpha\tan\left(\frac{\theta(s)}{2}+\arg\Gamma(1+\alpha-\beta)-\frac{\pi}{4}\right)\right)
\nonumber
\\
& \quad \times
\frac{d}{d\alpha}\left(\frac{\theta(s)}{2}+\arg\Gamma(1+\alpha-\beta)-\frac{\pi}{4}\right)+O(|s|^{-3/2})\nonumber\\
  &=-\pi i\beta+2i\beta\frac{d}{d\alpha}\arg\Gamma(1+\alpha-\beta)
\nonumber
\\
& \quad
+2\alpha\frac{d}{d\alpha}\ln \left| \cos\left(\frac{\theta(s)}{2}+\arg\Gamma(1+\alpha-\beta)-\frac{\pi}{4}\right)\right|+O(|s|^{-3/2}).
\end{align}
This, together with \eqref{eq:I-2-deff} and \eqref{eq:total integral PII-omega-old}, gives us
 \begin{align}\label{I-3-alpha-omega-der}
 & \frac{\partial}{\partial \alpha}I_3(s;\alpha,e^{-2 \beta \pi i})=2\widetilde{I}_1(s;\alpha,e^{-2 \beta \pi i})-u(s)w_{\alpha}(s)\nonumber\\
  &=-(\alpha+\frac{1}{2})\ln|s|-(6\alpha+2)\ln2+2\ln\left(\frac{\Gamma(1+2\alpha)}{|\Gamma(1+\alpha-\beta)|}\right)-2i\beta\frac{d}{d\alpha}\arg\Gamma(1+\alpha-\beta) \nonumber \\
 &\quad -2\frac{d}{d\alpha}\left(\alpha\ln\left| \cos\left(\frac{\theta(s)}{2}+\arg\Gamma(1+\alpha-\beta)-\frac{\pi}{4}\right)\right|\right)+O(|s|^{-3/2}).
\end{align}
Integrating both sides of the above equation, we then obtain from the asymptotics of $I_3(s;0,e^{-2 \beta \pi i})$ in  \eqref{I-3-alpha-0} that
    \begin{align}\label{I-3-alpha-omega}
  &I_3(s;\alpha,e^{-2 \beta \pi i})
\nonumber
\\ &=  -\beta^2\ln\left(8|s|^{3/2}\right)-\frac{1}{2}\alpha(\alpha-1)\ln|s|-2\alpha\ln\left|\cos\left(\frac{\theta(s)}{2}+\arg\Gamma(1+\alpha-\beta)-\frac{\pi}{4}\right)\right|\nonumber\\
 &\quad  -(3\alpha+2)\alpha\ln 2-2i\beta\arg(1+\alpha-\beta)+2i\int_0^{\beta}\arg\Gamma(1-t)dt\nonumber\\
 &\quad +2\int_0^{\alpha}\ln \Gamma(1+2t)dt -2\int_0^{\alpha}\ln|\Gamma(1-\beta+t)|dt+O\left(\frac{\ln |s|}{|s|^{3/2}}\right).
\end{align}
Next, we have from the asymptotics of $u,w,H$ in \eqref{eq:asy-u-negative infty-omega}, \eqref{eq:asy-w-negative infty-omega} and \eqref{eq:asy-H-negative infty-omega} that, as $s\to-\infty $,
\begin{equation}\label{uw-sH}
u(s)w(s)+2sH(s)+2\alpha^2+\alpha=-4i\beta|s|^{3/2}+3(\alpha^2-\beta^2)+O(|s|^{-3/2}).
\end{equation}
Finally, substituting \eqref{eq:total integral PII-omega-old}, \eqref{I-3-alpha-omega}  and \eqref{uw-sH} into \eqref{eq: integral H}, it follows
\begin{align}\label{I-2-alpha-omega}
  I_2(s;\alpha,e^{-2 \beta \pi i}) &= \frac{4}{3}i\beta |s|^{3/2}+(\beta^2-\frac{1}{3}\alpha^2)\ln\left(8|s|^{3/2}\right)-2\alpha^2\ln2  +(\beta^2-\alpha^2)\nonumber\\
  &\quad +2\alpha\ln\Gamma(1+2\alpha)-2\alpha\ln|\Gamma(1+\alpha-\beta)|+2i\beta\arg(1+\alpha-\beta) \nonumber \\
& \quad -2\int_0^{\alpha}\ln \Gamma(1+2t)dt +2\int_0^{\alpha}\ln|\Gamma(1-\beta+t)|dt
\nonumber\\
 & \quad -2i\int_0^{\beta}\arg\Gamma(1-t)dt+O\left(\frac{\ln |s|}{|s|^{3/2}}\right).
\end{align}
Since $i\beta \in \mathbb{R}$, it follows that
\begin{align}\label{I-2-1}
  2\int_0^{\alpha}\ln|\Gamma(1-\beta+t)|dt&=\int_0^{\alpha}\ln\Gamma(1+\beta+t)+\ln\Gamma(1-\beta+t)dt
\nonumber\\
  &=\int_{\beta}^{\alpha+\beta}\ln\Gamma(1+t)dt+\int_{-\beta}^{\alpha-\beta}\ln\Gamma(1+t)dt,
\end{align}
and
\begin{align}\label{I-2-2}
  -2i\int_0^{\beta}\arg\Gamma(1-t)dt &=\int_0^{\beta}\ln\Gamma(1+t)-\ln\Gamma(1-t)dt
\nonumber\\
  &=\int_{0}^{\beta}\ln\Gamma(1+t)dt+\int_{0}^{-\beta}\ln\Gamma(1+t)dt.
\end{align}
Substituting \eqref{I-2-1} and \eqref{I-2-2} into \eqref{I-2-alpha-omega}, we then obtain \eqref{thm: H-integral} from the definition of Barnes' $G$-function given in \eqref{G-function}.

This completes the proof of Theorem \ref{thm:Hamil}. \qed

\begin{appendices}
\section{Airy, Bessel and confluent hypergeometric parametrices}

\subsection{Airy parametrix}\label{sec:Airy}
Define
 \begin{equation}\label{Phi-A-solution}
  \Phi^{(\Ai)}(z)=M\left\{
                 \begin{array}{ll}
                  \begin{pmatrix}
                        \Ai(z) & \Ai(e^{-\frac{2\pi i}{3}}  z) \\
                   \Ai'(z)& e^{-\frac{2\pi i}{3}} \Ai'(e^{-\frac{2\pi i}{3}} z) \\
                      \end{pmatrix}e^{-\frac{\pi i}{6}\sigma_3}, &z\in \texttt{I}, \\[.4cm]
                      \begin{pmatrix}
                        \Ai(z) & \Ai(e^{-\frac{2\pi i}{3}} z) \\
                      \Ai'(z)& e^{-\frac{2\pi i}{3}} \Ai'(e^{-\frac{2\pi i}{3}} z) \\
                     \end{pmatrix}e^{-\frac{\pi i}{6}\sigma_3}\begin{pmatrix}
                                                    1 & 0 \\
                                                        -1 & 1 \\
                                                     \end{pmatrix}
                 , & z\in \texttt{II}, \\[.4cm]
                            \begin{pmatrix}
                         \Ai(z) & -e^{-\frac{2\pi i}{3}} \Ai(e^{\frac{2\pi i}{3}} z) \\
                        \Ai'(z)&- \Ai'(e^{\frac{2\pi i}{3}} z) \\
                  \end{pmatrix}e^{-\frac{\pi i}{6}\sigma_3}
                  \begin{pmatrix}
                                                       1 & 0 \\
                                                      1 & 1 \\
                                                      \end{pmatrix}
                 , &z\in \texttt{III},
                 \\[.4cm]
                      \begin{pmatrix}
                    \Ai(z) & - e^{-\frac{2\pi i}{3}} \Ai(e^{\frac{2\pi i}{3}} z) \\
                        \Ai'(z)&- \Ai'(e^{\frac{2\pi i}{3}} z) \\
                     \end{pmatrix}e^{-\frac{\pi i}{6}\sigma_3}, & z\in \texttt{IV},
                 \end{array} \right.
   \end{equation}
where $\Ai(z)$ is the standard Airy function (cf. \cite[Chapter 9]{DLMF}),
$$M=\sqrt{2\pi} e^{\frac 1 6\pi i}
\begin{pmatrix}
1 & 0 \\
0 & -i
\end{pmatrix},
$$
and the regions $\texttt{I-IV}$ are indicated in Fig. \ref{fig:jumps-Phi-A}.
It is well-known that $\Phi^{(\Ai)}(z)$ is the solution of the following RH problem; cf.\cite{D}.

 \begin{figure}[t]
\begin{center}
   \setlength{\unitlength}{1truemm}
   \begin{picture}(100,70)(-5,2)
       \put(40,40){\line(-2,-3){16}}
       \put(40,40){\line(-2,3){16}}
       \put(40,40){\line(-1,0){30}}
       \put(40,40){\line(1,0){30}}

       \put(30,55){\thicklines\vector(2,-3){1}}
       \put(30,40){\thicklines\vector(1,0){1}}
       \put(50,40){\thicklines\vector(1,0){1}}
       \put(30,25){\thicklines\vector(2,3){1}}

       \put(39,36.3){$0$}
       \put(20,11){$\Sigma_4$}
       \put(20,69){$\Sigma_2$}
       \put(3,40){$\Sigma_3$}

       \put(72,40){$\Sigma_1$}
       \put(25,44){$\texttt{II}$}
       \put(25,34){$\texttt{III}$}
       \put(55,44){$\texttt{I}$}
       \put(55,33){$\texttt{IV}$}

       \put(40,40){\thicklines\circle*{1}}

   \end{picture}
   \caption{The jump contours  and  regions for the RH problem for $\Phi^{(\Ai)}$.}
   \label{fig:jumps-Phi-A}
\end{center}
\end{figure}

\subsection*{RH problem for $\Phi^{(\Ai)}$}

\begin{description}
  \item(a)   $ \Phi^{(\Ai)}(z)$ is analytic in
  $\mathbb{C}\setminus \{\cup^4_{j=1}\Sigma_j\cup\{0\}\}$, where the contours $\Sigma_j$, $j=1,2,3,4$, are illustrated in Fig.  \ref{fig:jumps-Phi-A}.

  \item(b)   $ \Phi^{(\Ai)}(z)$  satisfies the jump condition
\begin{equation}
   \Phi^{(\Ai)}_+(z)= \Phi^{(\Ai)}_{-}(z) \widetilde J_{i}(z), \quad  z\in \Sigma_i, \quad i=1,2,3,4,
   \end{equation}
where
$$
\widetilde J_{1}(z)=\begin{pmatrix}
1 & 1 \\
0 & 1
\end{pmatrix}, \quad
\widetilde J_{2}(z) = \widetilde J_{4}(z)=\begin{pmatrix}
1 & 0 \\
1 & 1
\end{pmatrix},
\quad
\widetilde J_{3}(z)=
\begin{pmatrix}
0 & 1 \\
-1 & 0
\end{pmatrix}.
$$

 \item(c)   $ \Phi^{(\Ai)}(z)$ satisfies the following asymptotic behavior  at infinity:
  \begin{equation}\label{phi at infinity}   \Phi^{(\Ai)}(z)=z^{-\frac{1}{4}\sigma_3}\frac{I+i\sigma_1}{\sqrt{2}}
   \left (I+O\left (\frac 1{z^{3/2}}\right )\right)e^{-\frac{2}{3} z^{3/2}\sigma_3}, \qquad  z \rightarrow \infty,
 \end{equation}
 where the matrices $\sigma_1$ and $\sigma_3$ are given in \eqref{def:Paulimatrice}.
\end{description}

\subsection{Bessel parametrix }\label{sec:Bessel}
Define
\begin{equation}\label{Phi-B-solution}
\Phi^{(\Bes)}_{\alpha}(z)=\left\{
                             \begin{array}{ll}
                               \begin{pmatrix}
I_{\alpha}(z^{1/2}) & \frac{i}{\pi}K_{\alpha}(z^{1/2}) \\
\pi iz^{1/2}I'_{\alpha}(z^{1/2}) &
-z^{1/2}K_{\alpha}'(z^{1/2})
\end{pmatrix}, & z\in \texttt{I},\\
                              \begin{pmatrix}
I_{\alpha}(z^{1/2}) & \frac{i}{\pi}K_{\alpha}(z^{1/2}) \\
\pi iz^{1/2}I'_{\alpha}(2z^{1/2}) &
-z^{1/2}K_{\alpha}'(z^{1/2})
\end{pmatrix}\begin{pmatrix}
                                1 & 0\\
                               -e^{\alpha \pi i} & 1
                                \end{pmatrix}, & z\in \texttt{II}, \\
                                \begin{pmatrix}
I_{\alpha}(z^{1/2}) & \frac{i}{\pi}K_{\alpha}(z^{1/2}) \\
\pi iz^{1/2}I'_{\alpha}(z^{1/2}) &
-z^{1/2}K_{\alpha}'(z^{1/2})
\end{pmatrix}\begin{pmatrix}
1 & 0 \\
e^{-\alpha\pi i} & 1
\end{pmatrix}, &  z\in \texttt{III},
                             \end{array}
                           \right.
\end{equation}
where  $I_\alpha(z)$ and $K_\alpha(z)$ denote the  modified Bessel functions (cf. \cite[Chapter 10]{DLMF}), the principle branch is taken for $z^{1/2}$ and the regions $\texttt{I-III}$ are illustrated in Fig. \ref{fig:jumps-Phi-B}. By \cite{KMVV},  we have that $\Phi^{(\Bes)}_{\alpha}(z)$ satisfies the RH problem below.

   \begin{figure}[t]
\begin{center}
   \setlength{\unitlength}{1truemm}
   \begin{picture}(80,70)(-5,2)
       \put(40,40){\line(-2,-3){16}}
       \put(40,40){\line(-2,3){16}}
       \put(40,40){\line(-1,0){30}}

       \put(30,55){\thicklines\vector(2,-3){1}}
       \put(30,40){\thicklines\vector(1,0){1}}
       \put(30,25){\thicklines\vector(2,3){1}}

       \put(39,36.3){$0$}
       \put(20,11){$\Sigma_3$}
       \put(20,69){$\Sigma_1$}
       \put(3,40){$\Sigma_2$}

       \put(52,39){$\texttt{I}$}
       \put(25,44){$\texttt{II}$}
       \put(25,34){$\texttt{III}$}

       \put(40,40){\thicklines\circle*{1}}

   \end{picture}
   \caption{The jump contours and regions for the RH problem for $\Phi^{(\Bes)}_{\alpha}$.}
   \label{fig:jumps-Phi-B}
\end{center}
\end{figure}

\subsection*{RH problem for $\Phi_\alpha^{(\Bes)}$}
\begin{description}
\item(a) $\Phi^{(\Bes)}_{\alpha}(z)$ is defined and analytic in $\mathbb{C}\setminus \{\cup^3_{j=1}\Sigma_j\cup\{0\}\}$, where the contours $\Sigma_j$, $j=1,2,3$,  are indicated  in Figure \ref{fig:jumps-Phi-B}.

\item(b) $\Phi^{(\Bes)}_{\alpha}(z)$ satisfies the following jump conditions:
\begin{equation}
 \Phi^{(\Bes)}_{\alpha,+}(z)=\Phi^{(\Bes)}_{\alpha,-}(z)
 \left\{
 \begin{array}{ll}
   \begin{pmatrix}
                                1 & 0\\
                               e^{\alpha \pi i} & 1
                                \end{pmatrix},  &  z \in \Sigma_1, \\
   \begin{pmatrix}
                                0 & 1\\
                               -1 & 0
                                \end{pmatrix},  &  z \in \Sigma_2, \\
   \begin{pmatrix}
                                 1 & 0 \\
                                 e^{-\alpha\pi i} & 1 \\
                                \end{pmatrix}, &   z \in \Sigma_3.
 \end{array}  \right .  \end{equation}

\item(c) $\Phi^{(\Bes)}_{\alpha}(z)$ satisfies the following asymptotic behavior at infinity:
\begin{multline}\label{eq:Besl-infty}
 \Phi^{(\Bes)}_{\alpha}(z)=
 ( \pi^2 z )^{-\frac{1}{4} \sigma_3} \frac{I + i \sigma_1}{\sqrt{2}}
 \\
 \times  \left( I + \frac{1}{8\sqrt{z}}  \begin{pmatrix}
   -1-4\alpha^2 & -2i \\
   -2i & 1+4\alpha^2
 \end{pmatrix} +  O\left(\frac{1}{z}\right)
 \right)e^{\sqrt{z}\sigma_3},\quad z\to \infty.
   \end{multline}

\item(d) $\Phi^{(\Bes)}_{\alpha}(z)$ satisfies the  following asymptotic behaviors near the origin:
\newline
    If $\alpha<0$,
    \begin{equation}
\Phi^{(\Bes)}_{\alpha}(z)=
O \begin{pmatrix}
|z|^{\alpha/2} & |z|^{\alpha/2}
\\
|z|^{\alpha/2} & |z|^{\alpha/2}
\end{pmatrix}, \qquad \textrm{as $z \to 0$}.
\end{equation}
If $\alpha=0$,
    \begin{equation}
\Phi^{(\Bes)}_{\alpha}(z)=
O \begin{pmatrix}
\ln|z| & \ln|z|
\\
\ln|z| & \ln|z|
\end{pmatrix}, \qquad \textrm{as $z \to 0$}.
\end{equation}
If $\alpha>0$,
     \begin{equation}
\Phi^{(\Bes)}_{\alpha}(z)= \begin{cases}
  O\begin{pmatrix}
|z|^{\alpha/2} & |z|^{-\alpha/2}
\\
|z|^{\alpha/2} & |z|^{-\alpha/2}
\end{pmatrix}, & \textrm{as $z \to 0$ and $z\in \texttt{I}$, } \\[.4cm]
  O\begin{pmatrix}
|z|^{-\alpha/2} & |z|^{-\alpha/2}
\\
|z|^{-\alpha/2} & |z|^{-\alpha/2}
\end{pmatrix}, & \textrm{as $z \to 0$ and $z\in \texttt{II}\cup \texttt{III} $. }
  \end{cases}
\end{equation}
\end{description}

\subsection{Confluent hypergeometric parametrix}\label{sec:CHF}

\begin{figure}[t]
\begin{center}
   \setlength{\unitlength}{1truemm}
   \begin{picture}(100,70)(-5,2)
       \put(40,40){\line(-2,-3){18}}
       \put(40,40){\line(-2,3){18}}
       \put(40,40){\line(-1,0){30}}
       \put(40,40){\line(1,0){30}}
      \put(40,40){\line(0,1){30}}
 \put(40,40){\line(0,-1){30}}
  \put(40,40){\line(2,-3){18}}
    \put(40,40){\line(2,3){18}}

       \put(30,55){\thicklines\vector(2,-3){1}}
       \put(30,40){\thicklines\vector(1,0){1}}
       \put(50,40){\thicklines\vector(1,0){1}}
       \put(30,25){\thicklines\vector(2,3){1}}
      \put(50,25){\thicklines\vector(2,-3){1}}
       \put(50,55){\thicklines\vector(2,3){1}}

     \put(40,58){\thicklines\vector(0,1){1}}
      \put(40,22){\thicklines\vector(0,-1){1}}

       \put(42,36.9){$0$}
         \put(72,40){$\Sigma_1$}
           \put(60,69){$\Sigma_2$}
             \put(39,73){$\Sigma_3$}
             \put(20,69){$\Sigma_4$}
              \put(3,40){$\Sigma_5$}
        \put(18,10){$\Sigma_6$}
               \put(39,5){$\Sigma_7$}
          \put(60,11){$\Sigma_8$}

          \put(55,48){$\texttt{I}$}
           \put(44,60){$\texttt{II}$}
              \put(31,60){$\texttt{III}$}
            \put(22,48){$\texttt{IV}$}
        \put(24,31){$\texttt{V}$}
        \put(31,19){$\texttt{VI}$}
         \put(44,19){$\texttt{VII}$}
 \put(52,31){$\texttt{VIII}$}

       \put(40,40){\thicklines\circle*{1}}
\end{picture}
   \caption{The jump contours  and  regions for the RH problem for $\Phi_{\alpha,\beta}^{(\CHF)}$.}
   \label{fig:jumps-Phi-C}
\end{center}
\end{figure}

We start with the  following RH problem.
 \subsection*{RH problem for $\Phi_{\alpha,\beta}^{(\CHF)}$}
 \begin{description}
  \item(a)   $\Phi_{\alpha,\beta}^{(\CHF)}(z)$ is analytic in
  $\mathbb{C}\setminus \{\cup^8_{j=1}\Sigma_j\cup\{0\}\}$, where the contours $\Sigma_j$, $j=1,\ldots,8,$ are indicated in Fig. \ref{fig:jumps-Phi-C}.

  \item(b) $\Phi_{\alpha,\beta}^{(\CHF)}(z)$ satisfies the following jump condition:
  \begin{equation}\label{jumps-phi-c}
   \left(\Phi_{\alpha,\beta}^{(\CHF)}\right)_+(z)=\left(\Phi_{\alpha,\beta}^{(\CHF)}\right)_{-}(z) \widehat J_i(z), \quad z \in \Sigma_i,\quad j=1,\ldots,8,
  \end{equation}
  where
  \begin{equation*}
    \widehat J_1(z) = \begin{pmatrix}
    0 &   e^{-\pi i\beta} \\
    -  e^{\pi i\beta} &  0
    \end{pmatrix}, \quad \widehat J_2(z) = \begin{pmatrix}
    1 & 0 \\
    e^{ \pi i(\beta-2\alpha)} & 1
    \end{pmatrix},
                                                           \quad \widehat J_3(z) = \widehat J_7(z) = e^{\pi i\alpha\sigma_3},
  \end{equation*}
  \begin{equation*}
    \widehat J_4(z) = \begin{pmatrix}
    1 & 0 \\
    e^{ -\pi i(\beta-2\alpha)} & 1
    \end{pmatrix}, \quad
    \widehat J_5(z) = \begin{pmatrix}
    0 &   e^{\pi i\beta} \\
     -  e^{-\pi i\beta} &  0
     \end{pmatrix}, \quad
     \widehat J_6(z) = \begin{pmatrix}
     1 & 0 \\
     e^{- \pi i(\beta+2\alpha)} & 1
     \end{pmatrix},
  \end{equation*}
  and $$\widehat J_8(z) = \begin{pmatrix}
   1 & 0 \\
   e^{\pi i(\beta+2\alpha)} & 1
   \end{pmatrix}.
   $$

  \item(c) $\Phi_{\alpha,\beta}^{(\CHF)}(z)$ satisfies the following asymptotic behavior at infinity:
  \begin{multline}\label{B-tiled at infinity}
 \Phi_{\alpha,\beta}^{(\CHF)}(z) =  (I + O(1/z)) z^{-\beta \sigma_3}e^{-iz\sigma_3}
\\
 \times
  \left\{\begin{array}{ll}
                         I, & ~0< \arg z <\frac{ \pi}{2}, \\
                         e^{\pi i \alpha\sigma_3}, &~ \frac{\pi}{2}< \arg z<\pi,
                         \\
                        \begin{pmatrix}
                                                             0 &   -e^{\pi i(\alpha+\beta)} \\
                                                            e^{-\pi i(\alpha+\beta)} &  0
                        \end{pmatrix}, &~ \pi< \arg z<\frac{3\pi}{2},
                        \\
                        \begin{pmatrix}
                        0 &   -e^{-\pi i\beta} \\
                        e^{\pi i\beta} &  0
                        \end{pmatrix}, & -\frac{\pi}{2}<\arg z<0.
 \end{array}\right.
\end{multline}
\item(d) $\Phi_{\alpha,\beta}^{(\CHF)}(z)$ satisfies the following asymptotic behavior near the origin. As $z \to 0$,
$$\Phi_{\alpha,\beta}^{(\CHF)}(z)=\begin{pmatrix}
                         O(|z|^{\alpha}) & O(|z|^{-|\alpha|}) \\
                         O(|z|^{\alpha}) & O(|z|^{-|\alpha|}) \end{pmatrix}, \quad \alpha\neq 0,$$
and $$\Phi_{\alpha,\beta}^{(\CHF)}(z)=\begin{pmatrix}
                         O(1) & O(\ln |z|) \\
                         O(1) & O(\ln |z|)
\end{pmatrix}, \quad \alpha=0.$$
\end{description}

From \cite{DIK2011,ikj2008}, it follows that the above RH problem can be solved explicitly in the following way. For $z\in \texttt{I}$,
\begin{align}\label{Phi-C-solution}
\Phi_{\alpha,\beta}^{(\CHF)}(z)&=C_0\left(\begin{array}{ll}
(2e^{\frac{\pi i}{2}}z)^\alpha\psi(\alpha+\beta,1+2\alpha,2e^{\frac{\pi i}{2}}z)e^{i\pi(\alpha+2\beta)}e^{-iz}\\
-\frac{\Gamma(1+\alpha+\beta)}{\Gamma(\alpha-\beta)}(2e^{\frac{\pi i}{2}}z)^{-\alpha}\psi(1-\alpha+\beta,1-2\alpha,2e^{\frac{\pi i}{2}}z)e^{i\pi(-3\alpha+\beta)}e^{-iz}\end{array}\right.\nonumber\\
&\quad\quad\left.\begin{array}{rr}
-\frac{\Gamma(1+\alpha-\beta)}{\Gamma(\alpha+\beta)}(2e^{\frac{\pi i}{2}}z)^\alpha\psi(1+\alpha-\beta,1+2\alpha,2e^{-\frac{\pi i}{2}}\zeta)e^{i\pi(\alpha+\beta)}e^{iz}\\
(2e^{\frac{\pi i}{2}}z)^{-\alpha}\psi(-\alpha-\beta,1-2\alpha,2e^{-\frac{\pi i}{2}}z)e^{-i\pi\alpha}e^{iz}\end{array}\right),
\end{align}
where  the confluent hypergeometric function $\psi(a,b;z)$   is the unique solution to the Kummer's equation
$$z\frac{d^2y}{dz^2}+(b-z)\frac{dy}{dz}-ay=0$$
satisfying the boundary condition $\psi(a,b,z)\sim  z^{-a}$ as $z\to \infty$ and $-\frac{3\pi }{2} < \arg z < \frac{3\pi}{2}$;
see \cite[Chapter 13]{DLMF}.  The branches of the multi-valued functions are chosen such that $-\frac{\pi}{2}<\arg z<\frac{3\pi}{2}$ and
$$
 C_0=2^{\beta \sigma_3}e^{\beta\pi i\sigma_3/2}
\begin{pmatrix}
e^{-i\pi(\alpha+2\beta)} & 0
\\
0 & e^{i\pi(2\alpha+\beta)}
\end{pmatrix}
$$
is a constant matrix. The explicit formula of $\Phi_{\alpha,\beta}^{(\CHF)}$ in the other sectors is then determined by using the jump condition \eqref{jumps-phi-c}.
\end{appendices}

\section*{Acknowledgements}
Dan Dai was was partially supported by grants from the City University of Hong Kong (Project No. 7004864, 7005032), and grants from the Research Grants Council of the Hong Kong Special Administrative Region, China (Project No. CityU 11300115, CityU 11303016). Shuai-Xia Xu was partially supported by National Natural Science Foundation of China under grant numbers 11971492, 11571376 and 11201493. Lun Zhang was partially supported by National Natural Science Foundation of China under grant number 11822104, by The Program for Professor of Special Appointment (Eastern Scholar) at Shanghai Institutions of Higher Learning, and by Grant EZH1411513 from Fudan University.

\end{document}